\documentclass[12pt]{article}

\setlength{\parindent}{1 truecm}
\usepackage{amsmath}
\usepackage{amsfonts}
\usepackage{amssymb}
\usepackage{latexsym}
\usepackage[dvips]{graphics}
\usepackage[utf8]{inputenc}
\usepackage{mathrsfs}
\usepackage{epsfig}
\usepackage[footnotesize, SL]{subfigure}
\usepackage[small,bf]{caption}	
\usepackage{xy}
\numberwithin{equation}{section}
\newtheorem{proposition}{Proposition}[section]

\newtheorem{theorem}[proposition]{Theorem}

\newtheorem{lemma}[proposition]{Lemma}

\renewcommand{\to}{\longrightarrow}
\renewcommand\P{\mathbb{P}}

\newcommand{\B}{\mathcal{B}}
\renewcommand{\L}{\mathcal{L}}	
\newcommand{\Y}{\mathcal{Y}}
\newcommand{\N}{\mathbb{N}}

\newcommand{\Z}{\mathcal{Z}}

\newcommand{\Np}{N_+}
\newcommand{\Nm}{N_-}
\newcommand{\phiinv}{\phi^{-1}}

\newenvironment{proof}[1][Proof]{\begin{trivlist}
\item[\hskip \labelsep {\bfseries #1}]}{\end{trivlist}}
\newcommand{\qed}{\begin{flushright} $\square$
\end{flushright}}

\begin{document}
 
 
\topmargin 0pt
\oddsidemargin 5mm
\headheight 0pt
\topskip 0mm

\addtolength{\baselineskip}{0.3\baselineskip}
     
\pagestyle{empty}

\hfill

     \vspace{0.8cm}
     
     \begin{center}

     {\Large \bf Condensation in nongeneric trees}

     \medskip
     \vspace{0.8 truecm}
     
     {\large \bf \today}
     
     \vspace{0.8truecm}

{\bf Thordur Jonsson$^{1}$ and Sigurdur Örn Stefánsson$^{1,2}$}

\vspace{0.4 truecm}

{$^1$ The Science Institute, University of Iceland

Dunhaga 3, 107 Reykjavik,

Iceland

}

\vspace{0.2 truecm}

{$^2$ NORDITA,

Roslagstullsbacken 23, SE-106 91 Stockholm,

Sweden

}

\vspace{0.4 truecm}
\texttt{
thjons@raunvis.hi.is \\
sigurdurorn@raunvis.hi.is
}
\vspace{.3 truecm}

\end{center}
  
\noindent {\bf Abstract.}  We study nongeneric planar trees and prove the existence
of a Gibbs measure on infinite trees obtained as a weak limit of the
finite volume measures.  It is shown that in the infinite volume limit
there arises exactly one vertex of infinite degree and the rest of the
tree is distributed like a subcritical Galton-Watson tree with mean
offspring probability $m<1$.   We calculate the rate of
divergence of the degree of the highest order vertex of finite trees
in the thermodynamic limit and show it goes like $(1-m)N$ where $N$ is
the size of the tree.   These trees have infinite spectral dimension with
probability one but the
spectral dimension calculated from the ensemble average of the
generating function for return probabilities is given by
$2\beta -2$ if the weight $w_n$ of a vertex of degree $n$ is
asymptotic to $n^{-\beta}$.

\newpage
\pagestyle{plain}
    
\section {Introduction}

In the recent past the interest of scientists in various classes of
random graphs and networks has increased dramatically due to the 
many applications of these mathematical structures to describe objects
and relationships in subjects ranging from pure mathematics and
computer science to physics, chemistry and biology.  An important
class of graphs in this context are tree graphs, both because many
naturally appearing random graphs are trees and also because trees are
analytically more tractable than general
graphs and one expects that
some features of general random graphs can be understood by looking
first at trees.

In this paper we study an equilibrium statistical mechanical model of
planar trees.  The parameters of the model are given by a sequence of non--negative numbers $(w_i)_{i\geq 1}$, referred to as branching weights. To a finite tree $\tau$ we assign a Boltzmann weight
$$
W(\tau )=\prod_{i\in V(\tau)} w_{\sigma_i}
$$
where $V(\tau)$ is the vertex set of $\tau$ and $\sigma_i$ is the degree of the vertex $i$. The model is local, in the sense that the energy of a tree is given by the sum over the energies of individual vertices. In~\cite{bialas:1996ya} the critical
exponents of this model were calculated and its phase structure was
described. It was argued that the
model exhibits two phases in the thermodynamic limit: a fluid
(elongated, generic) phase where the
trees are of a large diameter and have vertices of finite degree and a
condensed (crumpled) phase
where the trees are short and bushy with exactly one vertex of
infinite degree.

A complete characterization of the fluid phase, referred to as generic
trees, was given
in \cite{durhuus:2003,durhuus:2007} where it was shown that the Gibbs
measures
converge to a measure concentrated on the set of trees with exactly
one non--backtracking path from the root to infinity having critical
Galton-Watson outgrowths. In~\cite{durhuus:2007} it was
furthermore
proved that the trees have
Hausdorff dimension $d_H = 2$ and spectral dimension $d_s = 4/3$ with
respect to the infinite volume measure. The purpose of this paper is
to establish
analogous results for the condensed phase. Preliminary results in this
direction
were obtained in~\cite{sigurdur:2007}.

One of the motivations for the study of the tree model is that a
similar phase structure
is seen for more
general class of graphs in models of simplicial
gravity~\cite{agishtein:1992,ambjorn:1992}. In
these models the elongated phase seems to be effectively described by
trees~\cite{ambjorn:1995} and it has been
established by numerical methods that in the condensed phase a single
large simplex appears whose size
increases linearly with the graph
volume~\cite{catterall:1996,hotta:1996}.
In~\cite{bialas:1997:495} it was proposed that the same mechanism is
behind the phase
transition in the different models and
the so called constrained mean field model was introduced in order to
capture this feature. This idea was developed in a series of
papers~\cite{bialas:2000,bialas:1998:416,bialas:1997qs,bialas:1998:63,
bialas:1999}
where the model was studied under the name ``balls in
boxes'' or ``backgammon model''. The model
consists of placing
$N$ balls into $M$ boxes and assigning a
weight to each box depending only on the number of balls it contains.
In~\cite{bialas:1997qs} the critical exponents were
calculated and the two phases characterized. The distribution of the
box occupancy number was derived
and it was argued that in the condensed phase exactly one box contains
a large number of balls which
increases linearly with the system size.

A model equivelant to the ``balls in boxes'' model was
studied in a recent paper~\cite{jonsson:2009}. It is an equilibrium
statistical mechanical model with a local action of the form described
above but the class of trees is
restricted to so called caterpillar graphs. Caterpillars are trees
which have the property that if all
vertices of degree one and the edges containing them are removed, the
resulting graph is linear. The
caterpillar model was solved by proving convergence of the Gibbs
measures to a measure on infinite
graphs and the limiting measure was completely characterized. It was
shown that in the fluid phase the
measure is concentrated on the set of caterpillars of infinite length
and that in the condensed
phase it is
concentrated on the set of caterpillars which are of finite length and
have precisely one vertex of
infinite degree. This was the first rigourous treatment of the
condensed phase in models of the above
type. A model of random combs, equivalent to the caterpillar model was
studied
in~\cite{durhuus:2009} where analagous results were obtained for the
limiting measure.
A closely related phenomenom of condensation also appears in dynamical
systems such as
the zero range process, see e.g.\ \cite{evans:2005}.

In this paper we use techniques similar to those of \cite{jonsson:2009}
with some additional input from probability theory to prove convergence of the Gibbs measures in the condensed phase of
the planar tree model.  In Section 2
we generalize the definition of planar trees to
allow for vertices
of infinite degree and define a metric on the set of planar trees
which has the nice
properties that the metric space is compact and that
the subset of finite trees is dense. 
In Section 3 we recall the definition of generic and nongeneric trees,
define the partition functions of interest and recall the relation to
Galton-Watson processes.
Section 4 is the technical core of the paper.  There we review
some results we need from probability theory and then show that the 
partition function $Z_N$ for nongeneric 
trees of size $N$ has the asymptotic behavior
\begin{equation}
Z_N\sim N^{-\beta}\zeta_0^{-N}
\end{equation}
where $\zeta_0$ is a constant and $\beta$ is the exponent of 
the power decay of the
weight $w_n$ of vertices of degree $n$, i.e.\ $w_n\sim n^{-\beta}$.
In Section 5 we establish the
existence of the infinite volume Gibbs measure and prove that in the condensed
phase it is concentrated on the set of trees of finite diameter with
precisely one vertex of infinite degree and that the rest of the tree is distributed as a
subcritical Galton-Watson process with mean offspring probability $m<1$. We prove that
for finite trees the degree of the large vertex grows linearly with the system size $N$
as $(1-m)N$ with high probability, confirming the result stated
in~\cite{burda:2001dx}. We conclude in Section 6 
by calculating the annealed spectral dimension of the trees in the
condensed phase.
In~\cite{correia:1997gf} it was claimed, on the basis of scaling
arguments, that the
spectral dimension is $d_s = 2$. We prove, however, that if the spectral
dimension exists
it is given by $d_s=2\beta -2$.
In fact, it takes the same value as the spectral dimension of the
condensed phase in the
caterpillar model~\cite{jonsson:2009}.  

\section {Rooted planar trees}
In this section we recall the definition of rooted planar trees and
define a convenient 
metric on the set of all such trees.  We establish some
elementary properties of the trees as a metric space which will be
needed in the construction of a measure on infinite trees.  The combinatorial 
definition of planar trees below is in the spirit 
of~\cite{durhuus:2003} with the change that we allow for 
vertices of infinite degree. 
We require this extension since vertices of infinite degree appear in the
thermodynamic limit in the nongeneric 
phase of the random tree model in Section \ref{s:nongeneric}.  

The 
planarity condition means that links incident on a vertex are
cyclically ordered.  When the 
degree of a vertex is infinite there are nontrivial
different possibilities of ordering 
the links and therefore the planarity condition must by carefully
defined.  We allow vertices of at 
most countably infinite degree and the edges are given the simplest
possible ordering, i.e.\ if we look at the set of edges leading away from
the root at a given vertex, then the smallest edge is the leftmost one
which is required to exist and the remaining edges are ordered as $\mathbb{N}$.  Note that we could just as
well choose the rightmost edge as the smallest and order the remaining
ones counterclockwise as $\mathbb{N}$. The root will always be taken to have order one for convenience but
this is not an essential assumption.

Let $(D_R)_{R\geq 0}$ be a sequence of pairwise disjoint, countable sets with the properties that if 
 $D_R = \emptyset$ then $D_r = \emptyset$ for all $r\geq R$. The sets $D_0$ and $D_1$ are defined to have only a single element.  The set
 $D_R$ will eventually denote the set of vertices at graph distance $R$ from the
 root.  We will denote the 
 number of elements in a set $A$ by $|A|$. To introduce the edges and the planarity condition, we define orderings on each of the sets $D_R$ and order preserving maps 
\begin {equation}
\phi_R : D_R \longrightarrow D_{R-1}, \quad \quad R \geq 1
\end {equation}
which satisfy the following: For each vertex $v \in D_{R-1}$ such that \\ $|\phiinv_R(v)| = \infty$, there exists an order preserving isomorphism 
\begin {equation}
\psi_v : \mathbb{N} \longrightarrow \phiinv_R(v).
\end {equation}
In this notation $\phi_R (w)$ is the parent of the vertex $w$ and
$|\phiinv_R(v)| +1$ is the degree of the vertex $v$, denoted $\sigma_v$.
One can show by induction on $R$ that such orderings on the sets 
$D_R$ can be defined 
and that they are well orderings, i.e.\ each subset of $D_R$ has a smallest
element.  It is not hard to check that given the ordered sets $D_R, 
R\geq 0$, and the order preserving maps $\phi_R, R\geq 1$ 
with the above properties 
the maps $\psi_v$ are unique.  For a vertex $v$ of a finite degree we
can also define the mappings $\psi_v$ and they are trivial. 

Let $\tilde{\Gamma}$ be the set of all pairs of 
sequences $\{(D_0,D_1,D_2,\ldots),(\phi_1,\phi_2,\ldots)\}$ which satisfy the 
above conditions.
We define an equivalence relation $\sim$ on $\tilde{\Gamma}$ by 
\begin{equation}
\{(D_0,D_1,\ldots),(\phi_1,\phi_2,\ldots)\} \sim 
\{(D'_0,D'_1,\ldots),
(\phi'_1,\phi'_2,\ldots)\}
\end{equation}
if and only if for all $R\geq 1$ there exist order 
isomorphisms $\chi_R : D_R \longrightarrow D'_R$ 
such that $\phi'_R = \chi_{R-1} 
\circ \phi_R \circ \chi_R^{-1}$. Note that since the sets $D_R$ are well ordered for all $R\geq 1$ 
the order isomorphisms $\chi_R$  are unique. Define $\Gamma = \tilde{\Gamma}/\sim$. 
If $\tau \in \tilde{\Gamma}$ we denote the equivalence class of $\tau$ by 
$[\tau]$ and call it a rooted planar tree.  
As a graph, the tree has a vertex set 
\begin{equation}
V = \bigcup_{R=0}^\infty D_R
\end{equation}
and an edge set
\begin{equation}
E = \{(v,\phi_R(v)) ~|~ v \in D_R, R\geq 1\}
\end{equation}
which are independent of the representative $\{(D_0,D_1,\ldots),(\phi_1,
\phi_2,\ldots)\}$ up to graph isomorphisms. The 
single element in $D_0$ is called 
the root and denoted by $r$. 
We denote the set of all rooted planar trees on $N$ 
edges by $\Gamma_N$ and the set of finite rooted planar trees by 
$\Gamma' = \bigcup_{N=1}^\infty \Gamma_N$.

In the following, all properties of trees $[\tau] \in \Gamma$ that we are 
interested in are independent of representatives and we write $\tau$ instead 
of $[\tau]$.  
We then write $D_R(\tau)$, $\phi_R(\cdot,\tau)$, 
$\psi_v(\cdot,\tau)$, $\sigma_v(\tau)$ etc.\ when we need the more detailed 
information on $\tau$. If it is clear which tree we are working with we skip 
the argument $\tau$. When we draw the trees in the plane we use the convention that $\psi_v(k)$ is the 
$k$-th vertex clockwise from the nearest neighbour of $v$ closest to
the root, see Figure \ref{f:rodun}. 
\begin{figure} [!t]
\centerline{\scalebox{0.8}{\includegraphics{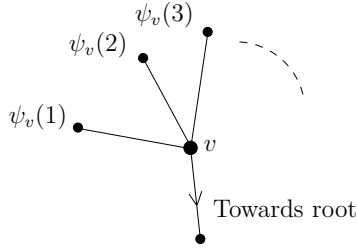}}}
\caption{The ordering of $\phiinv_R(v)$.} \label{f:rodun}
\end{figure}

For a tree $\tau\in\Gamma$ we denote its height, i.e.~the 
maximal graph distance 
of its vertices from the root, by $h(\tau)$. For a pair of vertices
$v$ and $w$ we denote 
the unique shortest path from $v$ to $w$ by $(v,w)$.  The ball of 
radius $R$, $B_R(\tau)$ is defined as the subtree of $\tau$ generated by 
$D_ 0(\tau),D_1(\tau)\ldots D_R(\tau)$.

We define the {\it left ball} of radius $R$, $L_R(\tau)$, 
as the subtree of $B_R(\tau)$ generated by subsets 
$E_S \subseteq D_S(B_R(\tau))$, $S=0,...R$, such that 
$E_0 = D_0(B_R(\tau))$, $E_1=D_1(B_R(\tau ))$ and
\begin {equation}
E_{S} = \{\psi_v(i) ~|~ v \in E_{S-1}, i=1,2,\ldots,\min\{R,\sigma_v\}-1\},
\end {equation}
see Fig. \ref{f:def}.  We denote the number of edges in a tree $\tau$ 
by $|\tau|$. It is easy to check that for all $\tau \in \Gamma$
\begin {equation}  \label{Lsize}
 |L_R(\tau)| \leq {\frac {(R-1)^{R}-1}{R-2}},
\end {equation}
whereas the number of elements in $B_R(\tau )$ can be infinite. 
\begin{figure} [!t]
\centerline{\scalebox{0.85}{\includegraphics{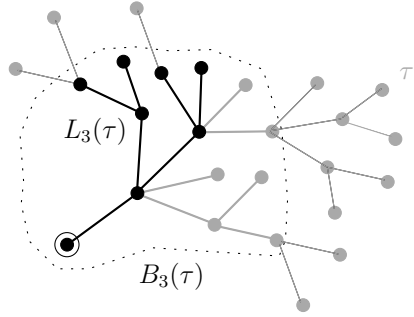}}}
\caption{An example of the subgraphs $B_R(\tau)$ and $L_R(\tau)$.} \label{f:def}
\end{figure}
We define a metric $d$ on $\Gamma$ by
\begin {equation}
d(\tau,\tau') = \min\left\{\left.\frac{1}{R} ~\right|~ L_R(\tau) = 
L_R(\tau'), ~R \in \N\right\}, \quad \quad \tau,\tau'\in\Gamma.
\end {equation}
It is elementary to check that this is in fact a metric. Note that if we allow any ordering on the infinite sets $|\phiinv_R(v)|$, but still insist that they have a smallest element, then this ordering is in general not a a well ordering and $d$ is only a pseudometric.

Denote the open ball in $\Gamma$ centered at $\tau_0$ and with radius $r$ by
\begin {equation}
 \B_r(\tau_0) = \{\tau \in \Gamma ~|~ d(\tau_0,\tau) < r\}.
\end {equation}
\begin {proposition} \label{p:openclosed}
 For $r>0$  and $\tau_0 \in \Gamma$, the ball $\B_r(\tau_0)$ is both open and 
 closed and if $\tau_1 \in \B_r(\tau_0)$ then $\B_r(\tau_1) = \B_r(\tau_0)$.
\end {proposition}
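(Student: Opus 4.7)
The plan is to exploit the discrete nature of the metric $d$ by identifying every open ball with the set of trees whose left ball of some fixed finite radius agrees with that of the centre. Once that identification is in hand, both the clopen property and the ``ultrametric'' identity $\B_r(\tau_1)=\B_r(\tau_0)$ will drop out at once.

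The key combinatorial input is the following monotonicity: if $L_R(\tau)=L_R(\tau')$ then $L_S(\tau)=L_S(\tau')$ for every $1\le S\le R$. I would prove this by showing that $L_{R-1}(\tau)$ is already determined by $L_R(\tau)$, and then iterating. The recipe is to delete all vertices at graph distance $R$ from the root and, at each vertex $v$ of depth less than $R-1$, delete the rightmost child exactly when $v$ carries $R-1$ children in $L_R(\tau)$. Indeed, a vertex $v$ with fewer than $R-1$ children in $L_R(\tau)$ must satisfy $\sigma_v<R$, so $\min(R,\sigma_v)-1=\min(R-1,\sigma_v)-1$ and no child is lost in passing to $L_{R-1}$; a vertex with exactly $R-1$ children in $L_R(\tau)$ must instead have $\sigma_v\ge R$, and hence loses precisely one child. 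Consequently, the set $\{R\in\N:L_R(\tau)=L_R(\tau')\}$ is either all of $\N$ (which holds if and only if $\tau=\tau'$, since any discrepancy sits in a finite neighbourhood of the root and is therefore caught by $L_R$ for large enough $R$) or an initial segment $\{1,\ldots,R^\ast\}$, and in either case $d(\tau,\tau')=1/R^\ast$ with the convention $1/\infty=0$.

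Now fix $r>0$ and let $R_0=\lfloor 1/r\rfloor+1$, the smallest integer strictly larger than $1/r$. Then $d(\tau_0,\tau)<r$ is equivalent to $R^\ast\ge R_0$, and by monotonicity this amounts to $L_{R_0}(\tau)=L_{R_0}(\tau_0)$. Hence
\[
\B_r(\tau_0)=\{\tau\in\Gamma:L_{R_0}(\tau)=L_{R_0}(\tau_0)\}.
\]
Openness is built into the definition of $\B_r$. For closedness, I would observe that if $\tau\notin\B_r(\tau_0)$ then every $\tau'$ with $d(\tau,\tau')<1/R_0$ satisfies $L_{R_0}(\tau')=L_{R_0}(\tau)\neq L_{R_0}(\tau_0)$, so $\B_{1/R_0}(\tau)$ lies entirely in the complement; equivalently, by (\ref{Lsize}) the map $\tau\mapsto L_{R_0}(\tau)$ takes only finitely many values and each of its fibres is clopen. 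Finally, if $\tau_1\in\B_r(\tau_0)$ then $L_{R_0}(\tau_1)=L_{R_0}(\tau_0)$, and the displayed characterization gives $\B_r(\tau_1)=\B_r(\tau_0)$ verbatim.

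The main obstacle is the combinatorial step that $L_{R-1}(\tau)$ is reconstructible from $L_R(\tau)$ alone; everything else is a formal consequence of the observation that ``having the same left ball of radius $R_0$'' is an equivalence relation on $\Gamma$ with finitely many clopen classes.
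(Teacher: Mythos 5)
Your proof is correct and follows essentially the same route as the paper's: both identify $\B_r(\tau_0)$ with the set of trees whose left ball of radius $R_0=\lfloor 1/r\rfloor+1$ agrees with that of $\tau_0$, from which clopenness and $\B_r(\tau_1)=\B_r(\tau_0)$ follow formally. The one genuine addition is that you explicitly prove the monotonicity statement that $L_{R-1}(\tau)$ is reconstructible from $L_R(\tau)$ (so that agreement of left balls is downward closed in $R$), a fact the paper's proof uses tacitly when it passes from $d(\tau_0,\tau_1)<r$ to $L_R(\tau_1)=L_R(\tau_0)$ for all $R<1/r$.
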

\begin {proof}
It is easy to see that open balls are also closed 
since the possible positive values of $d$ form a discrete set. To prove 
the second statement take a ball $\B_r(\tau_0)$ and a tree $\tau_1 \in \B_r(\tau_0)$. 
First take an element $\tau_2 \in \B_r(\tau_1)$. We know that 
$L_R(\tau_1) = L_R(\tau_0)$ and $L_R(\tau_1) = L_R(\tau_2)$ for all 
$R < 1/r$ so obviously $L_R(\tau_0) = L_R(\tau_2)$ for all $R < 1/r$. Therefore 
\begin {eqnarray}
d(\tau_2,\tau_0) &\leq& \min\left\{\left.\frac{1}{R} ~\right|~ L_R(\tau_2) = 
L_R(\tau_0),~R < 1/r + 1\right\} \nonumber\\
&=& \frac{1}{\lfloor1/r+1\rfloor} < r.
\end {eqnarray}
Therefore $\tau_2 \in \B_r(\tau_0)$ and thus $\B_r(\tau_1) \subseteq \B_r(\tau_0)$. 
With exactly the same argument we see that $\B_r(\tau_0) \subseteq \B_r(\tau_1)$ 
and therefore the equality is established. \qed
\end {proof}

\begin {proposition} \label{p:compact}
The metric space $(\Gamma,d)$ is compact and 
the set $\Gamma'$ of finite trees is a countable dense subset of $\Gamma$.
\end {proposition}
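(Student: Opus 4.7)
The plan is to prove density first, then countability, and finally compactness via sequential compactness with a diagonal argument.

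For density, I note that for each $\tau \in \Gamma$ and each $R \in \N$, the bound \eqref{Lsize} shows that $L_R(\tau)$ has only finitely many edges, so $L_R(\tau) \in \Gamma'$. From the definition of $L_R$ it is immediate that $L_R(L_R(\tau)) = L_R(\tau)$, since taking $\min\{R,\sigma_v\}-1$ children in a tree where every vertex already has at most $\min\{R,\sigma_v\}-1$ children at depths $<R$ leaves the tree unchanged. Hence $d(\tau,L_R(\tau)) \leq 1/R$, so $L_R(\tau) \to \tau$ in $\Gamma$ and $\Gamma'$ is dense. Countability follows since $\Gamma_N$ is finite for every $N$ (the number of rooted planar trees with $N$ edges is bounded by the Catalan number $C_N$), so $\Gamma' = \bigcup_{N\geq 1}\Gamma_N$ is countable.

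For compactness I use a diagonal extraction. Given a sequence $(\tau_n) \subset \Gamma$, observe that by \eqref{Lsize} the left ball $L_R(\tau_n)$ is a finite planar tree with at most $\frac{(R-1)^R-1}{R-2}$ edges in which each vertex has degree at most $R$. The set of such trees (up to the equivalence $\sim$) is finite, so the sequence $(L_R(\tau_n))_n$ takes only finitely many values. By a standard diagonal argument, I can extract a subsequence $(\tau_{n_k})$ and trees $T_R\in\Gamma'$ such that, for every fixed $R$, $L_R(\tau_{n_k}) = T_R$ for all sufficiently large $k$. Moreover, the identity $L_{R'}(L_R(\tau)) = L_{R'}(\tau)$ for $R'\leq R$ (which one checks from the definition, using $\min\{R',\min\{R,\sigma_v\}\} = \min\{R',\sigma_v\}$) shows that the family $(T_R)$ is coherent in the sense that $L_{R'}(T_R) = T_{R'}$ whenever $R'\leq R$.

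The main obstacle is then to assemble a limit tree $\tau \in \Gamma$ with $L_R(\tau) = T_R$ for every $R$. My plan is to build it as a direct limit: I choose compatible representatives for each $T_R$, so that the vertex sets $D_S(T_R)$ and maps $\phi_S$ embed into those of $D_S(T_{R+1})$ as left-initial segments respecting the orderings; then I set $D_S(\tau) = \bigcup_R D_S(T_R)$ with the induced well ordering and define $\phi_S$ by compatibility. Since a vertex $v$ at depth $S$ may accumulate infinitely many children across the $T_R$'s, the resulting $\phiinv_S(v)$ is either finite or order-isomorphic to $\N$ with a smallest element, so the orderings $\psi_v$ required in Section 2 are available. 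By construction $L_R(\tau) = T_R$, hence $d(\tau_{n_k},\tau) \leq 1/R$ for all large $k$, so $\tau_{n_k}\to \tau$ and $\Gamma$ is sequentially compact. Since $\Gamma$ is a metric space, this is equivalent to compactness. \qed
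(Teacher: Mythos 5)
Your proof is correct and follows the same route as the paper: density via $L_R(\tau)\to\tau$, and compactness from the finiteness of $\{L_R(\tau)\mid\tau\in\Gamma\}$ together with a diagonal extraction (the paper simply cites \cite{durhuus:2003} for this last step, whereas you carry out the construction of the limit tree explicitly, including the needed identities $L_R\circ L_R=L_R$ and $L_{R'}\circ L_R=L_{R'}$ for $R'\leq R$). No gaps.
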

\begin {proof}
To prove compactness it is enough to note that by (\ref{Lsize}), for each $R \in \N$, 
the set  $\{L_R(\tau)~|~\tau\in\Gamma\}$ is finite. The result then follows by the same arguments applied to the set of random walks in \cite{durhuus:2003}.

In order to prove the density of $\Gamma '$
we note that the sequence 
$\left(L_n(\tau)\right)_{n\in\N}$ is in $\Gamma'$ 
by (\ref{Lsize})
and clearly
converges to $\tau$.   \qed
\end {proof}


\section {Generic and nongeneric trees} \label {s:themodel}

In this section we define the tree ensemble that we study 
and discuss some of its
elementary properties.
Let $w_n$, $n\geq 1$ be a sequence of non--negative numbers which we call 
{\it branching weights}. For technical convenience we will always take  
\begin {equation} \label {statedassumption}
w_1,w_2 > 0 \quad \quad \text{and} \quad \quad w_n > 0  
\quad \quad \text{for some $n \geq 3$}.
\end {equation}
Let $V(\tau )$ be the set of vertices in $\tau$.
The finite volume partition function is defined as
\begin {equation}
Z_N = \sum_{\tau\in\Gamma_N} ~\prod_{i\in V(\tau)\setminus\{r\}} w_{\sigma_i}
\end {equation}
where $\sigma_i$ is the degree of vertex $i$. We define a probability distribution $\nu_N$ on $\Gamma_N$ by
\begin {equation}
 \nu_N(\tau) = Z_N^{-1} \prod_{i\in V(\tau)\setminus\{r\}} w_{\sigma_i}.
\end {equation}

The weights $w_n$, or alternatively the measures $\nu_N$, define a tree 
ensemble. Note that $\nu_N$ is not affected by a rescaling of the branching 
weights of the form $w_n \rightarrow w_n a b^n$ where $a,b>0$. We introduce 
the generating functions 
\begin {equation}\label{grandcanonical}
\Z(\zeta) = \sum_{N=1}^\infty Z_N \zeta^N
\end {equation}
and
\begin {equation}\label{potential}
g(z) = \sum_{n=0}^\infty w_{n+1}z^n.
\end {equation}
Then we have the standard relation
\begin {equation} \label {eqtree}
\Z(\zeta) = \zeta g(\Z(\zeta))
\end {equation}
which is explained in Fig.~\ref{f:eqtree}.
\begin {figure} [!b] 
\begin {center}
\scalebox{1}{\includegraphics{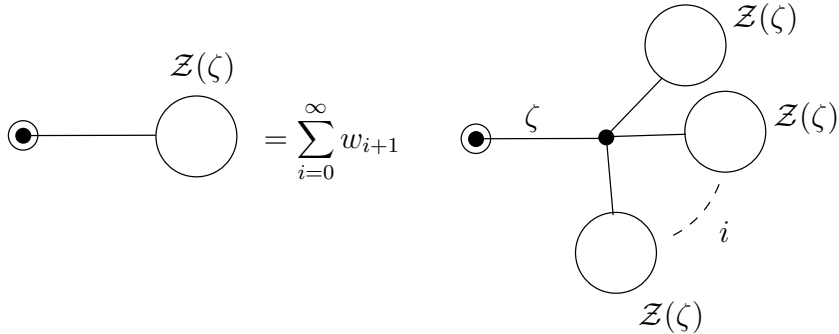}}
\caption{A diagram explaining the recursion (\ref{eqtree}). The root is 
indicated by a circled point.} \label{f:eqtree}
\end {center}
\end {figure}

We denote the radius of convergence of $\Z(\zeta)$ and $g(z)$ by $\zeta_0$ 
and $\rho$, respectively, and define $\Z_0 = \Z(\zeta_0)$. If $\Z_0 <
\rho$, then by definition 
we have a {\it generic ensemble} of trees \cite{durhuus:2007}. 
Otherwise we say that we 
have a {\it nongeneric ensemble}. If $\rho = \infty$ we always have a generic 
ensemble. If $\rho$ is finite we can assume that $\rho=1$ by scaling the 
branching weights $w_n \rightarrow w_n \rho^{n-1}$. 

There is a useful relation between the tree ensemble 
$(\Gamma_N,\nu_N)$ and Galton--Watson (GW) trees (see e.g.~\cite{Harris:1963}). 
Let $p_n$, $n=0,1,2 \ldots$, be the offspring probability distribution
for a GW tree.   Then we link a vertex of order one (the
root) 
to the ancestor of the GW tree and obtain a rooted tree with a root of
degree 1. The GW process gives rise to a probability 
measure on the set of all finite trees
\begin {equation} \label{GWmeasure}
\mu(\tau) = \prod_{i \in V(\tau) \setminus \{r\}} p_{\sigma_i - 1}, 
\quad\quad\quad \text{where} \quad\quad \tau \in \Gamma'.
\end {equation}
Let $m$ be the average number of offsprings in the GW process. 
If $m>1$ the process is said to be supercritical and the probability that it  
survives forever is positive. If $m = 1$ the process is said to be critical 
and it dies out
with probability one. If $m < 1$ the process is said 
to be subcritical and it dies out exponentially fast \cite{Harris:1963}.

The probability distribution $\nu_N$ can 
be obtained from a GW process 
with offspring probabilities 
\begin {equation} \label{GWprob}
p_n = \zeta_0 w_{n+1}\Z_0^{n-1}
\end {equation}
by conditioning the trees to be of size $N$
\begin {equation}
 \nu_N(\tau) = \frac{\mu(\tau)}{\mu(\Gamma_N)}.
\end {equation}
The mean offspring probability is then
\begin {equation} \label{meanoffspring}
 m = \Z_0 \frac{g'(\Z_0)}{g(\Z_0)}.
\end {equation}
Generic trees always correspond to critical GW processes \cite{durhuus:2007}
and nongeneric trees can correspond to either critical or 
subcritical GW processes. In all cases $m \leq 1$. We now analyze this in more detail. 

Fix a set of branching weights $w_n$ 
which give $\rho=1$ but let $w_1$ be a free parameter 
of the model
which at this stage can be either generic or nongeneric.
Define   
\begin{equation}
h(\Z) = \frac{g(\Z)}{\Z}.
\end{equation}
From (\ref{eqtree}) we see that $h(\Z) = 1/\zeta(\Z)$ for $\Z \leq \Z_0$. 
Differentiating $h$ we get
\begin{equation}
h'(\Z) = \frac{g(\Z)}{\Z^2}\left[\Z\frac{g'(\Z)}{g(\Z)} - 1\right]
\end{equation}
and again
\begin{equation}
h''(\Z) = \frac{g''(\Z)}{\Z}-\frac{2}{\Z}h'(\Z).
\end{equation} 
The genericity condition means that $h$ has a quadratic minimum at 
$\Z = \Z_0 < 1$, see Fig. \ref{phasediagram1}.  It follows that 
$m = 1$, showing that the generic phase corresponds to critical GW 
trees. Furthermore, given $\Z_0<1$ and the branching weights 
$w_n$, $n\geq 2$, we have $w_1 = 
\sum_{n=2}^{\infty}(n-2)w_n \Z_0^{n-1}$. We can therefore clearly 
make any model with $\rho = 1$ generic by choosing
\begin{equation}
 w_1< \sum_{n=2}^{\infty}(n-2)w_n \equiv w_c.
\end{equation} 
Here $w_c$ is a critical value for $w_1$ which depends 
on $w_n$ for $n\geq 3$.  We note that if $w_c = \infty$, 
i.e.~if $g'(z)$ diverges as $z\rightarrow 1$, we always have a generic ensemble.
\newline
\begin {figure} [!t] 
\begin {center}
\scalebox{0.7}{\includegraphics{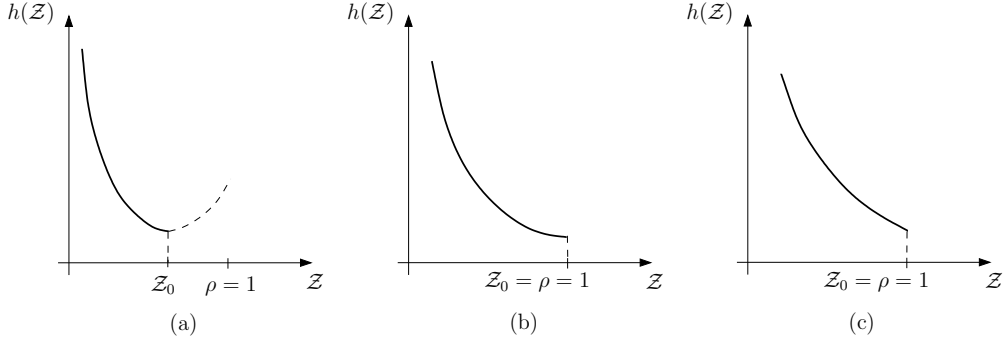}}
\caption{The three possible scenarios. a) Generic, critical, $w_1 < w_c$. \newline b) nongeneric, critical, $w_1 = w_c$. c) nongeneric, subcritical, $w_1 > w_c$.} \label{phasediagram1}
\end {center}
\end {figure}

The next possible scenario is that $h$ has a quadratic minimum at $\Z = \Z_0 = 1$. 
This happens when $w_1 = w_c$ or in other words when $m = 1$. 
This is a nongeneric ensemble which still corresponds to critical GW trees. 

Finally, by choosing $w_1 > w_c$, $h$ has no quadratic minimum and 
$m < 1$. In this case the trees are nongeneric 
and correspond to subcritical GW trees as we will explore
in detail in the next section.

\section{Subcritical nongeneric trees} \label{s:nongeneric}

In this section we examine the subcritical nongeneric phase and determine 
the asymptotic behaviour of $Z_N$.  This will allow us to construct
the infinite volume Gibbs measure in the next section.  

We fix a number $\beta\geq 0$ and for $n\geq 2$ we 
choose the branching weights such that 
\begin {equation} \label{bw}
w_n = n^{-\beta}(1+o(1)),\quad n\geq 2
\end {equation}
and let $w_1$ be a free parameter. In this case $\rho = 1$. If $\beta \leq 2$ 
then $g'(1) = \infty$ and therefore we have the generic phase for all values 
of $w_1$. If $\beta > 2$ we can have any one of the three cases discussed in the 
previous section depending on the value of $w_1$, see Fig. \ref{phasediagram2}.
\begin {figure} [!t] 
\begin {center}
\scalebox{0.9}{\includegraphics{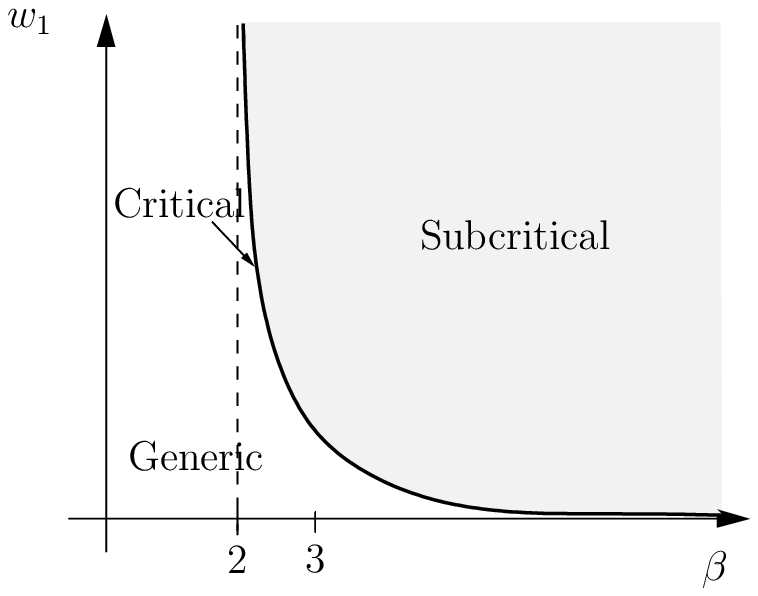}}
\caption{A diagram showing the possible phases of the trees. The critical 
line is determined by the equation $w_1 = w_c$.} \label{phasediagram2}
\end {center}
\end {figure}
Now choose $\beta >2$ and $w_1 > w_c$ such that
\begin {equation} \label{ngcondition}
m = \frac{g'(1)}{g(1)} < 1
\end {equation}
so we are in the nongeneric, subcritical phase. Then $\Z_0 = \rho = 1$ and we 
see from (\ref{eqtree}) that

\begin {equation}
 \zeta_0 = \frac{1}{g(1)}.
\end {equation}
The main result of this section is the following.

\begin {theorem} \label{th1}
If the branching weights (\ref{bw}) satisfy (\ref{ngcondition}) then the
partition function has the asymptotic behaviour
\begin {equation}
 Z_{N} = (1-m)^{-\beta} N^{-\beta} \zeta_0^{1-N}\left(1+o(1)\right).
\end {equation}
\end {theorem}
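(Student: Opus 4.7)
The plan is to translate Theorem \ref{th1} into a statement about the total progeny of the Galton--Watson process associated with the ensemble by (\ref{GWprob}), and then to invoke a local ``single big jump'' estimate for sums of i.i.d.\ heavy-tailed variables. In the subcritical nongeneric phase $\Z_0=1$, so (\ref{GWprob}) reduces to $p_n=\zeta_0 w_{n+1}$. A tree $\tau\in\Gamma_N$ has $N+1$ vertices, $N$ of them in $V(\tau)\setminus\{r\}$, so
\begin{equation}\label{ZNvsmu}
\mu(\tau)=\zeta_0^{N}\prod_{i\in V(\tau)\setminus\{r\}} w_{\sigma_i}\quad\text{and}\quad\mu(\Gamma_N)=\zeta_0^{N}Z_N.
\end{equation}
The offspring law has tail $p_n\sim\zeta_0 n^{-\beta}$ by (\ref{bw}) and mean $m<1$ by (\ref{meanoffspring}), and $\mu(\Gamma_N)$ is the probability that the total progeny $|T|$ of the GW process equals $N$. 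It is therefore enough to show $P(|T|=N)\sim\zeta_0(1-m)^{-\beta}N^{-\beta}$.

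I would then apply the Otter--Dwass cycle lemma,
\begin{equation}\label{dwass}
P(|T|=N)=\frac{1}{N}\,P\bigl(S_N=N-1\bigr),
\end{equation}
where $S_N=X_1+\cdots+X_N$ and the $X_i$ are i.i.d.\ with law $(p_n)$. Since $EX_1=m<1$, the event $\{S_N=N-1\}$ is a large deviation of order $(1-m)N$ above the mean. The hypothesis $\beta>2$ ensures a finite mean together with a regularly varying tail of index $-(\beta-1)$, so the summands are subexponential and the single-big-jump heuristic applies.

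The analytic core of the argument is the local form of this heuristic: for integer-valued i.i.d.\ summands with $p_n\sim\zeta_0 n^{-\beta}$, $\beta>2$, and fixed $c>0$,
\begin{equation}\label{bigjump}
P\bigl(S_N=\lfloor Nm+cN\rfloor\bigr)\;\sim\;N\,p_{\lfloor cN\rfloor}\qquad\text{as }N\to\infty,
\end{equation}
expressing the fact that to leading order the deviation is produced by a single summand of size $\approx cN$ while the remaining $N-1$ summands behave typically. Setting $c=1-m$ gives $P(S_N=N-1)\sim\zeta_0(1-m)^{-\beta}N^{1-\beta}$, and combining this with (\ref{dwass}) and (\ref{ZNvsmu}) yields the announced asymptotic for $Z_N$.

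The main obstacle is the justification of the local estimate (\ref{bigjump}). The corresponding tail statement $P(S_N>x)\sim N\,P(X_1>x)$ is classical for subexponential distributions, but the point-mass version at the specific site $k=N-1$ requires a bit more care: one must propagate the asymptotic $p_k\sim\zeta_0 k^{-\beta}$ through to the law of $S_N$ and exclude configurations in which the deviation of order $N$ is realized by two or more comparably large jumps. This is precisely the probability-theoretic input that Section~4 is advertised as reviewing; once it is in hand, the remaining manipulations are bookkeeping.
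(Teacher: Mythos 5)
Your reduction is correct and it is a genuinely different route from the paper's. You pass through the Galton--Watson representation, the Otter--Dwass identity $P(|T|=N)=\frac{1}{N}P(S_N=N-1)$, and a local one-big-jump estimate for $P(S_N=N-1)$; the bookkeeping with $\mu(\Gamma_N)=\zeta_0^N Z_N$ and the constant $(1-m)^{-\beta}$ all check out. The paper instead works combinatorially: it splits $Z_N=Z_{1,N}+E_N$ according to whether the tree has one or several vertices of maximal degree, truncates the generating functions at degree $i$, applies Lagrange inversion to express coefficients as $[z^n]\{\ell_i(z)^N\cdots\}$, and only then passes to probabilistic language, controlling four regions of the sum over the big vertex's degree with Petrov's convergence rates in the weak law, Bennett's inequality, and Chebyshev. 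The two proofs are secretly close --- after Lagrange inversion $[z^n]\{f(z)^N\}=P(S_N=n)$, so the paper's region analysis around $n\approx mN$ and its $E_N$ estimate are exactly the ``one big jump, not two'' argument written out by hand --- but your version isolates the entire analytic content in a single citable statement and gets the combinatorial decomposition for free from Otter--Dwass, which is cleaner and is in fact how later treatments of condensation in conditioned Galton--Watson trees proceed. The one point you should not gloss over: the local estimate $P(S_N=\lfloor Nm+cN\rfloor)\sim N\,p_{\lfloor cN\rfloor}$ is the whole theorem, it is \emph{not} among the tools the paper's Section~4 actually reviews (those are WLLN rates and Bennett's inequality, i.e.\ ingredients from which one could build it), and for $2<\beta\leq 3$ the summands have infinite variance, so you need the local large-deviation theorem for regularly varying lattice distributions in that range (Doney-type results), not merely the classical subexponential tail asymptotic $P(S_N>x)\sim NP(X_1>x)$. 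With that reference supplied, or with a proof excluding the zero-big-jump and two-big-jump configurations (which would essentially reproduce the paper's estimates on $G_N$ and $E_N$), your argument is complete.
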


The remainder of this section is devoted to a proof of this theorem.
To determine the large $N$ behaviour of $Z_N$ we split it into two parts,

\begin {equation} \label{splitsum}
Z_N = Z_{1,N} + E_N,
\end {equation}
where $Z_{1,N}$ is the contribution to $Z_N$ from trees which have exactly $1$ 
vertex of maximal degree 
and $E_N$ is the contribution to $Z_N$ from trees which have $\geq$ 2 vertices 
of maximal degree.  We will estimate these two terms seperately and show 
that for large $N$ the main contribution comes from $Z_{1,N}$. It
follows 
from the proof that large trees, of size $N$, are most likely to have exactly 
one large vertex which is approximately of degree $(1-m)N$. This will be 
stated more precisely in Section \ref{s:measure}.  The arguments used in the 
proof of Theorem \ref{th1} rely on a ``truncation method'' and some classical 
results from probability theory. We begin the proof by defining truncated 
versions of the generating functions introduced in the previous section. Then 
we introduce some notation and terminology 
from probability theory and state a few lemmas. 
In Subsection \ref{ss:Z1N} we analyse the asymptotic behaviour of $Z_{1,N}$ 
and in Subsection \ref{ss:EN} we do the same for $E_N$. 

For the truncation method, we will need the following definitions. Let $L_{i,N}$ 
be the finite volume partition function for trees on $N$ edges which have all 
vertices of degree $\leq i$ and define the generating functions
\begin {equation}
 \L_i(\zeta) = \sum_{N=1}^\infty L_{i,N}\zeta^N
\end {equation}
and
\begin {equation}
\ell_i(z) = \sum_{n=0}^{i-1} w_{n+1}z^n. 
\end {equation}
We have the standard relation
\begin {equation} \label {eqtreelower}
\L_i(\zeta) = \zeta \ell_i(\L_i(\zeta))
\end {equation}
obtained in the same way as (\ref{eqtree}). Let  $Y_{j,i,N}$  be the finite 
volume partition function for trees on $N$ edges which have all vertices of 
degree $\leq i$ and one marked (but not weighted) vertex of degree one at 
distance $j$ from the root.  Define 
\begin {equation}
 \Y_{j,i}(\zeta) = \sum_{N=1}^\infty Y_{j,i,N}\zeta^N
\end {equation}
and
\begin {equation}
 \Y_{i}(\zeta) = \sum_{j=1}^\infty \Y_{j,i}(\zeta).
\end {equation}
With generating function arguments we find that
\begin {equation} \label{twopointlength}
 \Y_{j,i}(\zeta) = \zeta \ell_i'(\L_i(\zeta)) \Y_{j-1,i}(\zeta) 
\end {equation}
for $j\geq 2$, see Fig.~\ref{marked}. Using $\Y_{1,i}(\zeta) = \zeta$ this 
yields by induction 
\begin {equation} 
 \Y_{j,i}(\zeta) = \zeta \Big(\zeta \ell_i'(\L_i(\zeta))\Big)^{j-1}.
\end {equation}
\begin {figure} [!t] 
\begin {center}
\scalebox{0.9}{\includegraphics{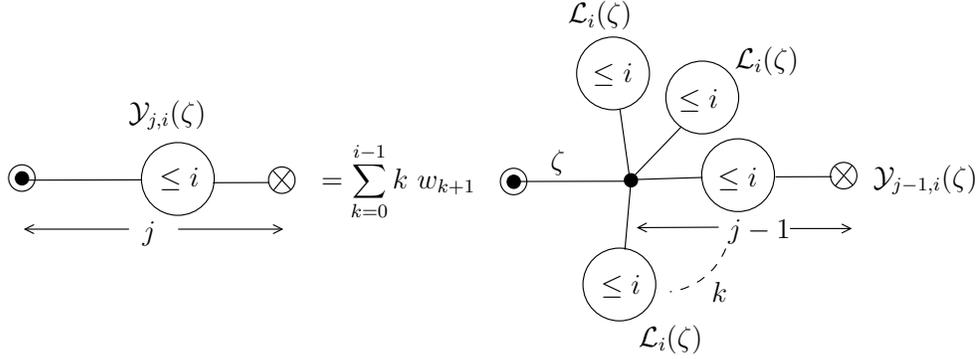}}
\caption{A diagram explaining  (\ref{twopointlength}). The marked vertex is 
indicated by $\otimes$. The balloons containing ``$\leq i$'' are 
trees which have vertices of degree at most $i$. If the degree of the nearest 
neighbour to the root is $k+1$, there are $k$ different ways of placing the 
marked vertex in a balloon.} \label{marked}
\end {center}
\end {figure}
Summing over $j$ we get
\begin {equation} \label{twopoint}
\Y_{i}(\zeta) = \frac{\zeta}{1-\zeta\ell_i'(\L_i(\zeta))}.
\end {equation}

\subsection{Tools from probability theory}

It will be useful to formulate our problem in probabilistic language. 
Define the probability generating functions
\begin {equation}
f_i(z) = \frac{\ell_i(z)}{\ell_i(1)} \quad \quad \text{and} 
\quad \quad f(z) = \frac{g(z)}{g(1)}.
\end {equation}

If $A$ is an event, we let $\P (A)$
denote the probability of $A$. Let $X^{(i)}_{1},X^{(i)}_{2},\ldots$ be i.i.d.~random variables which have a probability generating function $f_i(z)$, i.e.~
\begin{equation}
\mathbb{P}(X^{(i)}_{j} = k) = \left\{\begin {array}{ll} 
w_{k+1}/\ell_i(1) & \text{if $0\leq k\leq i-1$}, \\
0 & \text{if $k > i-1,$}
\end {array}\right.
\end{equation} 
and let $X_{1},X_{2},\ldots$ be i.i.d.~random variables which have a probability generating function $f(z)$.  Define 
\begin {equation}
m_i = \mathbb{E}(X^{(i)}_{j}),\quad V_i = \text{Var}(X^{(i)}_{j}), 
\quad \quad S^{(i)}_{N} = X^{(i)}_{1}+\ldots+X^{(i)}_{N}
\end {equation}
and
\begin {equation} \label {rvdef}
\quad \quad  S_{N} = X_{1}+\ldots+X_{N}.
\end {equation}
Note that $m=\mathbb{E}(X_j)$ and from (\ref{ngcondition}) we know that 
$m < 1$. Clearly $m_i \longrightarrow m$ as $i \longrightarrow \infty$. 
We need now a few lemmas, the first three deal with convergence rates 
in the weak law of large numbers.  

\begin {lemma} \label{lemmabk}
For any $\epsilon > 0$ and any $s < \beta - 2$ we have
\begin {equation}
 \lim_{N\rightarrow\infty}N^s\P\left(\left|\frac{S_N}{N}-m\right| 
 > \epsilon\right) = 0.
\end {equation}
\end {lemma}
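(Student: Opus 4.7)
The plan is a truncation argument: split each step $X_j$ into a bounded piece and a tail piece, and handle the two contributions separately---the tail by a union bound (the ``one big jump''), and the truncated bulk by a moment inequality. Set $X_j' = X_j\mathbf{1}_{\{X_j \leq N\}}$ and $S_N' = \sum_{j=1}^N X_j'$. Decompose
\begin{equation*}
\P\!\left(\left|\tfrac{S_N}{N}-m\right|>\epsilon\right) \leq \P\!\left(\max_{j\leq N} X_j>N\right) + \P\!\left(|S_N' - Nm|>\epsilon N\right).
\end{equation*}
Since $\P(X_1 > k) = O(k^{-(\beta-1)})$ by (\ref{bw}), a union bound gives $\P(\max_{j\leq N} X_j > N) \leq N\,\P(X_1 > N) = O(N^{-(\beta-2)})$, which is already $o(N^{-s})$ for every $s < \beta - 2$.

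For the truncated term I first absorb the bias: $|m - \mathbb{E}X_1'| = \mathbb{E}(X_1\mathbf{1}_{\{X_1 > N\}}) = O(N^{2-\beta})$, so $N|m - \mathbb{E}X_1'| = O(N^{3-\beta}) = o(N)$ for $\beta > 2$, and for $N$ large the probability is bounded by $\P(|S_N' - N\mathbb{E}X_1'| > \epsilon N/2)$. To this I apply Rosenthal's moment inequality to the centered i.i.d.\ sum of $Y_j = X_j' - \mathbb{E}X_j'$ with a high moment $r$. A direct calculation gives $\mathbb{E}|Y_j|^r = O(N^{r-\beta+1})$ for $r > \beta - 1$, while $\mathbb{E}Y_j^2 = \mathrm{Var}(X_j')$ is $O(1)$, $O(\log N)$ or $O(N^{3-\beta})$ according as $\beta > 3$, $\beta = 3$ or $2 < \beta < 3$. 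Choosing $r > \max(2, 2\beta - 4)$ makes the $N\,\mathbb{E}|Y_1|^r$ term dominate in
\begin{equation*}
\mathbb{E}|S_N' - N\mathbb{E}X_1'|^r \leq C_r\Big(N\,\mathbb{E}|Y_1|^r + N^{r/2}(\mathbb{E}Y_1^2)^{r/2}\Big),
\end{equation*}
giving $\mathbb{E}|S_N' - N\mathbb{E}X_1'|^r = O(N^{r-\beta+2})$, and then Markov's inequality yields $\P(|S_N' - N\mathbb{E}X_1'| > \epsilon N/2) = O(N^{-(\beta-2)})$.

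Combining, $\P(|S_N/N - m| > \epsilon) = O(N^{-(\beta-2)})$, and multiplying by $N^s$ with $s < \beta - 2$ produces $o(1)$. The main subtlety---and where I expect most of the care to be needed---is the heavy tail of $X_1$: for $2 < \beta \leq 3$ even the variance of $X_1$ is infinite, so Chebyshev's inequality alone is hopelessly weak, while the classical Baum--Katz summability $\sum_n n^{p-2}\P(|S_n/n - m| > \epsilon) < \infty$ for $p < \beta - 1$ only yields a rate $o(N^{-s})$ for $s < \beta - 3$, one power of $N$ short of the claim. The two-step argument above is what recovers the correct exponent: truncating at level $N$ makes all higher moments finite (but only polynomially growing in $N$), and then choosing $r$ large enough in Rosenthal cancels that polynomial growth against the factor $N^{-r}$ in Markov, producing exactly the rate $N^{-(\beta-2)}$ predicted by the principle of one big jump.
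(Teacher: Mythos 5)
Your proof is correct, but it takes a genuinely different and more self-contained route than the paper. The paper's proof is a one-liner: it observes that $\mathbb{E}(|X_j|^t)<\infty$ for every $t<\beta-1$ (likewise for $X_j-m$) and invokes a classical convergence-rate theorem for the weak law of large numbers (Petrov, Theorem~28), which states directly that $\mathbb{E}|X|^t<\infty$ and $\mathbb{E}X=0$ imply $N^{t-1}\P\left(|S_N|\geq\epsilon N\right)\to 0$ for every $\epsilon>0$; taking $s=t-1<\beta-2$ gives the lemma. In particular, the classical result is already available in this pointwise form, not merely in the Baum--Katz summability form $\sum_n n^{p-2}\P(\cdot)<\infty$, so the extra power of $N$ you worry about losing is not actually lost by the citation route. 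Your argument replaces the citation with an explicit truncation at level $N$ followed by Rosenthal and Markov, and the steps check out: the union bound gives $\P(\max_j X_j>N)=O(N^{2-\beta})$; the bias $N|m-\mathbb{E}X_1'|=O(N^{3-\beta})=o(N)$ is correctly absorbed into $\epsilon N/2$; and your choice $r>\max(2,2\beta-4)$ guarantees both $r>\beta-1$ (so that $\mathbb{E}|Y_1|^r=O(N^{r-\beta+1})$) and that $N\,\mathbb{E}|Y_1|^r$ dominates $N^{r/2}(\mathbb{E}Y_1^2)^{r/2}$ in each of the regimes $\beta>3$, $\beta=3$, $2<\beta<3$. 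What your approach buys is a sharp quantitative bound $\P\left(|S_N/N-m|>\epsilon\right)=O(N^{-(\beta-2)})$ --- the ``one big jump'' rate, slightly stronger than the $o(N^{-s})$ statement of the lemma --- obtained without any black-box reference; the cost is length relative to the paper's two-line citation.
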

\begin {proof}
It is clear that $\mathbb{E}(|X_j|^t) < \infty$ for all $t<\beta - 1$
and the same is true for the translated random variables $X_j-m$. 
The result then follows directly from \cite[Theorem 28, p.~286]{petrov:1975}. 
\qed
\end {proof}
The next Lemma is a classical result \cite{bennett:1962}.
\begin {lemma} (Bennett's inequality)
If $W_1,W_2,\ldots$ are independent random variables, 
$\mathbb{E}(W_j) = 0$, $\mathrm{Var}(W_j) = V_W$ and 
$W_j \leq b$ a.s.\ for every $j$, where $b$ and $V_W$ are positive 
numbers, then for any $\epsilon > 0$
\begin {equation} \label{bern}
\P\left(\frac{1}{N}\sum_{j=1}^N W_j > \epsilon\right) \leq  
\exp\left\lbrace-\eta\left[\left(1+\frac{1}
{\lambda}\right)\log\left(1+\lambda\right)
-1\right]\right\rbrace
\end {equation}
with
\begin {equation}
 \eta = \frac{N\epsilon}{b} \quad \quad \text 
 {and} \quad \quad \lambda = \frac{b\epsilon}{V_W}.
\end {equation}
\end {lemma}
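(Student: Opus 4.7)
The plan is to apply the Chernoff (exponential Markov) technique, bound the moment generating function of each $W_j$ using the variance and upper-bound conditions, multiply using independence, and optimize the free parameter. For $t>0$, Markov's inequality applied to $\exp(t\sum_j W_j)$ gives
\begin{equation*}
\P\!\left(\frac{1}{N}\sum_{j=1}^N W_j > \epsilon\right)
\;\leq\; e^{-t N \epsilon}\,\mathbb{E}\!\left(e^{t\sum_j W_j}\right)
\;=\; e^{-t N \epsilon}\,\prod_{j=1}^N \mathbb{E}\!\left(e^{t W_j}\right),
\end{equation*}
where independence was used in the last step. Everything then reduces to a one-variable moment-generating-function estimate.

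The main technical step is the pointwise inequality
\begin{equation*}
e^{t w} \;\leq\; 1 + t w + \frac{e^{t b} - 1 - t b}{b^{2}}\, w^{2},
\qquad w \leq b,\ t \geq 0,
\end{equation*}
which I expect to be the most delicate point: it follows because the function $w \mapsto (e^{tw}-1-tw)/w^{2}$ is nondecreasing on $(-\infty,b]$ for $t\geq 0$, so it is maximized at $w=b$ on the relevant range. Taking expectation and using $\mathbb{E}(W_j)=0$, $\mathrm{Var}(W_j)=V_W$ gives
\begin{equation*}
\mathbb{E}(e^{t W_j}) \;\leq\; 1 + \frac{V_W}{b^{2}}\,(e^{t b} - 1 - t b)
\;\leq\; \exp\!\left(\frac{V_W}{b^{2}}\,(e^{t b}-1-t b)\right),
\end{equation*}
where the second step uses $1+x \leq e^x$. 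Multiplying over $j=1,\ldots,N$ then yields
\begin{equation*}
\P\!\left(\frac{1}{N}\sum_{j=1}^N W_j > \epsilon\right)
\;\leq\; \exp\!\left(-t N \epsilon + \frac{N V_W}{b^{2}}(e^{tb}-1-tb)\right).
\end{equation*}

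Finally, I would optimize the right side in $t>0$. Differentiating the exponent in $t$ and setting the derivative to zero gives $e^{tb}-1 = b\epsilon/V_W = \lambda$, i.e.\ the optimal $t$ satisfies $tb = \log(1+\lambda)$. Substituting back, the linear term becomes $-tN\epsilon = -\eta\log(1+\lambda)$ (using $\eta=N\epsilon/b$), while $\frac{NV_W}{b^{2}}(e^{tb}-1-tb) = \frac{\eta}{\lambda}(\lambda - \log(1+\lambda)) = \eta - \frac{\eta}{\lambda}\log(1+\lambda)$. Adding these gives exactly $-\eta[(1+1/\lambda)\log(1+\lambda) - 1]$, which is the claimed bound. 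The only nontrivial step is the quadratic exponential inequality above; once granted, the rest is bookkeeping.
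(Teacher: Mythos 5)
Your proposal is correct and complete. The paper does not prove this lemma at all --- it simply cites it as a classical result of Bennett --- so there is nothing to compare against; your argument is the standard (and essentially original) Chernoff/moment-generating-function proof, and every step checks out: the key pointwise bound holds because $u\mapsto (e^{u}-1-u)/u^{2}=\int_{0}^{1}(1-s)e^{su}\,ds$ is nondecreasing on all of $\mathbb{R}$, and the optimization over $t$ with $tb=\log(1+\lambda)$ reproduces the stated exponent exactly.
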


By $f(x) = \Theta(g(x))$ as $x\rightarrow \infty$ we mean that for $x$ sufficiently large, there exist constants $c_1$ and $c_2$ such that $c_1 g(x) \leq f(x) \leq c_2 g(x)$.
\begin {lemma} \label{lbern}
If $i = \Theta(N^\gamma)$ where $\gamma < 1$ then, for any $\epsilon > 0$ small 
enough, there is a positive constant $C$ such that

\begin {equation}
\P\left(\frac{S^{(i)}_{N}}{N}-m_i > \epsilon\right) \leq  
\exp\left\lbrace-C\epsilon N^{1-\gamma}\right\rbrace . 
\end {equation}
\end {lemma}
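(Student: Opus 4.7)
The plan is to apply Bennett's inequality directly to the centred variables $W_j := X^{(i)}_j - m_i$, $j=1,\ldots,N$. These are i.i.d., have mean zero and common variance $V_i$, and satisfy the almost sure upper bound $W_j \leq (i-1) - m_i =: b$. Since $m_i \to m < 1$ and $i = \Theta(N^\gamma)$, we have $b = \Theta(N^\gamma)$. Substituting into Bennett's inequality yields
\[
 \P\!\left(\frac{S^{(i)}_N}{N} - m_i > \epsilon\right) \leq \exp\bigl(-\eta\,F(\lambda)\bigr),
\]
where $F(\lambda) := (1+\lambda^{-1})\log(1+\lambda) - 1$, $\eta = N\epsilon/b = \Theta(N^{1-\gamma}\epsilon)$, and $\lambda = b\epsilon/V_i$.

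Since $F$ is continuous, strictly increasing on $(0,\infty)$ and $F(\lambda)\to\infty$ as $\lambda\to\infty$, it suffices to show $\lambda\to\infty$ as $N\to\infty$; for then $F(\lambda)\geq c(\epsilon) > 0$ for all sufficiently large $N$, and the bound becomes $\exp(-c(\epsilon)\,\eta) \leq \exp(-C\epsilon N^{1-\gamma})$ as claimed.

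To verify $\lambda\to\infty$ I would estimate $V_i$ using the tail condition $w_n\sim n^{-\beta}$ with $\beta>2$. Starting from $V_i\leq \ell_i(1)^{-1}\sum_{k=0}^{i-1}k^2 w_{k+1}$ and $\ell_i(1)\to g(1) < \infty$, one finds that $V_i$ stays bounded when $\beta>3$, grows like $\log i$ when $\beta=3$, and grows like $i^{3-\beta}$ when $2<\beta<3$. In every case $V_i = o(i)$, and since $b\sim i$ we obtain $\lambda = b\epsilon/V_i \to \infty$ as required.

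The main obstacle is this last step: one must control $V_i$ precisely enough to guarantee that it grows strictly slower than $b\sim i$. The case $2<\beta<3$, in which $V_i$ genuinely diverges, is the most delicate and is precisely where the hypothesis $\beta>2$ is used. The remainder of the argument is routine---just a careful bookkeeping of how $\eta$ and $\lambda$ scale with $N$.
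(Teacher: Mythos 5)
Your argument is correct and is essentially the paper's own proof: apply Bennett's inequality to $W_j = X^{(i)}_j - m_i$ with $b=\Theta(i)$, note $\eta = \Theta(\epsilon N^{1-\gamma})$, and use the variance estimates ($V_i$ bounded for $\beta>3$, $\Theta(\log i)$ for $\beta=3$, $\Theta(i^{3-\beta})$ for $2<\beta<3$) to conclude $\lambda\to\infty$ so that the Bennett exponent is bounded below by a constant multiple of $\eta$. The paper states exactly these steps, only more tersely.
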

\begin {proof}
This follows directly from Bennett's inequality with 
$W_j = X^{(i)}_{j} - m_i$. 
Then  $V_W = V_i$ and we can take $b=i$ for $i$ large enough 
(since $m_i < 1$ for $i$ large enough). 
If now $i =  \Theta(N^\gamma)$, then
\begin {equation}
 \eta = \epsilon \Theta(N^{1-\gamma}).
\end {equation}
 If $\beta > 3$ then $V_i <\infty$ and $\lambda = \Theta(N^\gamma)$ and 
 the result follows. If $2 <\beta \leq 3$ then 
\begin {equation}  \label{sigmasq}
V_i = \left\{\begin {array}{ll} 
\Theta(i^{3-\beta}) & \text{if $\beta < 3$}, \\
\Theta(\log(i)) & \text{if $\beta = 3$}
\end {array}\right.
\end {equation}
so $\lambda \longrightarrow \infty$ as $N\longrightarrow\infty$ which 
completes the proof. \qed
\end {proof}

In the following we will repeatedly 
use Lagrange's inversion formula, see e.g. \cite[p.~167]{wilf:2006}.
We denote the coefficient of $z^n$
in a formal power series $p(z)$ by $[z^n]\left\lbrace p(z)
\right\rbrace$.

\begin {lemma} (Lagrange's inversion formula)
If $h(z)$ is a formal power series in $z$ and $\L_i$ satisfies 
(\ref{eqtreelower}) then 

\begin {equation} \label {lagrangeinv}
[\zeta^N]\left\lbrace h(\L_i(\zeta))\right\rbrace  = 
\frac{1}{N}[z^{N-1}]\left\lbrace h'(z) \ell_i(z)^N\right\rbrace. 
\end {equation}
\end {lemma}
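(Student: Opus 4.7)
The plan is to recognize this as the classical Lagrange inversion formula, adapted to the setting at hand. The functional equation $\L_i(\zeta)=\zeta\ell_i(\L_i(\zeta))$ has the standard Lagrange shape $w=\zeta\phi(w)$ with $\phi=\ell_i$, and since $\ell_i(0)=w_1>0$ the map $w\mapsto w/\ell_i(w)$ is an invertible formal power series change of variable near the origin, whose inverse is $\zeta\mapsto\L_i(\zeta)$. All manipulations will be at the level of formal power series, so no analytic convergence issues arise.

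My first step is to rewrite the coefficient as a formal residue,
\[
[\zeta^N]h(\L_i(\zeta)) = \mathrm{Res}_{\zeta=0}\bigl\{\zeta^{-N-1}h(\L_i(\zeta))\,d\zeta\bigr\},
\]
and then perform the change of variable $w=\L_i(\zeta)$, i.e., $\zeta=w/\ell_i(w)$, whose Jacobian is $d\zeta/dw=(\ell_i(w)-w\ell_i'(w))/\ell_i(w)^2$. Substituting and cleaning up the powers of $w$ and $\ell_i(w)$ converts this into a coefficient in $w$, namely the difference of $[w^N]\{h(w)\ell_i(w)^N\}$ and $[w^{N-1}]\{h(w)\ell_i'(w)\ell_i(w)^{N-1}\}$.

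The last step is a short integration-by-parts manipulation. Using the identity $\ell_i'\ell_i^{N-1}=N^{-1}(\ell_i^N)'$ together with the formal identity $[w^{N-1}]\{h(w)P'(w)\}=N[w^N]\{h(w)P(w)\}-[w^{N-1}]\{h'(w)P(w)\}$ (applied with $P=\ell_i^N$), the term $[w^N]\{h(w)\ell_i(w)^N\}$ cancels and what remains is exactly $N^{-1}[w^{N-1}]\{h'(w)\ell_i(w)^N\}$, which is the claimed formula.

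The only step requiring real care is the change-of-variables formula for formal residues; this is entirely standard and can be quoted from \cite{wilf:2006}. If one prefers to avoid formal residues altogether, an equivalent route is to verify the formula first on monomials $h(z)=z^k$ by induction on $k$, using (\ref{eqtreelower}) and the derivative identity $\L_i'(\zeta)=\ell_i(\L_i(\zeta))+\zeta\ell_i'(\L_i(\zeta))\L_i'(\zeta)$, and then extend to arbitrary formal power series $h$ by linearity and coefficient-wise convergence. The fact that $\ell_i$ is a polynomial rather than a general formal power series plays no role in either version of the argument.
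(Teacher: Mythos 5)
Your proposal is correct. The paper itself offers no proof of this lemma: it simply observes that (\ref{eqtreelower}) is the standard Lagrange setup $w=\zeta\phi(w)$ with $\phi=\ell_i$ and cites Wilf for the classical inversion formula. Your residue computation is the standard textbook derivation of that classical result: the change of variable $w=\L_i(\zeta)$ is legitimate as a formal power series substitution because $[\zeta]\L_i(\zeta)=\ell_i(0)=w_1>0$ by the standing assumption (\ref{statedassumption}), the Jacobian bookkeeping yields $[w^N]\{h\,\ell_i^N\}-[w^{N-1}]\{h\,\ell_i'\,\ell_i^{N-1}\}$ as you state, and the integration-by-parts identity $[w^{N-1}]\{hP'\}=N[w^N]\{hP\}-[w^{N-1}]\{h'P\}$ with $P=\ell_i^N$ collapses this to $N^{-1}[w^{N-1}]\{h'\,\ell_i^N\}$. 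So you have supplied in full the argument the paper delegates to the reference; nothing is missing.
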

Using the above Lemma for the function $h(z) = z^j$ we get
\begin {equation}
 [\zeta^N]\left\lbrace \L_i(\zeta)^j\right\rbrace  = 
 \frac{j}{N}[z^{N-j}]\left\lbrace \ell_i(z)^N\right\rbrace. 
\end {equation}

The following simple result will be useful.  We omit the proof.
\begin {lemma} \label{split}
If $X\geq 0$ and $Y$ are random variables, then for any $\epsilon > 0$
\begin {eqnarray}
\P\left(\left|X+Y\right| \leq \epsilon\right)  
\geq \P\left(X \leq \epsilon/2\right) \P\left(\left|Y\right| \leq \epsilon/2\right) 
\end {eqnarray}
and
\begin {eqnarray}
 \P\left(\left|X+Y\right| > \epsilon\right) 
 \leq \P\left(\left|Y\right| > \epsilon/2\right)+\P\left(X > \epsilon/2\right).
\end {eqnarray}
\end {lemma}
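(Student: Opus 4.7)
The plan is to treat the two inequalities separately, since both reduce to elementary set-theoretic containments combined with the triangle inequality. I would proceed under the tacit assumption that $X$ and $Y$ are independent; otherwise the first inequality fails (the upcoming uses in the paper all involve an empirical partial sum $X$ and an independent summand $Y$, so this is the intended reading).

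For the first inequality, I would exploit the assumption $X \geq 0$ to control $|X+Y|$ by an ordinary bound on $X$ and an absolute-value bound on $Y$. Specifically, whenever $X \leq \epsilon/2$ and $|Y| \leq \epsilon/2$ hold simultaneously, one has $-\epsilon/2 \leq X + Y \leq \epsilon$, so $|X+Y| \leq \epsilon$. This gives the containment
\begin{equation}
\{X \leq \epsilon/2\} \cap \{|Y| \leq \epsilon/2\} \subseteq \{|X+Y| \leq \epsilon\},
\end{equation}
and taking probabilities, together with the independence of $X$ and $Y$ (which factorises the joint probability on the left), yields the claim.

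For the second inequality, I would argue by contrapositive. If both $X \leq \epsilon/2$ and $|Y| \leq \epsilon/2$, then using $X \geq 0$ again we obtain $|X+Y| \leq X + |Y| \leq \epsilon$. Hence
\begin{equation}
\{|X+Y| > \epsilon\} \subseteq \{X > \epsilon/2\} \cup \{|Y| > \epsilon/2\},
\end{equation}
and a union bound delivers the inequality. Notably, this direction requires no independence.

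There is no real obstacle here; the only subtlety is remembering to use the hypothesis $X \geq 0$ when passing from $X \leq \epsilon/2$ to a lower bound on $X$, which is what lets a one-sided control of $X$ combine with a two-sided control of $Y$ into a two-sided control of $X+Y$. This is presumably why the authors record the result as a lemma but omit the proof.
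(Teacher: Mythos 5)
Your proof is correct and is exactly the elementary inclusion-plus-union-bound argument the paper omits as obvious. You are also right to flag that the first inequality requires $X$ and $Y$ to be independent (the factorisation $\P(A\cap B)=\P(A)\P(B)$ fails otherwise), a hypothesis the paper leaves implicit but which holds in every application, where $X$ is one of the auxiliary variables $X^{(N,i)}$ and $Y$ is built from the independent sum $S_N$.
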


\subsection{Calculation of $Z_{1,N}$} \label{ss:Z1N}

Using the Lemmas in the previous subsection we are ready to study 
the asymptotic behaviour of $Z_{1,N}$. It is is easy to see that 
\begin {eqnarray} \label{generalz}
Z_{1,N} &=&  \sum_{i=0}^{N-1}w_{i+1} [\zeta^{N}]\left\lbrace 
\Y_{i}(\zeta)\L_{i}(\zeta)^{i}\right\rbrace
\end {eqnarray}
as is illustrated in Fig.\ \ref{f:z1n}.
\begin{figure} [!b]
\centerline{\scalebox{1}{\includegraphics{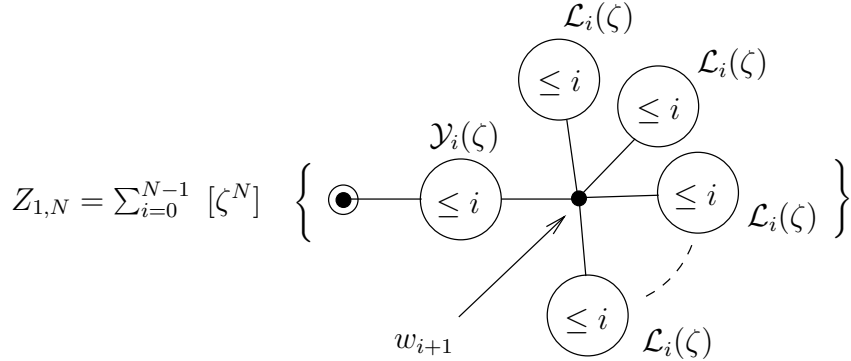}}}
\caption{An illustration of Equation (\ref{generalz}). The 
balloons which include the ``$\leq i$'' are trees which have 
vertices of degree at most $i$. There is thus precisely one 
vertex of maximum degree $i+1$.	} \label{f:z1n}
\end{figure}
Combining (\ref{eqtreelower}) and (\ref{twopoint}) 
one can use the Lagrange inversion formula (\ref{lagrangeinv}) 
for the function
\begin {equation}
h_{ij}(z) = \frac{z^{j+1}}{\ell_i(z)-z\ell'_{i}(z)}
\end {equation}
to get
\begin {eqnarray} \nonumber
[\zeta^{N}]\left\lbrace \Y_{i}(\zeta)\L_{i}(\zeta)^{j}\right\rbrace = \frac{1}{N}  [z^{N-j-1}]\Bigg\{ \Bigg(\frac{j+1}{\ell_i(z)-z\ell_i'(z)} + \frac{z^2\ell_i''(z)}{(\ell_i(z)-z\ell_i'(z))^2}\Bigg)\ell_i(z)^N\Bigg\}. \\ \label{lhij}
\end {eqnarray}
Note that the left hand side in the above equation 
is increasing in $i$ and therefore 
the right hand side also.  In the following we will use this fact repeatedly.  
Next we define the functions
\begin {equation}
f_{i,1}(z) =\frac{\ell_i(1)- \ell_i'(1)}{\ell_i(z)-z\ell_i'(z)}
\end {equation}
and\begin {equation}
f_{i,2}(z) =  \\ \frac{z^2\ell_i''(z)}{(\ell_i(z)-z\ell_i'(z))^2} 
\frac{(\ell_i(1)- \ell_i'(1))^2}{ \ell_i''(1)}.
\end {equation}
It is easy to check that all derivatives of these functions are 
positive for $0 \leq z \leq 1$ and $f_{i,1}(1)=f_{i,2}(1)=1$. 
We let $X^{(i,1)}$ and $X^{(i,2)}$  be random 
variables having $f_{i,1}$ and $f_{i,2}$, respectively, as 
probability generating functions. We will need the following Lemma.

\begin {lemma} \label {gc}
If $i=\Theta(N)$ as $N \longrightarrow \infty$, then for any $\epsilon > 0$ 
\begin{enumerate}
  \item[(i)] $\P\left(X^{(i,1)} \geq \epsilon N\right) \leq C_1 N^{2-\beta}$,          
 
  \item[(ii)] $\ell_N''(1)\P\left(X^{(i,2)} \geq \epsilon N\right) 
  \leq C_2  \left\{\begin {array}{ll} 
N^{3-\beta} & \text{if $\beta \neq 3$}, \\
\log(N) & \text{if $\beta = 3$}
\end {array}\right.$

 \end{enumerate}
where $C_1$ and $C_2$ are positive numbers which in general 
depend on $\epsilon$ and $\beta$.
\end {lemma}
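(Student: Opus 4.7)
My plan is to recognize $f_{i,1}$ and $f_{i,2}$ as probability generating functions with explicit compound structure, and then apply a heavy-tailed single-big-jump estimate.

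First I will massage $f_{i,1}$ into tractable form. Direct computation gives $\ell_i(z)-z\ell_i'(z)=w_1-\sum_{n=2}^{i-1}(n-1)w_{n+1}z^n$, so setting $b_n=(n-1)w_{n+1}/w_1$ and $B_i(z)=\sum_{n=2}^{i-1}b_n z^n$,
\[
f_{i,1}(z)=\frac{1-B_i(1)}{1-B_i(z)}.
\]
Subcriticality $m<1$ (equivalently $w_1>w_c$) and monotonicity in $i$ give a uniform bound $B_i(1)\le c<1$. Expanding the geometric series realizes $X^{(i,1)}\stackrel{d}{=}\tilde Y_1+\cdots+\tilde Y_K$, where $K$ is geometric with parameter $1-B_i(1)$ (exponentially decaying tails, uniformly in $i$) and the $\tilde Y_j$ are i.i.d.\ with $\P(\tilde Y_j=n)\propto b_n$. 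Since $w_n\sim n^{-\beta}$, the tail inherits the power law: $\P(\tilde Y_j\ge n)\le C n^{2-\beta}$, uniformly in $i$ (truncation at $i-1$ can only lighten the tail, so the bound from the untruncated case suffices).

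Second, to obtain (i), I will apply a single-big-jump bound. Because $\tilde Y_j$ has regularly varying tail of index $\beta-2>0$, its distribution is subexponential, and a standard inequality (derivable by splitting each $\tilde Y_j$ at the level $\epsilon N$ and bounding the truncated-mean part by Markov and the over-the-threshold part by a union bound) yields, uniformly in $k$,
\[
\P\bigl(\tilde Y_1+\cdots+\tilde Y_k\ge \epsilon N\bigr)\le C\,k\,\P(\tilde Y_1\ge \epsilon N/k)\le C'\,k^{\beta-1}N^{2-\beta}.
\]
Conditioning on $K$ and using the geometric decay of $\P(K=k)$ to absorb the polynomial factor $k^{\beta-1}$ gives $\P(X^{(i,1)}\ge \epsilon N)\le C_1 N^{2-\beta}$, which is (i). The three regimes $\beta\in(2,3)$, $\beta=3$, $\beta>3$ are all handled by this single estimate because only the tail behaviour of $\tilde Y_j$ is used, not its (possibly diverging) moments.

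Finally, (ii) follows from (i) by a factorization trick. A direct check shows
\[
f_{i,2}(z)=\frac{z^2\ell_i''(z)}{\ell_i''(1)}\,f_{i,1}(z)^2,
\]
so $X^{(i,2)}\stackrel{d}{=}Z+X^{(i,1)}+X^{(i,1)\prime}$ with $X^{(i,1)},X^{(i,1)\prime}$ independent copies of the variable from part (i) and $Z$ having $\P(Z=n)=n(n-1)w_{n+1}/\ell_i''(1)$ for $2\le n\le i-1$. Splitting
\[
\{X^{(i,2)}\ge\epsilon N\}\subseteq\{Z\ge \epsilon N/2\}\cup\{X^{(i,1)}+X^{(i,1)\prime}\ge\epsilon N/2\},
\]
the first event contributes $\ell_i''(1)^{-1}\sum_{n\ge \epsilon N/2}n(n-1)w_{n+1}$, and since $i=\Theta(N)$ forces $\ell_i''(1)\asymp\ell_N''(1)$, multiplying by $\ell_N''(1)$ produces exactly $\Theta(N^{3-\beta})$ when $\beta\ne 3$ and $\Theta(\log N)$ when $\beta=3$. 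The second event is bounded by $2\,\P(X^{(i,1)}\ge\epsilon N/4)=O(N^{2-\beta})$ via (i), and in each of the three regimes one checks $\ell_N''(1)\cdot N^{2-\beta}$ is of lower order than the bound in (ii); this finishes the proof.

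The main obstacle is the heavy-tailed concentration in the second step: since $\tilde Y_j$ has only finitely many moments, the Bennett-style exponential estimates used elsewhere in the paper (Lemma on Bennett's inequality) are unavailable, and one must genuinely work with the subexponential/regularly-varying structure. Once that estimate is in place, (ii) is a short algebraic consequence of (i) together with careful bookkeeping of $\ell_i''(1)$ in the three $\beta$-regimes.
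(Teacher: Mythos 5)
Your argument is correct, but it takes a genuinely different route from the paper's. The paper proves (i) by a weighted Chebyshev inequality: it takes $\phi(x)=x^{\lfloor\beta\rfloor}$, computes $\mathbb{E}\bigl(\phi(X^{(i,1)})\bigr)=\Theta\bigl(\ell_i^{(\lfloor\beta\rfloor+1)}(1)\bigr)=\Theta\bigl(i^{\lfloor\beta\rfloor+2-\beta}\bigr)$ from the derivatives of $f_{i,1}$ at $z=1$, and divides by $(\epsilon N)^{\lfloor\beta\rfloor}$; for (ii) it simply bounds the probability by $1$ when $2<\beta\leq 3$, so that the stated estimate is nothing more than the asymptotics (\ref{sder}) of $\ell_N''(1)$, and repeats the moment argument when $\beta>3$. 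You instead identify $X^{(i,1)}$ as a compound geometric sum $\tilde Y_1+\cdots+\tilde Y_K$ (valid because subcriticality gives $B_i(1)\leq w_c/w_1<1$ uniformly in $i$), control the tail by a union bound over the summands, and for (ii) exploit the exact factorization $f_{i,2}(z)=\frac{z^2\ell_i''(z)}{\ell_i''(1)}f_{i,1}(z)^2$ together with a splitting of the event. Both approaches are sound. The paper's is shorter and requires no structural decomposition of the random variables; yours is probabilistically more transparent, avoids having to choose a moment order, and actually produces a nontrivial bound on $\P(X^{(i,2)}\geq\epsilon N)$ in the range $2<\beta\leq 3$, where the paper settles for the trivial one. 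Two small remarks: the subexponential/single-big-jump machinery you invoke is heavier than necessary --- the elementary pigeonhole bound $\P\bigl(\tilde Y_1+\cdots+\tilde Y_k\geq\epsilon N\bigr)\leq k\,\P\bigl(\tilde Y_1\geq\epsilon N/k\bigr)$ already gives your estimate, and the paper's proof shows that a plain truncated-moment Markov argument also suffices, so your closing concern is somewhat overstated; and for $\beta=3$ the first term in your bound for (ii) is in fact $O(1)$ rather than $\Theta(\log N)$, since $\sum_{\epsilon N/2\leq n\leq i-1}n^{-1}=O(1)$ when $i=\Theta(N)$ --- this is harmless, as it only strengthens the upper bound.
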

\begin {proof}
 We use a weighted version of Chebyshev's inequality which states that if $X$ is a random variable, $\phi(x)>0$ for $x>0$ 
is monotonically increasing and $\mathbb{E}(\phi(X))$ exists, then
\begin {equation}\label{4.29}
 \P\left(|X|\geq t\right) \leq \frac{\mathbb{E}(\phi(X))}{\phi(t)}.
\end {equation}
We first consider case $(i)$. 
Choose $\phi(x) = x^{\lfloor \beta \rfloor}$ where 
$\lfloor \cdot \rfloor$ denotes the floor function.  It is clear that 
$f^{(n)}_{i,1}(1) < \infty$ for all $n$ and therefore 
$\mathbb{E}(\phi(X^{(i,1)})) 
< \infty$. One can check that as $i \longrightarrow \infty$
\begin {equation} 
 \mathbb{E}(\phi(X^{(i,1)})) = \Theta(\ell_i^{(\lfloor \beta \rfloor +1)}(1)) 
 = \Theta(i^{-\beta + \lfloor \beta \rfloor + 2}).\label{4.30}
\end {equation}
If $i = \Theta(N)$ as $N\longrightarrow \infty$ then by 
(\ref{4.29}) and (\ref{4.30}) there exists a positive constant $C$ such that
\begin {equation}
\P\left(X^{(i,1)} \geq \epsilon N\right)  \leq 
C \dfrac{N^{2-\beta}}{\epsilon^{\lfloor\beta\rfloor}}.
\end {equation}
In order to prove $(ii)$ we first 
consider the case when $2 < \beta \leq 3$. Then 
\begin {equation}  \label{sder}
\ell_N''(1) = \left\{\begin {array}{ll} 
\Theta(N^{3-\beta}) & \text{if $\beta \neq 3$}, \\
\Theta(\log(N)) & \text{if $\beta = 3$}
\end {array}\right.
\end {equation}
as $N\longrightarrow\infty$ which implies the desired result. If
$\beta > 3$, then $\ell_N''(1)$  
has a finite limit as $N\longrightarrow\infty$ 
and the proof proceeds as in case $(i)$. \qed
\end {proof}
We are now ready to prove the main result of this subsection.
\begin {lemma} \label{zsamleitid}
\begin {equation} \label{gzr}
 Z_{1,N} = (1-m)^{-\beta}N^{-\beta} \zeta_0^{1-N}\left(1+o(1)\right).
\end {equation}
\end {lemma}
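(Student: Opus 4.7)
The plan is to apply the Lagrange inversion formula (\ref{lhij}) with $j=i$ to equation (\ref{generalz}), and then read off the coefficients of $\ell_i(z)^N$ as probabilities. Writing $\ell_i(z) = \ell_i(1)f_i(z)$ gives $[z^{N-i-1}]\{f_{i,k}(z)\ell_i(z)^N\} = \ell_i(1)^N q_{i,k}$, where
$$q_{i,k} := \P\bigl(X^{(i,k)} + S^{(i)}_N = N-i-1\bigr).$$
Thus
$$Z_{1,N} = \frac{1}{N}\sum_{i=0}^{N-1} w_{i+1}\, \ell_i(1)^N \left[\frac{(i+1)\,q_{i,1}}{\ell_i(1)-\ell_i'(1)} + \frac{\ell_i''(1)\,q_{i,2}}{(\ell_i(1)-\ell_i'(1))^2}\right].$$
The condensation picture suggests that the sum is dominated by $i$ close to $(1-m)N$: the event in $q_{i,k}$ forces $S^{(i)}_N/N \approx 1 - i/N$, matching the mean $m_i \to m$ precisely at $i = (1-m)N$.

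I would localize the sum to the window $\mathcal{W} := \{i : |i-(1-m)N| \leq \epsilon N\}$ and treat its complement separately. For $i \in \mathcal{W}^c$ with $i = \Theta(N)$, a variant of Bennett's inequality (applicable with $b = i = \Theta(N)$, giving polynomial-in-$N$ decay) combined with Lemma \ref{gc} yields $q_{i,k} = o(1)$ uniformly in $i$, because $N-i-1$ is then separated from the mean $Nm_i$ of $S_N^{(i)}$ by at least $\epsilon N/2$ and $X^{(i,k)} = o(N)$ in probability. For $i$ sufficiently small (below some $N^\alpha$), the factor $\ell_i(1)^N$ decays superexponentially relative to $\zeta_0^{-N}$ since $\ell_i(1) < g(1) = \zeta_0^{-1}$ strictly. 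Together with the prefactor estimate $w_{i+1}\ell_i(1)^N = O(N^{-\beta}\zeta_0^{-N})$ on the complementary region $i = \Theta(N)$, the total contribution from $\mathcal{W}^c$ is shown to be $o(N^{-\beta}\zeta_0^{-N})$.

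Within $\mathcal{W}$, the hypothesis $\beta > 2$ ensures $Ni^{1-\beta} = o(1)$, yielding the uniform approximations $w_{i+1} = ((1-m)N)^{-\beta}(1+o(1))$, $\ell_i(1)^N = \zeta_0^{-N}(1+o(1))$, $\ell_i(1)-\ell_i'(1) = (1-m)/\zeta_0 + o(1)$ and $i+1 = (1-m)N(1+o(1))$. Substituting these into the first bracket term pulls the asymptotic prefactor $(1-m)^{-\beta}N^{-\beta}\zeta_0^{1-N}$ out of the sum, leaving a factor $\sum_{i\in\mathcal{W}} q_{i,1}$. The second bracket term is subleading: its ratio to the first is $\ell_i''(1)/[(i+1)(\ell_i(1)-\ell_i'(1))]$, which is $O(N^{-1})$ for $\beta > 3$ and $O(N^{2-\beta})$ for $2 < \beta \leq 3$, both $o(1)$.

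The final step is to show $\sum_{i\in\mathcal{W}} q_{i,1} \to 1$. Substituting $k = N-i-1$, this becomes a sum of probabilities $\P\bigl(S_N^{(N-k-1)} + X^{(N-k-1,1)} = k\bigr)$ over $k$ in a window of width $2\epsilon N$ about $mN$. Since the truncation level $N-k-1 = \Theta(N)$ is nearly constant across this range, a local central limit theorem for the bounded i.i.d.\ summands $X^{(i)}_j$ — whose variance $V_i$ is $\Theta(1)$ for $\beta > 3$, $\Theta(\log N)$ for $\beta = 3$ and $\Theta(N^{3-\beta})$ for $2 < \beta < 3$, in every case much smaller than the squared window width — identifies the sum asymptotically with $\P\bigl(S_N + X^{(1)} \in [mN - \epsilon N, mN + \epsilon N]\bigr)$, which tends to $1$ by the weak law of large numbers (Lemma \ref{lemmabk}). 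The main obstacle is precisely this uniform control inside the window: for $2 < \beta \leq 3$ the variance of $S_N^{(i)}$ grows polynomially with $N$, so naive pointwise total-variation comparisons between the truncated and untruncated sums are insufficient, and one must carefully combine Bennett's inequality, Lemma \ref{gc}, and a local limit theorem that is uniform in $i \in \mathcal{W}$.
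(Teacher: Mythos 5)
Your setup---Lagrange inversion applied to (\ref{generalz}), the probabilistic reading of the coefficients of $\ell_i(z)^N$, the localization to $i\approx(1-m)N$, and the bookkeeping that extracts $(1-m)^{-\beta}N^{-\beta}\zeta_0^{1-N}$---matches the paper's strategy, but two of your key steps have genuine gaps. First, on the complement of the window with $i=\Theta(N)$: each term of your sum already has size $\Theta(N^{-\beta}\zeta_0^{-N})\,q_{i,1}$ and there are $\Theta(N)$ such terms, so a uniform pointwise bound $q_{i,1}=o(1)$ only yields $o(N^{1-\beta}\zeta_0^{-N})$, which exceeds the target error by a factor of $N$. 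You would need either $q_{i,1}=o(1/N)$ uniformly or $\sum_i q_{i,1}=o(1)$ over the region, and the pointwise route fails: for $i<(1-m-\epsilon)N$ the event is an upper deviation of a sum of heavy-tailed variables, whose probability is genuinely of order $N^{2-\beta}$ (single big jump), not $o(1/N)$ when $2<\beta\le 3$; and Bennett's inequality with $b=i=\Theta(N)$ gives $\eta=N\epsilon/i=\Theta(1)$, hence only a decay of order $\lambda^{-\eta}\approx N^{-c\epsilon}$, far too weak. The paper's way out is the monotonicity in $i$ of $[\zeta^N]\{\Y_i(\zeta)\L_i(\zeta)^j\}$ (noted after (\ref{lhij})): it replaces every truncation level in a given region by a common one, so the sum of point probabilities collapses into a single tail probability, controlled by the Baum--Katz-type Lemma \ref{lemmabk} for $n\le N-N^\gamma$ (with the choice $\gamma>2/\beta$ to beat the polynomial prefactor) and by Bennett only where $i=\Theta(N^\gamma)$, $\gamma<1$ (Lemma \ref{lbern}), where it does give stretched-exponential decay. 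Relatedly, your two complementary cases ($i=\Theta(N)$ and $i\le N^\alpha$) leave the intermediate range $N^\alpha\lesssim i\lesssim \delta N$ uncovered; that is precisely where Lemma \ref{lbern} is needed.

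Second, inside the window you invoke a local central limit theorem, uniform in the truncation level, to show $\sum_{i\in\mathcal{W}}q_{i,1}\to1$. No such result is established in the paper or in your proposal (and for $2<\beta\le3$, where $V_i$ grows polynomially, it would be delicate); more importantly it is unnecessary. The same monotonicity gives a two-sided squeeze with no local analysis: for the upper bound, raise all truncation levels to $N$ and use $\sum_n[z^n]\{f_{N,1}(z)f(z)^N\}\le1$; for the lower bound, lower all truncation levels to the smallest one in the window, $N_+=N-\lfloor(m+\epsilon)N\rfloor-1$, so that the sum of point probabilities becomes the single probability $\P\left(\left|S_N^{(N_+)}/N+X^{(N_+,1)}/N-m\right|\le\epsilon\right)$, which tends to $1$ by Chebyshev's inequality together with Lemma \ref{gc}. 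Your asymptotics for the prefactors and your disposal of the $f_{i,2}$ term are correct, but without the monotonicity device both the error-region estimate and the window normalization are incomplete as written.
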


\begin {proof} In this proof we let $C,C_1,C_2, \ldots$ denote
positive constants 
independent of $N$ whose values may differ between equations. Define
\begin {eqnarray} \nonumber 
G_N(a,b) &=&  g(1)^{1-N} N^{\beta-1}
\sum_{a\leq n \leq b}w_{N-n}[z^{n}] \Bigg\{\ell_{N-n-1}(z)^N 
\\ \nonumber
&&\times~\Bigg( \frac{N-n}{\ell_{N-n-1}(z)-z\ell_{N-n-1}'(z)} + 
\frac{z^2\ell_{N-n-1}''(z)}{(\ell_{N-n-1}(z)-z\ell_{N-n-1}'(z))^2}
\Bigg)\Bigg\}. \\ \label{theG}
\end {eqnarray}
It follows from (\ref{generalz}),  (\ref{lhij}) and (\ref{theG}) that 
\begin {equation}
N^\beta\zeta_0^{N-1}Z_{1,N} = G_N(0,N-1).\label{theG2}
\end {equation}

The strategy of the proof is to split the sum over $n$ on the right
hand side of (\ref{theG2})
into four different parts.  We will see that it is only the region
around $n\approx mN$ which gives a nonvanishing contribution as
$N\to\infty$. 
Choose an $\epsilon > 0$ small enough and a 
$\gamma$ such that $2/\beta<\gamma<1$. 
Then we can write
\begin {eqnarray} \nonumber
N^\beta \zeta_0^{N-1}Z_{1,N} &=& G_N(0,\lfloor (m - \epsilon)N \rfloor) + 
G_N(\lfloor (m - \epsilon)N \rfloor+1,\lfloor (m + \epsilon)N \rfloor) \\  
\nonumber &+& G_N(\lfloor (m + \epsilon)N \rfloor+1,\lfloor N-N^{\gamma} 
\rfloor) +  
G_N(\lfloor N-N^{\gamma} \rfloor+1,N-1). \\ \label{Gsplit}
\end {eqnarray} 
We will show that as $N\longrightarrow\infty$ the second 
term on the right hand side of (\ref{Gsplit})
has a positive limit but the other terms converge to zero. 
To make the notation more compact we define
\begin {equation}
\Np = N-\lfloor(m + \epsilon)N\rfloor-1 \quad \quad \text{and}\quad \quad    
\Nm = N-\lfloor(m - \epsilon)N\rfloor.
\end {equation}

The first term on the right hand side in (\ref{Gsplit}) 
can be estimated from above as follows:
\begin {eqnarray} \nonumber
&& G_N(0,\lfloor (m - \epsilon)N \rfloor)  \\ \nonumber
 &&  \quad \quad \leq \left(\frac{N}{\Nm}\right)^{\beta-1} \sum_{n= 0}^{\lfloor (m - \epsilon)N \rfloor} [z^n]\left\lbrace f(z)^{N}\left(C_1f_{N,1}(z)+\frac{ C_2\ell_N''(1)}{N-n}f_{N,2}(z)\right)\right\rbrace \\ \nonumber
&& \quad \quad \leq  C_3 \P\left(\left|\frac{S_{N}+X^{(N,1)}}{N}-m\right| > \epsilon\right) + \frac{C_4 \ell_N''(1)}{N} \P\left(\left|\frac{S_{N}+X^{(N,2)}}{N}-m\right| > \epsilon\right). \\ \label{lastline}
\end {eqnarray}
where $S_N$ is defined by (\ref{rvdef}).
By Lemma \ref{split} we have for $i=1,2$

\begin {equation} \label{al}
\P\left(\left|\frac{S_{N}+X^{(N,i)}}{N} - m\right| > 
\epsilon \right) \leq \P\left(\left|\frac{S_{N}}{N} - m\right| > 
\epsilon/2 \right) + 
\P\left( X^{(N,i)} > N\epsilon/2 \right).
\end {equation}
This, combined with (\ref{sder}) and  Lemmas \ref{lemmabk} 
and \ref{gc}, shows that the 
two terms on the right hand side of (\ref{lastline}) go to 
zero as $N \longrightarrow\infty$.
\medskip
\medskip

We estimate the third term on the right hand side of 
(\ref{Gsplit}) from above as follows:

\begin {eqnarray} \nonumber
&& G_N(\lfloor (m + \epsilon)N \rfloor+1,\lfloor N-N^{\gamma} \rfloor) \leq \left(\frac{N}{N-\lfloor N-N^\gamma\rfloor}\right)^{\beta-1} \\ \nonumber
&&  \quad \quad \times ~\sum_{n = \lfloor(m + \epsilon)N\rfloor+1}^{\lfloor N-N^{\gamma}\rfloor} [z^n]\left\lbrace f(z)^{N}\left(C_1f_{N,1}(z)+\frac{ C_2\ell_N''(1)}{(N-\lfloor N-N^\gamma\rfloor)}f_{N,2}(z)\right)\right\rbrace \\ \nonumber
&& \leq C_3 N^{(1-\gamma)(\beta-1)} \P\left(\left|\frac{S_{N}+X^{(N,1)}}{N}-m\right| > \epsilon\right) \\ \label{abovethird}
&&  \quad \quad  + ~ C_4N^{(1-\gamma)(\beta-1)-\gamma}\ell''_N(1)\P\left(\left|\frac{S_{N}+X^{(N,2)}}{N}-m\right| > \epsilon\right). \label{rhs444}
\end {eqnarray}
Since $\gamma > 2/\beta$ it holds that $(1-\gamma)(\beta-1) < \beta -2$ 
and $(1-\gamma)(\beta-1) - \gamma < \beta - 3$. 
Then by (\ref{sder}), (\ref{al}) 
and Lemmas \ref{lemmabk} and \ref{gc} we see that last two terms on
the right hand side of (\ref{rhs444}) converge to 
zero as $N \longrightarrow\infty$.

\medskip
\medskip 

To estimate the fourth term of (\ref{Gsplit}) from 
above we first note that 
\begin {equation}
 [\zeta^N]\left\{\Y_i(\zeta)\right\} = 
 [\zeta^N]\left\{\frac{\partial}{\partial w_1}
 \L_i(\zeta)\right\} \leq \frac{N}{w_1}[\zeta^N]\left\{  \L_i(\zeta)\right\}
\end {equation}
and thus
\begin {equation} \label{mat}
G_N(a,b) \leq w_1^{-1}g(1)^{1-N}N^{\beta} 
\sum_{a\leq n \leq b}w_{N-n}(N-n)[z^{n}] 
\left\{\ell_{N-n-1}(z)^N \right\}
\end {equation}
for any $a,b$.
Using (\ref{mat}) for $N$ large enough and $\epsilon$ small enough
(but independent of $N$) we get
\begin {eqnarray} \nonumber
G_N(\lfloor N-N^{\gamma} \rfloor+1,N-1) &\leq& C_1 N^{\beta} \sum_{n= 
\lfloor N-N^{\gamma}\rfloor+1}^{N-1} [z^n]\left\lbrace f_{N-\lfloor N-N^{\gamma} 
\rfloor}(z)^{N}\right\rbrace \\ \nonumber
&=& C_1 N^{\beta} \P\left( \lfloor N-N^{\gamma}
\rfloor +1 \leq S^{(N-\lfloor N-N^{\gamma}
\rfloor)}_{N} \leq N-1 \right)  \\
\nonumber
&\leq& C_1 N^{\beta} \P\left(\frac{S^{(N-\lfloor N-N^{\gamma} \rfloor)}_{N}}{N}-m_{N-\lfloor 
N-N^{\gamma} \rfloor} \geq \epsilon\right)  \\ 
&\leq& C_1 N^{\beta} \exp\left(-C_2\epsilon N^{1-\gamma}\right) \label{abovefourth}
\end {eqnarray}
where in the last step we used Lemma \ref{lbern}. The last expression converges to 
zero as $N\to \infty$ since $\gamma < 1$. 

Finally, we show that the second term in (\ref{Gsplit}) 
has a nonzero contribution as $N \to \infty$. By (\ref{bw}) 
we see that for $n$ large enough we have 
\begin{equation}
(1-\epsilon)n^{-\beta} \leq w_n \leq (1+\epsilon)n^{-\beta}.
\end{equation}
We then get the upper bound
\begin {eqnarray} \nonumber
&&G_N(\lfloor (m - \epsilon)N \rfloor+1,\lfloor (m + \epsilon)N \rfloor) \leq (1+\epsilon)g(1) \left(\frac{N}{\Np}\right)^{\beta-1} \\ \nonumber
&& \quad \quad  \times ~\Bigg(\frac{1}{\ell_N(1)-\ell_N'(1)}\sum_{n=\lfloor (m - \epsilon)N\rfloor + 1}^{\lfloor(m + \epsilon)N\rfloor}  [z^n]\left\lbrace f_{N,1}(z)f(z)^{N}\right\rbrace \\ \nonumber
&& \quad \quad  + ~\frac{ \ell_N''(1)}{(\ell_N(1)- \ell_N'(1))^2 \Np}\sum_{n=\lfloor (m - \epsilon)N\rfloor + 1}^{\lfloor(m + \epsilon)N\rfloor}  [z^n]\left\lbrace f_{N,2}(z)f(z)^{N}\right\rbrace\Bigg)\\ \nonumber
&&  \leq (1+\epsilon)g(1) \left(\frac{N}{\Np}\right)^{\beta-1}\left(\frac{1}{\ell_N(1)-\ell_N'(1)}+\frac{ \ell_N''(1)}{(\ell_N(1)- \ell_N'(1))^2 \Np}\right)\\
&&  \longrightarrow \frac{(1+\epsilon)(1-(m+\epsilon))^{1-\beta}}{1-m} \label{abovesecond}
\end {eqnarray}
as $N \longrightarrow \infty$ by (\ref{sder}). In a similar way we get the lower bound
\begin {eqnarray} \nonumber
&&G_N(\lfloor (m - \epsilon)N \rfloor+1,\lfloor 
(m + \epsilon)N \rfloor) \geq (1-\epsilon) g(1) 
\left(\frac{N}{\Nm}\right)^{\beta-1} 
\left(\frac{\ell_{\Np}(1)}{g(1)}\right)^{N}\\ \nonumber
&& \times ~\Bigg(\frac{1}{\ell_{\Np}(1)-\ell_{\Np}'(1)}
\sum_{n=\lfloor (m - \epsilon)N\rfloor + 1}^{\lfloor(m + \epsilon)N\rfloor} 
[z^n]\left\lbrace f_{{\Np},1}(z)f_{\Np}(z)^{N}\right\rbrace \\ \nonumber
&& + ~\frac{ \ell_{\Np}''(1)}{(\ell_{\Np}(1)- 
\ell_{\Np}'(1))^2 \Nm}\sum_{n=\lfloor (m - \epsilon)N\rfloor + 1}^{\lfloor(m 
+ \epsilon)N\rfloor}   [z^n]\left\lbrace f_{{\Np},2}(z)f_{\Np}(z)^{N}
\right\rbrace\Bigg). \\ \label{belowsecond1}
\end {eqnarray}
By (\ref{sder}) the second term in the parenthesis above converges to zero as 
$N\longrightarrow\infty$. Looking at the first term we find that
\begin {equation} \label{belowsecond2}
 \frac{(1-\epsilon)g(1)}{\ell_{\Np}(1)-\ell_{\Np}'(1)} 
 \left(\frac{N}{\Nm}\right)^{\beta-1} \longrightarrow 
 \frac{(1-\epsilon)(1-(m-\epsilon))^{1-\beta}}{1-m}
\end {equation}
as $N\longrightarrow\infty$ and
\begin{equation} \label{belowsecond3}
\left(\frac{\ell_{\Np}(1)}{g(1)}\right)^{N} = 
\left(1-\frac{1}{g(1)}\sum_{n=\Np}^\infty w_{n+1} \right)^{N} = 
\left(1 + \Theta(N^{-\beta+1})\right)^{N} \longrightarrow 1 
\end{equation}
as $N\longrightarrow\infty$ since $\beta > 2$. Finally,
we have for $N$ large enough

\begin {eqnarray} \nonumber
&&\sum_{n=\lfloor (m - \epsilon)N\rfloor + 1}^{\lfloor(m + \epsilon)N\rfloor}   [z^n]\left\lbrace f_{{\Np},1}(z)f_{\Np}(z)^{N}\right\rbrace \\ \nonumber
&=& \P\left(\left|\frac{S^{(\Np)}_{N}+X^{(\Np,1)}}{N} - m\right| \leq \epsilon\right) \\ \nonumber
&\geq& \P\left(\left|\frac{S^{(\Np)}_{N}+X^{(\Np,1)}}{N} - m_{\Np}\right| \leq \epsilon/2\right) \\ \nonumber
&\geq& \P\left(\left|\frac{S^{(\Np)}_{N}}{N} - m_{\Np}\right| \leq \epsilon/4\right)\P\left(X^{(\Np,1)} \leq N \epsilon/4\right)\\ \label{belowsecond4}
&\geq& \left(1-\frac{V_{\Np}}{N\left(\epsilon/4\right)^2}\right)
\left(1-C N^{2-\beta}\right)
\end {eqnarray}
where in the second last step we used Lemma \ref{split} 
and in the last step we used Chebyshev's inequality and 
Lemma \ref{gc}. As $N \longrightarrow \infty$ it is clear 
from (\ref{sigmasq}) that $V_{\Np}/N \longrightarrow 0$ 
and therefore the last expression converges to $1$.

From the above estimates (\ref{lastline}), (\ref{abovethird}) 
and (\ref{abovefourth}--\ref{belowsecond4}) we find that 

\begin {eqnarray}
\frac{(1-\epsilon)\left(1-\left(m-\epsilon \right)\right)^{1-\beta}}{1-m} 
&\leq& \liminf_{N\rightarrow\infty} N^\beta \zeta_0^{N-1}Z_{1,N}
\nonumber\\
&\leq& \limsup_{N\rightarrow\infty} 
N^\beta \zeta_0^{N-1}Z_{1,N} 
\leq \frac{(1+\epsilon)\left(1-\left(m+\epsilon
\right)\right)^{1-\beta}}{1-m}.\nonumber \\
\label{infsup}
\end {eqnarray}
Since this holds for all $\epsilon > 0$, small enough, we have

\begin {equation}
\lim_{N\rightarrow\infty}  N^\beta \zeta_0^{N-1}Z_{1,N} = 
\left(1-m\right)^{-\beta}
\end {equation}
which completes the proof. \qed
\end {proof}

\subsection {Estimate on $E_N$} \label{ss:EN}
We now estimate $E_N$, the remaining contribution to $Z_N$. 
Note that $\L_{i+1}(\zeta)-\L_{i}(\zeta)$ is the grand canonical 
partition function for trees which have at least one vertex of 
degree $i+1$ and no vertex of degree greater than $i+1$. 
Consider a tree which has $\geq 2$ vertices of maximal degree $i+1$. 
Denote the two maximal degree vertices closest to the root and 
second closest to the root by $s_1$ and $s_2$, respectively. These vertices 
are not necessarily unique and can be at the same distance from the
root, but for the following purpose we can choose any two we like. 
Denote the path from the root to $s_2$ by $(r,s_2)$.   Then either
$s_1$ is on $(r,s_2)$ or it is not so we can write

\begin{figure} [!b]
\centerline{\scalebox{0.8}{\includegraphics{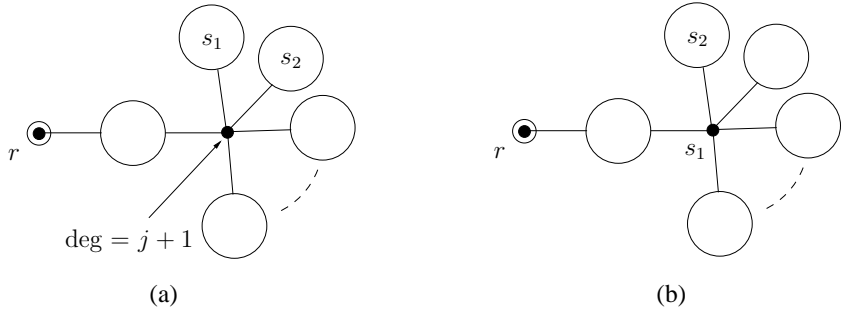}}}
\caption{{\bf a)} The case when $s_1 \notin (r,s_2)$. At least 
two balloons attached to the vertex of degree $j+1$ 
(excluding the rooted one) indicated in the figure have to have 
at least one vertex of degree $i+1$, namely $s_1$ and $s_2$. 
{\bf b)} The case when $s_1 \in (r,s_2)$. At least one balloon 
attached to the vertex $s_1$ (excluding the rooted one) 
has to have at least one vertex of degree $i+1$ , namely $s_2$. } \label{e1}
\end{figure}

\begin {eqnarray}\nonumber
 E_N &=& \sum_{i=0}^{N-1}\Bigg(\sum_{j=0}^{i-1}w_{j+1}  [\zeta^{N}]\left\lbrace \Y_{i}(\zeta)\sum_{n=2}^j \binom{j}{n}\underbrace{\left(\L_{i+1}(\zeta)-\L_{i}(\zeta)\right)^n}_{\text{$s_1$ and $s_2$ in here}} \L_i(\zeta)^{j-n}\right\rbrace \\ \nonumber
&& \quad \quad \quad + ~\underbrace{w_{i+1}}_{s_1}  [\zeta^{N}]\left\lbrace \Y_{i}(\zeta)\sum_{n=1}^i \binom{i}{n}\underbrace{\left(\L_{i+1}(\zeta)-\L_{i}(\zeta)\right)^n}_{\text{$s_2$ in here}} \L_i(\zeta)^{i-n}\right\rbrace \Bigg). \\ \label{errorterm}
\end {eqnarray}
The outermost sum is over all possible maximal degrees. The first 
term in the brackets takes care of the case when $s_1 \notin (r,s_2)$. 
Then $j+1$ is the degree of the vertex where $(r,s_1)$ and $(r,s_2)$ 
start to differ. At least two of the subtrees attached to this vertex 
(excluding the rooted one) have to have at least one vertex of degree 
$i+1$, see Figure \ref{e1} (a). The second term in the brackets takes 
care of the case when $s_1 \in (r,s_2)$. At least one of the subtrees 
attached to $s_1$ (excluding the rooted one) has to have at least 
one vertex of degree $i+1$, see Figure \ref{e1} (b).

\begin {lemma} \label{ie1}
For any $i$ and $N$ we have
\begin {equation}
[\zeta^N] \left\lbrace \L_{i+1}(\zeta)-\L_{i}(\zeta) 
\right\rbrace \leq \frac{w_{i+1}N}{i}[\zeta^N] \left\lbrace 
\zeta \L_{i+1}(\zeta)^i \right\rbrace. 
\end {equation}
\end {lemma}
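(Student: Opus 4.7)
The plan is to apply Lagrange's inversion formula twice, exploiting the identity $\ell_{i+1}(z)-\ell_i(z)=w_{i+1}z^i$ and the coefficient-wise monotonicity $\ell_i(z)\leq\ell_{i+1}(z)$, which holds because all branching weights are non-negative.

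First, since both $\L_i(\zeta)$ and $\L_{i+1}(\zeta)$ satisfy an equation of the form $\L_j(\zeta)=\zeta\ell_j(\L_j(\zeta))$, I would apply the formula (\ref{lagrangeinv}) with $h(z)=z$ to each of them and subtract, which gives
\[
[\zeta^N]\{\L_{i+1}(\zeta)-\L_i(\zeta)\}=\frac{1}{N}[z^{N-1}]\{\ell_{i+1}(z)^N-\ell_i(z)^N\}.
\]

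Next, I would factor the difference of $N$-th powers by the telescoping identity $a^N-b^N=(a-b)\sum_{k=0}^{N-1}a^{N-1-k}b^k$ with $a=\ell_{i+1}(z)$ and $b=\ell_i(z)$, and bound $\ell_i(z)^k\leq\ell_{i+1}(z)^k$ coefficient-wise. This yields
\[
\ell_{i+1}(z)^N-\ell_i(z)^N\leq Nw_{i+1}z^i\,\ell_{i+1}(z)^{N-1},
\]
and extracting the coefficient of $z^{N-1}$ gives
\[
[\zeta^N]\{\L_{i+1}(\zeta)-\L_i(\zeta)\}\leq w_{i+1}[z^{N-1-i}]\{\ell_{i+1}(z)^{N-1}\}.
\]

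Finally, I would apply Lagrange's formula once more, now with $h(z)=z^i$ applied to $\L_{i+1}$, in order to rewrite the right-hand side in terms of the quantity appearing in the statement of the lemma. Explicitly,
\[
[\zeta^N]\{\zeta\L_{i+1}(\zeta)^i\}=[\zeta^{N-1}]\{\L_{i+1}(\zeta)^i\}=\frac{i}{N-1}[z^{N-1-i}]\{\ell_{i+1}(z)^{N-1}\}.
\]
Combining the last two displays and using $(N-1)/i\leq N/i$ yields the claimed inequality. There is no real obstacle; the only point requiring any care is the coefficient-wise monotonicity $\ell_i\leq\ell_{i+1}$ together with positivity of all relevant coefficients so that the bounds survive extraction of coefficients.
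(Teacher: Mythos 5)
Your proposal is correct and follows essentially the same route as the paper: Lagrange inversion with $h(z)=z$, the telescoping factorization of $\ell_{i+1}(z)^N-\ell_i(z)^N$ combined with $\ell_{i+1}(z)-\ell_i(z)=w_{i+1}z^i$ and coefficient-wise monotonicity, and then a second application of Lagrange inversion with $h(z)=z^i$ to recover $[\zeta^N]\{\zeta\L_{i+1}(\zeta)^i\}$. The paper is slightly more terse (it leaves the final Lagrange step implicit), but the argument is the same.
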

\begin {proof}
Use the Lagrange inversion theorem to obtain
\begin {eqnarray}
 [\zeta^N] \left\lbrace \L_{i+1}(\zeta)-\L_{i}(\zeta) \right\rbrace &=& 
 \frac{1}{N} [z^{N-1}] \left\lbrace \ell_{i+1}(z)^N - 
 \ell_{i}(z)^N\right\rbrace \nonumber \\
&=& \frac{1}{N}[z^{N-1}]\left\lbrace \left(\ell_{i+1}(z)-
\ell_{i}(z)\right)\sum_{N_1+N_2 = N-1} \ell_{i+1}(z)^{N_1}\ell_{i}(z)^{N_2} 
\right\rbrace \nonumber \\
&\leq& w_{i+1}[z^{N-i-1}]\left\lbrace \ell_{i+1}(z)^{N-1} \right\rbrace
\label{bigbrace}
\end {eqnarray}
Now use the Lagrange inversion theorem on the right hand side of 
(\ref{bigbrace}) to obtain the result. \qed
\end {proof}
\begin {lemma} \label{uone}
For any $N$ we have
\begin {equation}
  E_N \leq 2 N^2\sum_{i=0}^{N-1}w_{i+1}^2 [\zeta^{N-1}] 
  \left\lbrace \Y_{i+1}(\zeta) \L_{i+1}(\zeta)^{2i-1}\right\rbrace.
  \label{ENest}
 \end {equation}
\end {lemma}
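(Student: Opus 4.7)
The plan is to estimate separately the two bracketed contributions to $E_N$ in (\ref{errorterm}) -- call them $A_i$ (corresponding to $s_1 \notin (r,s_2)$) and $B_i$ (corresponding to $s_1 \in (r,s_2)$) -- and show that $A_i + B_i$ is bounded by the $i$-th summand on the right-hand side of the claim.

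I would first collapse the inner binomial sums. Taking $a = \L_{i+1} - \L_i$ and $b = \L_i$ so that $a+b = \L_{i+1}$, the binomial theorem gives
\begin{equation*}
\sum_{n=1}^{i}\binom{i}{n}(\L_{i+1}-\L_i)^n\L_i^{i-n} = \L_{i+1}^{i} - \L_i^{i}
\end{equation*}
for the $B_i$ sum, and
\begin{equation*}
\sum_{n=2}^{j}\binom{j}{n}(\L_{i+1}-\L_i)^n\L_i^{j-n} = \L_{i+1}^{j} - \L_i^{j} - j(\L_{i+1}-\L_i)\L_i^{j-1}
\end{equation*}
for the $A_i$ sum. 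Iterating the telescoping identity $\L_{i+1}^{m} - \L_i^{m} = (\L_{i+1}-\L_i)\sum_{k=0}^{m-1}\L_{i+1}^{m-1-k}\L_i^{k}$ once yields the case-$B$ bound $\L_{i+1}^{i} - \L_i^{i} \leq i(\L_{i+1}-\L_i)\L_{i+1}^{i-1}$; iterating it twice together with $\L_i \leq \L_{i+1}$ (coefficient-wise) yields the polynomial-in-$j$ case-$A$ bound $\L_{i+1}^{j} - \L_i^{j} - j(\L_{i+1}-\L_i)\L_i^{j-1} \leq \binom{j}{2}(\L_{i+1}-\L_i)^{2}\L_{i+1}^{j-2}$. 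Obtaining polynomial rather than exponential ($2^{j}$) growth in $j$ is the crucial combinatorial observation.

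Next I would apply Lemma \ref{ie1} in the convolutional form $[\zeta^{N}]\{F(\L_{i+1}-\L_i)\} \leq (w_{i+1}N/i)[\zeta^{N-1}]\{F\L_{i+1}^{i}\}$, valid for any formal series $F$ with non-negative coefficients (obtained by writing the coefficient as a convolution and applying Lemma \ref{ie1} to each $[\zeta^{k}](\L_{i+1}-\L_i)$). Applied once to $B_i$ and using $\Y_i \leq \Y_{i+1}$ (coefficient-wise), this gives $B_i \leq w_{i+1}^{2}\,N\,[\zeta^{N-1}]\{\Y_{i+1}\L_{i+1}^{2i-1}\}$, which already matches the target form. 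Applied twice to $A_i$ and combined with the identity $\sum_{j=2}^{i-1}w_{j+1}\binom{j}{2}\L_{i+1}^{j} = \L_{i+1}^{2}\ell_i''(\L_{i+1})/2$, it gives
\begin{equation*}
A_i \leq \frac{w_{i+1}^{2}N^{2}}{2 i^{2}}[\zeta^{N-2}]\{\Y_i\L_{i+1}^{2i}\ell_i''(\L_{i+1})\}.
\end{equation*}

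The main obstacle is converting this last expression into the target shape $[\zeta^{N-1}]\{\Y_{i+1}\L_{i+1}^{2i-1}\}$. I would first use the coefficient-wise bound $z\ell_i''(z) \leq i\,\ell_{i+1}'(z)$, which follows from $n(n-1) \leq i\,n$ for $n \leq i$, to obtain $\L_{i+1}^{2i}\ell_i''(\L_{i+1}) \leq i\,\L_{i+1}^{2i-1}\ell_{i+1}'(\L_{i+1})$. Next, the defining equation of $\Y_{i+1}$ rearranges to $\zeta\Y_{i+1}\ell_{i+1}'(\L_{i+1}) = \Y_{i+1} - \zeta \leq \Y_{i+1}$ coefficient-wise, and combined with $\Y_i \leq \Y_{i+1}$ this yields $\zeta\Y_i\ell_{i+1}'(\L_{i+1}) \leq \Y_{i+1}$. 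Substituting and absorbing the extra $\zeta$ via $[\zeta^{N-2}]\{X\} = [\zeta^{N-1}]\{\zeta X\}$ produces $A_i \leq (w_{i+1}^{2}N^{2}/(2i))[\zeta^{N-1}]\{\Y_{i+1}\L_{i+1}^{2i-1}\} \leq w_{i+1}^{2}N^{2}[\zeta^{N-1}]\{\Y_{i+1}\L_{i+1}^{2i-1}\}$ for $i \geq 1$, and trivially $A_0 = 0$. Since $N + N^{2} \leq 2N^{2}$ for every $N \geq 1$, combining with the $B_i$ bound and summing over $i$ completes the proof.
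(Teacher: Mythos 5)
Your proof is correct and follows essentially the same route as the paper: the same binomial collapse and double telescoping to reach $j(j-1)(\L_{i+1}-\L_i)^2\L_{i+1}^{j-2}$, the same double application of Lemma \ref{ie1} in convolutional form, and the same need to absorb the resulting $\ell_i''(\L_{i+1})$ factor. The only (cosmetic) difference is in that last absorption: the paper uses $\zeta\ell_i''(\L_{i+1})\L_{i+1}/i^2\leq\zeta\ell_{i+1}(\L_{i+1})/\L_{i+1}=1$ via the fixed-point equation for $\L_{i+1}$, whereas you route through $z\ell_i''(z)\leq i\,\ell_{i+1}'(z)$ and the functional equation $\zeta\Y_{i+1}\ell_{i+1}'(\L_{i+1})=\Y_{i+1}-\zeta$; both are valid.
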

\begin {proof}
 First note that
\begin {eqnarray}
 &&\sum_{n=2}^j \binom{j}{n}\left(\L_{i+1}(\zeta)-\L_{i}(\zeta)\right)^n 
 \L_i(\zeta)^{j-n}\nonumber\\ 
 &=& \L_{i+1}(\zeta)^j-\L_{i}(\zeta)^j - j(\L_{i+1}(\zeta)-\L_i(\zeta))
 \L_i(\zeta)^{j-1} \nonumber\\
&=& (\L_{i+1}(\zeta)-\L_i(\zeta))\left(\sum_{j_1+j_2=j-1}
\L_{i+1}(\zeta)^{j_1}\L_i(\zeta)^{j_2}-j\L_i(\zeta)^{j-1}\right)
\nonumber\\
&\leq& j(\L_{i+1}(\zeta)-\L_i(\zeta))\left(\L_{i+1}(\zeta)^{j-1}-
\L_i(\zeta)^{j-1}\right) \nonumber\\
&=& j (\L_{i+1}(\zeta)-\L_i(\zeta))^2 \sum_{j_1+j_2=j-2}
\L_{i+1}(\zeta)^{j_1}\L_i(\zeta)^{j_2} \nonumber\\
&\leq& j(j-1)(\L_{i+1}(\zeta)-\L_i(\zeta))^2
\L_{i+1}(\zeta)^{j-2}.\label{longest} 
\end {eqnarray}
It is also clear that the above inequality holds inside 
$[\zeta^N]\left\lbrace \cdot \right\rbrace$ brackets. 
Therefore the sum over $j$ in (\ref{errorterm}) is estimated 
from above by 
\begin {eqnarray}
&& \sum_{j=0}^{i-1}w_{j+1}  [\zeta^{N}]\left\lbrace \Y_{i}(\zeta)\sum_{n=2}^j 
\binom{j}{n}\left(\L_{i+1}(\zeta)-\L_{i}(\zeta)\right)^n  
\L_i(\zeta)^{j-n}\right\rbrace \nonumber\\
&\leq& [\zeta^N] \left\lbrace   \Y_{i}(\zeta) (\L_{i+1}(\zeta)-
\L_i(\zeta))^2 \ell_{i}''(\L_{i+1}(\zeta))\right\rbrace.\label{bibi}
\end {eqnarray}
Now use Lemma \ref{ie1} to get
\begin {eqnarray}
&& [\zeta^N] \left\lbrace   \Y_{i}(\zeta) (\L_{i+1}(\zeta)-\L_i(\zeta))^2 
\ell_{i}''(\L_{i+1}(\zeta))\right\rbrace \nonumber\\
&=& \sum_{N_1+N_2+N_3 = N} [\zeta^{N_1}]\left\lbrace   
\Y_{i}(\zeta)  \ell_{i}''(\L_{i+1}(\zeta))\right\rbrace[\zeta^{N_2}]
\left\lbrace \L_{i+1}(\zeta)-\L_i(\zeta)\right\rbrace[\zeta^{N_3}]
\left\lbrace  \L_{i+1}(\zeta)-\L_i(\zeta)\right\rbrace \nonumber\\
&\leq& \frac{w_{i+1}^2}{i^2}N^2\sum_{N_1+N_2+N_3 = N} [\zeta^{N_1}]
\left\lbrace   \Y_{i}(\zeta)  \ell_{i}''(\L_{i+1}(\zeta))
\right\rbrace[\zeta^{N_2}]\left\lbrace \zeta \L_{i+1}(\zeta)^i
\right\rbrace[\zeta^{N_3}]\left\lbrace  \zeta \L_{i+1}(\zeta)^i
\right\rbrace \nonumber\\
&=& \frac{w_{i+1}^2}{i^2}N^2[\zeta^{N}]\left\lbrace  \zeta^2 \Y_{i}(\zeta)  
\ell_{i}''(\L_{i+1}(\zeta)) \L_{i+1}(\zeta)^{2i}\right\rbrace.
\label{diffest}
\end {eqnarray}
Next observe that
\begin {equation}
\zeta \frac{\ell_{i}''(\L_{i+1}(\zeta))}{i^2} \L_{i+1}(\zeta) \leq \frac{\zeta \ell_{i+1}(\L_{i+1}(\zeta))}{\L_{i+1}(\zeta)} = 1.
\end {equation}
Combining the above results we have the estimate
\begin {eqnarray}
&& \sum_{j=0}^{i-1}w_{j+1}  [\zeta^{N}]\left\lbrace \Y_{i}(\zeta)
\sum_{n=2}^j \binom{j}{n}\left(\L_{i+1}(\zeta)-\L_{i}(\zeta)\right)^n  
\L_i(\zeta)^{j-n}\right\rbrace \nonumber\\
&\leq&  w_{i+1}^2 N^2[\zeta^{N-1}] \left\lbrace \Y_{i+1}(\zeta) 
\L_{i+1}(\zeta)^{2i-1}\right\rbrace \label{finalest} 
\end {eqnarray}
We get precisely the same estimate for the term in the second line in 
(\ref{errorterm}) (the calculations are even simpler) except 
that it is of order $N$ smaller and (\ref{ENest}) follows.  \qed
\end {proof}
The above lemma implies the following result.
\begin {lemma} \label{ehverfur}
\begin {equation}
N^\beta \zeta_0^N E_N \longrightarrow 0 
\quad \quad \text{as} \quad \quad N \longrightarrow \infty. 
\end {equation}
\end {lemma}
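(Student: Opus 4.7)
The strategy is to apply Lagrange's inversion formula to the coefficient on the right-hand side of (\ref{ENest}) exactly as was done for $Z_{1,N}$ in the proof of Lemma \ref{zsamleitid}, and then to re-use the same truncation/probabilistic machinery with only cosmetic changes. Using the identity $\Y_{i+1}(\zeta)=\L_{i+1}(\zeta)/(\ell_{i+1}(\L_{i+1}(\zeta))-\L_{i+1}(\zeta)\ell_{i+1}'(\L_{i+1}(\zeta)))$, which follows by substituting (\ref{eqtreelower}) into (\ref{twopoint}), I apply (\ref{lagrangeinv}) with $h(z)=z^{2i}/(\ell_{i+1}(z)-z\ell_{i+1}'(z))$ to obtain the analogue of (\ref{lhij}),
\begin{eqnarray*}
[\zeta^{N-1}]\{\Y_{i+1}(\zeta)\L_{i+1}(\zeta)^{2i-1}\} &=& \frac{1}{N-1}[z^{N-2i-1}]\Bigg\{\Bigg(\frac{2i}{\ell_{i+1}(z)-z\ell_{i+1}'(z)}\\ && +\frac{z^2\ell_{i+1}''(z)}{(\ell_{i+1}(z)-z\ell_{i+1}'(z))^2}\Bigg)\ell_{i+1}(z)^{N-1}\Bigg\}.
\end{eqnarray*}
The exponent of $\L_{i+1}$ is now $2i-1$ rather than $i$, so the saddle of the resulting sum sits at $N-2i-1\approx mN$, that is at $i\approx (1-m)N/2$ rather than $(1-m)N$.

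Next I split the sum over $i$ in (\ref{ENest}) into four regimes parallel to (\ref{Gsplit}): $i\leq (1-m-\epsilon)N/2$, $i\in ((1-m-\epsilon)N/2,(1-m+\epsilon)N/2]$, $i\in ((1-m+\epsilon)N/2,(N-N^{\gamma})/2]$ and $i>(N-N^{\gamma})/2$, with $2/\beta<\gamma<1$ and $\epsilon$ small. In each of the three off-peak regimes I invoke the estimates from the proof of Lemma \ref{zsamleitid} essentially verbatim, using Lemmas \ref{lemmabk}, \ref{lbern}, \ref{gc} and \ref{split} together with Bennett's inequality. Those bounds decay super-polynomially in the far tail and with a power strictly larger than $\beta-2$ in the near-peak error regions, so the prefactor $N^2$ in (\ref{ENest}) is harmlessly absorbed.

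For the central regime the weight $w_{i+1}^2\sim ((1-m)N/2)^{-2\beta}$ introduces an extra factor of order $N^{-\beta}$ compared with the corresponding computation for $Z_{1,N}$. The analogues of (\ref{abovesecond}) and (\ref{belowsecond1})--(\ref{belowsecond4}) then yield
\[
\sum_{i \text{ central}} w_{i+1}^2\,[\zeta^{N-1}]\{\Y_{i+1}(\zeta)\L_{i+1}(\zeta)^{2i-1}\}\;\leq\; C\,N^{-2\beta}\,\zeta_0^{1-N}.
\]
Feeding this into (\ref{ENest}) gives $E_N\leq C N^{2-2\beta}\zeta_0^{1-N}$, and hence $N^\beta\zeta_0^N E_N\leq C N^{2-\beta}$, which tends to $0$ because we are in the subcritical nongeneric phase where $\beta>2$.

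The only genuine obstacle is bookkeeping: one has to verify that with the shifted saddle at $(1-m)N/2$ and the exponent $2i-1$ replacing $i$, the probabilistic bounds still apply with the same controlled orders in $N$; since the relevant partial sums $S_{N-1}^{(i+1)}$ are governed by the same distributions as those appearing in Lemma \ref{zsamleitid}, this check is mechanical. The decisive gain is the extra $N^{-\beta}$ arising from the squared weight $w_{i+1}^2$, which defeats the $N^2$ prefactor precisely under the standing assumption $\beta>2$.
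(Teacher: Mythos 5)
Your argument is correct and is essentially the paper's own proof: the paper simply observes that the sum in (\ref{ENest}) has the same form as $Z_{1,N}$ with $\beta$ replaced by $2\beta$ and invokes the asymptotics of Lemma \ref{zsamleitid}, and you have unpacked exactly that reduction, correctly locating the shifted saddle at $i\approx(1-m)N/2$ and the decisive extra factor $N^{-\beta}$ coming from $w_{i+1}^2$, which is what defeats the $N^{2}$ prefactor since $\beta>2$. One bookkeeping remark: the regime that genuinely needs Bennett's inequality is \emph{small} $i$ (where $w_{i+1}^2=O(1)$ provides no decay), so the $N^{\gamma}$ cutoff belongs at $i\leq N^{\gamma}$ rather than at $i>(N-N^{\gamma})/2$; in your fourth regime the weight $w_{i+1}^2=\Theta(N^{-2\beta})$ already suffices, so this misplacement does not affect the substance.
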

\begin {proof}
 By Lemma \ref{uone}
\begin {equation}
 N^\beta \zeta_0^N E_N \leq 2 N^{\beta+2} \zeta_0^N 
 \sum_{i=0}^{N-1}w_{i+1}^2 [\zeta^{N-1}] \left\lbrace \Y_{i+1}(\zeta) 
 \L_{i+1}(\zeta)^{2i-1}\right\rbrace. 
\end {equation}
The sum on the right hand side has the same form as $Z_{1,N}$ with $\beta$ replaced by $2\beta$, cf.~Equation (\ref{generalz}). Equation (\ref{gzr}), which describes the asymptotic behaviour of $Z_{1,N}$, can therefore be applied to show that the right hand side is  $o(N^{2-\beta})$.  Since $\beta > 2$, this converges to zero as $N \longrightarrow \infty$.  \qed
\end {proof}
Combining Lemmas \ref{zsamleitid} and \ref{ehverfur} 
completes the proof of Theorem \ref{th1}.
\subsection{Generalization of $Z_N$} 
For technical reasons, which will be made clear in the next section, 
we need to generalize the sequence $Z_N$ as we now describe. 
If $r$ is the root of a tree we denote its unique nearest neighbour by
$s$.   Define
\begin {equation}
Z_N^{(R)} = \sum_{\tau \in \Gamma_N} w_{\sigma_s+R-1} 
\prod_{i\in V(\tau)\setminus \{r,s\}} w_{\sigma_i}.
\end {equation}
In analogy with (\ref{grandcanonical}) and (\ref{potential}), define the generating functions
\begin {equation}
 \Z(\zeta,R) = \sum_{N=1}^\infty Z_N^{(R)} \zeta^N
\end {equation}
and
\begin {equation}
 g_R(z) = \sum_{n=0}^\infty w_{n+R}z^n.
\end {equation}
Clearly $Z_N = Z_N^{(1)}, \Z(\zeta) = \Z(\zeta,1)$ 
and $g(z) = g_1(z)$.  By the same arguments as for (\ref{eqtree}) 
we find the relation
\begin {equation}
 \Z(\zeta,R) = \zeta g_R(\Z(\zeta)).
\end {equation}
Let $\Z_{0,R} = \Z(\zeta_0,R)$.
The following Lemma is a generalization of Theorem \ref{th1}.
\begin{lemma} \label{l:ZRN}
For the branching weights (\ref{bw}) which satisfy 
(\ref{ngcondition}) it holds that
 \begin {equation} 
 Z_{N}^{(R)} = \left(1-m+\frac{g_{R}'(1)}{g(1)}\right)(1-m)^{-\beta} 
 N^{-\beta} \zeta_0^{1-N}\left(1+o(1)\right).
\end {equation}
\end{lemma}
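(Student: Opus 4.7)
The plan is to mirror the proof of Theorem \ref{th1}: the only structural change in $Z_N^{(R)}$ is that the vertex $s$ adjacent to the root carries the shifted weight $w_{\sigma_s+R-1}$ instead of $w_{\sigma_s}$. I would first decompose
$$Z_N^{(R)} = Z_{1,N}^{(R)} + E_N^{(R)},$$
where $Z_{1,N}^{(R)}$ collects trees with a unique vertex of maximal degree and $E_N^{(R)}$ the rest, and then further split $Z_{1,N}^{(R)}=A_N+B_N$ according to whether the unique large vertex is $s$ itself (Case A) or not (Case B).

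Case A is combinatorially immediate: if $s$ has degree $i+1$ then its weight is $w_{i+R}$ and the $i$ subtrees hanging from $s$ have maximal degree at most $i$, giving
$$A_N = \sum_{i=1}^{N-1} w_{i+R}\,[\zeta^N]\bigl\{\zeta\,\L_i(\zeta)^{i}\bigr\}.$$
In Case B the large vertex lies at distance $\geq 2$ from the root, so $s$ is on the path from the root to it. The path decomposition leading to (\ref{twopointlength})--(\ref{twopoint}) still applies, except that at $s$ the local factor $\ell_i'(\L_i(\zeta))$ is replaced by $\ell_{i,R}'(\L_i(\zeta))$, where $\ell_{i,R}(z)=\sum_{n=0}^{i-1}w_{n+R}z^n$. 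Summing over possible positions of the large vertex yields
$$B_N = \sum_{i=1}^{N-1} w_{i+1}\,[\zeta^N]\bigl\{\zeta\,\ell_{i,R}'(\L_i(\zeta))\,\Y_i(\zeta)\,\L_i(\zeta)^{i}\bigr\}.$$

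For both $A_N$ and $B_N$ I would run the argument of Lemma \ref{zsamleitid} essentially verbatim, using Lagrange inversion to convert the coefficient extractions into sums of the form $[z^n]\{\cdots\,\ell_i(z)^{N-1}\}$ and then applying the four-region splitting (\ref{Gsplit}) together with Lemmas \ref{lemmabk}, \ref{lbern} and \ref{gc} to localize the sum on $i\sim(1-m)N$. In $B_N$ the extra factor $\zeta\,\ell_{i,R}'(\L_i(\zeta))$ tends to $\zeta_0\,g_R'(1)=g_R'(1)/g(1)$ as $\zeta\to\zeta_0$ and $i\to\infty$, while the rest of the sum mirrors $Z_{1,N}$; this gives $B_N\sim (g_R'(1)/g(1))\,(1-m)^{-\beta}N^{-\beta}\zeta_0^{1-N}$. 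For $A_N$ one observes that Case A is precisely the $j=1$ component of the path generating function $\Y_i$, whose relative weight at the dominant point is $\zeta/\Y_i(\zeta)\to 1-\zeta_0\,g'(1)=1-m$; together with $w_{i+R}\sim w_{i+1}$ in the concentration window, this produces $A_N\sim (1-m)(1-m)^{-\beta}N^{-\beta}\zeta_0^{1-N}$.

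Finally, $E_N^{(R)}$ is controlled exactly as in Lemmas \ref{ie1}--\ref{ehverfur}; the shift $R$ affects only the isolated vertex $s$, multiplying each intermediate bound by at most a bounded factor since $g_R(1)$ and $g_R'(1)$ are finite for $\beta>2$, so that $N^\beta\zeta_0^N E_N^{(R)}\to 0$ continues to hold. Summing the asymptotics of $A_N$ and $B_N$ then produces the claimed leading constant $1-m+g_R'(1)/g(1)$. The main obstacle is verifying that the concentration estimates of Lemma \ref{zsamleitid} adapt cleanly to $A_N$: one must re-run the four-region splitting (\ref{Gsplit})--(\ref{infsup}) in a setting where $\Y_i$ has been replaced by the bare $\zeta$ from the $j=1$ term, and check that the resulting sum still localizes on $i\sim(1-m)N$ with precisely the constant $1-m$ rather than some other combination.
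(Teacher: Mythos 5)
Your proposal follows the paper's proof essentially verbatim: the same split $Z_N^{(R)}=Z_{1,N}^{(R)}+E_N^{(R)}$, the same further decomposition of $Z_{1,N}^{(R)}$ according to whether the maximal-degree vertex is the neighbour $s$ of the root or not (yielding exactly the two generating-function sums in the paper's Equation (\ref{zr})), and the same Lagrange-inversion plus four-region concentration argument from Lemma \ref{zsamleitid}, with the constants $1-m$ and $g_R'(1)/g(1)$ arising from the factors $\zeta/\Y_i(\zeta)$ and $\zeta\,\ell_{i,R}'(\L_i(\zeta))$ just as in the paper. The approach and level of rigour match the paper's own (admittedly sketchy) proof.
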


\begin{proof}
We write 
\begin {equation}
Z_{N}^{(R)} = Z_{1,N}^{(R)} + E_N^{(R)}
\end {equation}
in analogy with (\ref{splitsum}). One can show with the same 
methods as in the previous subsection that 
$\lim_{N\rightarrow\infty} E_N^{(R)}/Z_N = 0$. Therefore 
we focus on the term $Z_{1,N}^{(R)}$, the contribution from 
trees with exactly one vertex of maximal degree. We split this 
term into two parts: one where the maximal degree vertex is the nearest 
neighbour of the root and another when it is not. We can then write
\begin {eqnarray} \nonumber
 Z_{1,N}^{(R)} &=& \sum_{i=0}^{N-1} w_{i+R} [\zeta^N]\left\{\zeta 
 \L_i(\zeta)^i\right\} + \sum_{i=0}^{N-2} w_{i+1} 
 [\zeta^N]\left\{\zeta \ell_{i,R}'(\L_i(\zeta)) 
 \Y_i(\zeta) \L_i(\zeta)^i\right\} \\ \label{zr}
\end {eqnarray}
where we have defined
\begin {equation}
 \ell_{i,R}(z)  = \sum_{n=0}^{i-1}w_{n+R}z^{n}.
\end {equation}
Let
\begin {equation}
h(z) = \frac{z^{i+1}}{\ell_{i}(z)}
\end {equation}
and
\begin {equation}
k(z) = \frac{\ell'_{i,R}(z) z^{i+2}}{\ell_i(z)\left(\ell_i(z)-
z\ell'_i(z)\right)}.
\end {equation}
Using the Lagrange inversion formula for the functions $h$ 
and $k$ we find that
\begin {equation} \label{zn1}
[\zeta^{N}]\left\{\zeta \L_i(\zeta)^i\right\} = 
\frac{1}{N} [z^{N-i-1}]\left\{\left(\frac{i+1}{\ell_i(z)}-
\frac{z\ell_i'(z)}{\ell_i(z)^2}\right)\ell_i(z)^N\right\}
\end {equation}
and
\begin {eqnarray} \nonumber
[\zeta^N]\left\{\zeta \ell_{i,R}'(\L_i(\zeta)) \Y_i(\zeta) 
\L_i(\zeta)^i\right\} &=& \frac{1}{N} [z^{N-i-2}]\Bigg\{\Bigg(\frac{(i+2)
\ell_{i,R}'(z)}{\ell_i(z)(\ell_i(z)-z\ell_i'(z))}\\ \nonumber
&&+~z\frac{d}{dz}\Bigg(\frac{\ell_{i,R}'(z)}
{\ell_i(z)(\ell_i(z)-z\ell_i'(z))}\Bigg)\Bigg)\ell_i(z)^N\Bigg\}. 
\\ \label{zn2}
\end {eqnarray}
We now use exactly the same arguments as in the proof of 
Lemma \ref{zsamleitid} to estimate the asymptotic behaviour 
of (\ref{zr}).  One can show that the contribution from the 
second term in the curly brackets in (\ref{zn1}) and (\ref{zn2}) 
is negligible. Then one can show that for any $\epsilon > 0$

\begin {eqnarray}
\liminf_{N\rightarrow\infty} N^\beta 
\zeta_0^{N-1}Z_{1,N}^{(R)} \geq (1-\epsilon)
\left(1-\left(m-\epsilon \right)\right)^{1-\beta} 
\left(1+\frac{g_{R}'(1)}{g(1)-g'(1)	}\right)
\end {eqnarray}
and
\begin {equation}
\limsup_{N\rightarrow\infty} N^\beta \zeta_0^{N-1}Z_{1,N}^{(R)} 
\leq (1+\epsilon)\left(1-\left(m+\epsilon \right)\right)^{1-\beta} 
\left(1+\frac{g_{R}'(1)}{g(1)-g'(1)}\right).
\end {equation}
Since this holds for all $\epsilon > 0$ the desired result follows. \qed
\end {proof}

\section {The infinite volume limit} \label{s:measure}
In this section we show that the 
measures $\nu_N$ converge as $N\to\infty$ and we characterize the limits
for the three different cases 
discussed in Section \ref{s:themodel}.  If  $m$ 
is the mean offspring probability defined in 
(\ref{meanoffspring}) then the three cases are: generic, 
critical case ($w_1 < w_c$, $m=1$), the nongeneric, critical 
case ($w_1 = w_c$, $m=1$) and the nongeneric, subcritical 
case ($w_1 > w_c$, $m<1$). 

All the results stated for generic trees have already been established
\cite{durhuus:2007} 
but are rederived here in a slightly different way. In the 
generic case, Equation (\ref{eqtree}) can be solved for $\Z(\zeta)$ 
close to the critical point $\zeta_0$ and one can then find 
the asymptotic behaviour of $Z_N$, the coefficients of 
$\Z(\zeta)$, see \cite[Theorem 3.1]{meir:1978}. In 
the non--generic critical case, the function $\Z(\zeta)$ has 
the same critical behaviour as in the generic case as long 
as $g''(1)<\infty$, see \cite[Lemma A.2]{janson:2006}.  
By the same arguments as in~\cite{flajolet:2009,janson:2006} 
one finds the following result for $Z_N^{(R)}$.
\begin {lemma} \label{l:zgeneric}  
Under the stated assumption on the branching weights 
(\ref{statedassumption}) and assuming that $m=1$ and 
$g''(\Z_0)<\infty$ it holds that
\begin {equation}
Z_N^{(R)} = \sqrt{\frac{g(\Z_0)}{2\pi g''(\Z_0)}}\zeta_0 
g_R'(\Z_0)N^{-3/2} \zeta_0^{-N}\left(1+o(1)\right).
\end {equation}
\end {lemma}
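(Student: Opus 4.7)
The plan is to do singularity analysis of the generating function
$\Z(\zeta,R) = \zeta g_R(\Z(\zeta))$ near $\zeta = \zeta_0$ and then
invoke a standard transfer theorem to extract the coefficient
asymptotics. The strategy is exactly the one used in the references
\cite{flajolet:2009, janson:2006} cited just before the lemma; the
point is only to identify the constants.

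First I would determine the singular behaviour of $\Z(\zeta)$ itself at
$\zeta_0$. Writing $u = \Z_0 - \Z(\zeta)$ and $v = \zeta_0 - \zeta$ and
Taylor expanding $g$ to second order around $\Z_0$ in the relation
$\Z = \zeta g(\Z)$, the zeroth order terms cancel because
$\Z_0 = \zeta_0 g(\Z_0)$, and the first order terms in $u$ cancel
because the critical condition $m=1$ reads $\zeta_0 g'(\Z_0) = 1$. What
survives at leading order is $\tfrac12 \zeta_0 g''(\Z_0) u^2 \sim g(\Z_0) v$,
whence
\begin{equation}
\Z_0 - \Z(\zeta) \;=\; \sqrt{\frac{2 g(\Z_0)}{\zeta_0 g''(\Z_0)}}\,
\sqrt{\zeta_0 - \zeta}\,\bigl(1+o(1)\bigr)
\end{equation}
as $\zeta \to \zeta_0$. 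This uses $g''(\Z_0) < \infty$ in an essential
way.

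Second, I would Taylor expand $g_R$ around $\Z_0$ in the relation
$\Z(\zeta,R) = \zeta g_R(\Z(\zeta))$. The analytic contribution at
$\zeta_0$ is harmless, and the singular part picks up a factor
$\zeta_0 g_R'(\Z_0)$ from the first derivative of $g_R$, giving
\begin{equation}
\Z(\zeta,R) \;=\; A(\zeta) \;-\; \zeta_0 g_R'(\Z_0)
\sqrt{\frac{2 g(\Z_0)}{\zeta_0 g''(\Z_0)}}\,\sqrt{\zeta_0 - \zeta}\,
\bigl(1+o(1)\bigr),
\end{equation}
where $A$ is analytic at $\zeta_0$ and hence contributes only
exponentially small corrections to $[\zeta^N]\Z(\zeta,R)$. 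Applying the
standard transfer
$[\zeta^N]\sqrt{\zeta_0 - \zeta} \sim -\tfrac{1}{2\sqrt{\pi}}
\zeta_0^{1/2-N} N^{-3/2}$ and simplifying
$\sqrt{2 g(\Z_0)/(\zeta_0 g''(\Z_0))} \cdot \sqrt{\zeta_0} =
\sqrt{2 g(\Z_0)/g''(\Z_0)}$ then gives exactly the claimed constant
$\sqrt{g(\Z_0)/(2\pi g''(\Z_0))}\,\zeta_0 g_R'(\Z_0)$.

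The only real obstacle is justifying the transfer theorem, which
requires analytic continuation of $\Z(\zeta)$ to a $\Delta$-domain
around $\zeta_0$. In the generic case $\Z_0 < \rho$ this is immediate
from the implicit function theorem since $g$ is analytic near $\Z_0$.
In the nongeneric critical case $\Z_0 = \rho = 1$ one instead invokes
\cite[Lemma A.2]{janson:2006}, which shows that the assumption
$g''(\Z_0) < \infty$ together with our standing assumption
\eqref{statedassumption} (guaranteeing aperiodicity via $w_1, w_2 > 0$
and some $w_n > 0$ with $n \geq 3$) is enough to carry out the
Flajolet--Odlyzko singularity analysis in this boundary situation.
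Given that input, the calculation above is essentially the same as the
one in \cite[Theorem 3.1]{meir:1978} for the case $R = 1$, with the
factor $g_R'(\Z_0)$ appearing in place of $g'(\Z_0)$ because of the
shifted weight at the neighbour of the root.
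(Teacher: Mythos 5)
Your proposal is correct and follows essentially the same route as the paper, which simply cites \cite{meir:1978}, \cite{janson:2006} and \cite{flajolet:2009} for the standard singularity analysis; you have filled in the details of exactly that argument, and the constant works out: the singular part $-\zeta_0 g_R'(\Z_0)\sqrt{2g(\Z_0)/(\zeta_0 g''(\Z_0))}\sqrt{\zeta_0-\zeta}$ combined with $[\zeta^N]\sqrt{\zeta_0-\zeta}\sim -\tfrac{1}{2\sqrt{\pi}}\zeta_0^{1/2-N}N^{-3/2}$ gives $\sqrt{g(\Z_0)/(2\pi g''(\Z_0))}\,\zeta_0 g_R'(\Z_0)$ as claimed, and your appeal to $w_2>0$ for aperiodicity is the right justification for the $\Delta$-domain continuation.
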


An analogous result for the asymptotic behaviour of $Z_N$, for 
a special choice of branching weights corresponding to nongeneric 
critical trees with $g''(1)=\infty$, is stated in~\cite[VI.18 and VI.19, page 407]{flajolet:2009}. A generalization to $Z_N^{(R)}$ 
is straightforward and is stated in the following Lemma.

\begin {lemma} \label{ngcrit}
For the nongeneric, critical branching weights defined 
by (\ref{bw}), with $2 < \beta < 3$ and $w_1=w_c$ we have
\begin {equation}
Z_N^{(R)} = C \zeta_0 g_R'(1)  N^{-\frac{\beta}{\beta-1}}
\zeta_0^{-N}\left(1+ o(1)\right) 
\end {equation}
where $C>0$ is a constant.
\end {lemma}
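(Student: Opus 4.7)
The plan is to derive the asymptotics of $Z_N^{(R)}$ by singularity analysis applied to the generating function $\Z(\zeta,R)=\zeta g_R(\Z(\zeta))$, following the same strategy used for $Z_N$ in~\cite{flajolet:2009}. The starting point is the local expansion of $g$ and of the shifted potential $g_R$ near $z=1$. Since $w_n\sim n^{-\beta}$ with $2<\beta<3$, standard transfer results for polylogarithm-type series~\cite{flajolet:2009} yield, as $z\to 1^-$,
\[
g(z)=g(1)+g'(1)(z-1)+\gamma(1-z)^{\beta-1}(1+o(1)),
\]
\[
g_R(z)=g_R(1)+g_R'(1)(z-1)+\gamma_R(1-z)^{\beta-1}(1+o(1)),
\]
for explicit positive constants $\gamma,\gamma_R$. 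Both $g'(1)$ and $g_R'(1)$ are finite because $\beta>2$, while $g''(1)=\infty$, and this is what forces the non-analytic exponent $\beta-1\in(1,2)$.

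Next I would invert the functional equation $\Z(\zeta)=\zeta g(\Z(\zeta))$ near $\zeta=\zeta_0=1/g(1)$. Set $u=1-\zeta/\zeta_0$. The criticality condition $m=1$ forces the linear terms in the Taylor expansion of $\Z-\zeta g(\Z)$ about $(\zeta_0,\Z_0)=(\zeta_0,1)$ to cancel, so solving perturbatively gives
\[
1-\Z(\zeta)=a\, u^{1/(\beta-1)}(1+o(1))
\]
as $\zeta\to\zeta_0^-$ for an explicit positive constant $a$ depending on $\gamma$ and $g(1)$. This is the standard ``non-smooth implicit function'' computation in~\cite{flajolet:2009}; the essential input is positivity of the $(1-z)^{\beta-1}$ coefficient in $g$.

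Substituting into $\Z(\zeta,R)=\zeta g_R(\Z(\zeta))$ and using the local expansion of $g_R$, the linear contribution $\zeta g_R'(1)(\Z(\zeta)-1)$ produces a singular term of order $u^{1/(\beta-1)}$, whereas the $(1-\Z(\zeta))^{\beta-1}$ contribution coming from $g_R$ becomes of order $u$ and is absorbed into the analytic part. Thus
\[
\Z(\zeta,R)=A(\zeta)-\zeta_0\, g_R'(1)\, a\, u^{1/(\beta-1)}(1+o(1))
\]
with $A$ analytic at $\zeta_0$. The Flajolet--Odlyzko transfer theorem then yields
\[
Z_N^{(R)}=[\zeta^N]\Z(\zeta,R)=C\, \zeta_0\, g_R'(1)\, N^{-\beta/(\beta-1)}\, \zeta_0^{-N}(1+o(1))
\]
with $C=-a/\Gamma(-1/(\beta-1))>0$, since $\Gamma$ is negative on $(-1,0)$ and $a>0$. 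This matches the statement of the Lemma.

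The main technical obstacle is justifying the use of the transfer theorem: one must analytically continue $\Z(\zeta)$ to a $\Delta$-domain at $\zeta_0$ and obtain uniform control on the error in the singular expansion there. Analyticity of $\Z$ up to $\zeta_0$ is inherited from analyticity of $g$ on $\{|z|<1\}$ together with the analytic implicit function theorem away from the critical point; uniqueness of the dominant singularity on $|\zeta|=\zeta_0$ follows from the aperiodicity of the weight sequence guaranteed by~(\ref{statedassumption}) (namely $w_1,w_2>0$ and $w_n>0$ for some $n\geq 3$). These verifications are routine and are precisely what is carried out for $Z_N$ in the cited references, so the extension to $Z_N^{(R)}$ adds no further difficulty.
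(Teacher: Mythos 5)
Your proposal is correct and follows exactly the route the paper intends: the paper gives no proof of this lemma beyond citing the singular-expansion analysis of $\mathcal{Z}(\zeta)$ in Flajolet--Sedgewick (VI.18--VI.19) and asserting that the extension to $Z_N^{(R)}$ is straightforward, and your sketch is precisely that argument carried out --- local expansion of $g$ and $g_R$ with the $(1-z)^{\beta-1}$ term, inversion of $\mathcal{Z}=\zeta g(\mathcal{Z})$ at criticality to get the $u^{1/(\beta-1)}$ singularity, composition with $g_R$, and transfer. The exponent bookkeeping, the sign of the constant via $\Gamma(-1/(\beta-1))<0$, and the aperiodicity remark all check out.
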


We now prove that the measures $\nu_N$ converge as $N\rightarrow \infty$
provided that $Z^{(R)}_N$ has the right asymptotic behaviour.  We call a self avoiding, infinite, linear path starting at the root a spine.
\begin {theorem} \label{t:conv}
If
\begin {equation} \label{asymZ}
 Z^{(R)}_N = C \left(1-m+\zeta_0 g_R'(Z_0)\right) 
 N^{-\delta}\zeta_0^{-N}(1+o(1))
\end {equation}
where $C$ is a positive constant and $\delta > 1$, then the 
measures $\nu_N$ converge weakly, as $N \longrightarrow \infty$, 
to a probability measure $\nu$ which has the following properties:
\begin {itemize}
 \item If $m=1$, $\nu$ is concentrated on the set of trees with 
 exactly one spine having finite, independent, critical 
 GW outgrowths defined by the offspring 
 probabilities in (\ref{GWprob}). The numbers $i$ and $j$ 
 of left and right outgrowths from a vertex on the spine are 
 independently distributed by
\begin {equation} \label{sss}
\phi(i,j) = \frac{1}{m}\zeta_0 w_{i+j+2} Z_0^{i+j}.
\end {equation}
\item If $m<1$, $\nu$ is concentrated on the set of trees 
with exactly one vertex of infinite degree which we denote by $t$. 
The length $\ell$ of the path $(r,t)$ is distributed by
\begin {equation} 
\psi(\ell) = (1-m)m^{\ell-1}.
\end {equation}
The outgrowths from the path $(r,t)$ are finite, independent, 
subcritical GW trees defined by the offspring 
probabilities in (\ref{GWprob}). The numbers $i$ and $j$ of left 
and right outgrowths from a vertex $v\in(r,t), v\neq t$ are 
independently distributed by (\ref{sss}).
\end {itemize}
\end {theorem}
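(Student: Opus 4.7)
By Propositions~\ref{p:openclosed} and~\ref{p:compact}, $(\Gamma,d)$ is compact and its clopen balls $\B_r(\tau_0)=\{\tau:L_R(\tau)=L_R(\tau_0)\}$ (with $R=\lfloor 1/r\rfloor$) generate the topology, so weak convergence of $\nu_N$ is equivalent to pointwise convergence of $\nu_N(\{\tau\in\Gamma_N:L_R(\tau)=\tau_1\})$ for every finite $\tau_1$ and every $R\in\N$. My plan is to compute these cylinder probabilities in the $N\to\infty$ limit and match them with the probabilities assigned by the candidate measure $\nu$ of the statement; that the limits define a Borel probability on $\Gamma$ follows from compactness and total-mass conservation.

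\textbf{Cylinder decomposition.} A tree $\tau$ with $L_R(\tau)=\tau_1$ is obtained from $\tau_1$ by filling finitely many open slots: an arbitrary planar subtree at each leaf of $\tau_1$ at depth $R$, and an ordered list of $k\geq 0$ extra children (each carrying a planar subtree) at each vertex $v$ of $\tau_1$ at depth $<R$ whose visible degree is the maximal $R$ (the degree of $v$ in $\tau$ then being $R+k$, with Boltzmann weight $w_{R+k}$). Expanding $W(\tau)$ and summing over attachments of fixed total size, $\nu_N(\{L_R(\tau)=\tau_1\}) = Z_N^{-1} W(\tau_1)\cdot S_N(\tau_1)$, where $S_N(\tau_1)$ is a finite convolution built from the sequences $(Z_n)$ and $(Z_n^{(R')})$ via the recursions (\ref{eqtree}) and $\Z(\zeta,R)=\zeta g_R(\Z(\zeta))$. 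All of these sequences satisfy the common asymptotic~(\ref{asymZ}) with exponent $\delta>1$, via Theorem~\ref{th1} and Lemma~\ref{l:ZRN} in the subcritical case and via Lemmas~\ref{l:zgeneric} and~\ref{ngcrit} in the critical case.

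\textbf{One big jump and identification.} The key input is that $\zeta_0^N Z_N\sim C N^{-\delta}$ with $\delta>1$ is subexponential, so in each convolution appearing in $S_N(\tau_1)$ the dominant contribution as $N\to\infty$ comes from configurations with exactly one ``large'' slot of size comparable to $N$ and all remaining slots of bounded size; the small slots then become independent samples from the Galton--Watson measure $\mu$ of (\ref{GWmeasure})--(\ref{GWprob}). Iterating this dichotomy along the distinguished large slot constructs the spine inductively. In the subcritical case $m<1$, comparing the $N^{-\delta}$ prefactor coming from the $R-1$ visible-child slots (each contributing a factor of~$1$) with the prefactor coming from the unseen-child slot (which carries the extra factor $1-m+g_R'(1)/g(1)$ from Lemma~\ref{l:ZRN}) yields a Bernoulli $(m,1-m)$ choice at each spine step; the termination event of probability $1-m$ corresponds to the big subtree being hidden behind the vertex of infinite degree~$t$, producing the geometric law $\psi(\ell)=(1-m)m^{\ell-1}$. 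In the critical case $m=1$ the spine is infinite. In both cases the per-vertex Boltzmann factor $w_{i+j+2}\Z_0^{i+j}$ for a spine vertex with $i$ left and $j$ right (finite) outgrowths, combined with the normalisation $m$, identifies $\phi(i,j)=m^{-1}\zeta_0 w_{i+j+2}\Z_0^{i+j}$.

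\textbf{Main obstacle.} The rigorous implementation of the one-big-jump estimate is the principal technical step: one must show, for each $k\geq 2$, that the restriction of $\sum_{n_1+\cdots+n_k=N-c}\prod_j\zeta_0^{n_j}Z_{n_j}$ to configurations with two or more $n_j\geq\epsilon N$ is $o(\zeta_0^N Z_N)$. This follows by splitting at a threshold, applying (\ref{asymZ}) to the two largest factors (yielding a bound of order $N^{1-2\delta}=o(N^{-\delta})$ since $\delta>1$), and using $\sum_n\zeta_0^n Z_n<\infty$ on the remaining factors, in the same spirit as the estimates of Subsection~\ref{ss:Z1N}. Everything else is combinatorial bookkeeping verifying that the resulting limit reproduces the product structure of~$\nu$.
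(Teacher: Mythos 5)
Your proposal follows essentially the same route as the paper: express the probability of each clopen ball $\{\tau : L_R(\tau)=\tau_0\}$ as $Z_N^{-1}$ times a finite convolution of the sequences $Z_n$ and $Z_n^{(R)}$, use the assumed asymptotics (\ref{asymZ}) with $\delta>1$ to show that exactly one convolution slot is macroscopic while the rest stay bounded, and read off the limiting spine/Galton--Watson structure (and the geometric law for the distance to the infinite-degree vertex when $m<1$) from the resulting factorization. The only cosmetic difference is that the paper controls the error by fixing a threshold $A$ on the second-largest slot and sending $A\to\infty$ after $N\to\infty$, rather than your $\epsilon N$ cut, but the estimate is the same.
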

\begin {proof}
First we prove existence of $\nu$. Since the metric space $(\Gamma,d)$ 
has the properties stated in Propositions 
(\ref{p:openclosed} and \ref{p:compact}) it is enough,
as explained in \cite{billingsley:1968,durhuus:2003}, to show that 
for any $k \in \mathbb{N}$ and $\tau' \in \Gamma'$ the probabilities
\begin {equation} \label{pc}
 \nu_N\left(\mathcal{B}_{\frac{1}{k}}\left(\tau'\right)\right)
\end {equation}
converge as $N\longrightarrow\infty$. Since $\Gamma$ is 
compact, tightness is automatically fulfilled. The ball 
in (\ref{pc}) can be expressed as
\begin {equation}\label{ball}
 \mathcal{B}_{\frac{1}{k}}\left(\tau'\right) = 
 \left\{\tau\in\Gamma ~|~ L_R(\tau) = \tau_0 \right\}
\end {equation}
where  $R=k+1$ and $\tau_0 = L_R(\tau')$. Denote the number of 
vertices in $\tau_0$ of degree $R$ by $S$ and the number of 
vertices in $\tau_0$ at distance $R$ from the root by $T$. It is 
clear that $S+T\geq 0$.
\begin {figure} [!t]
\centerline{{\scalebox{0.8}{\includegraphics{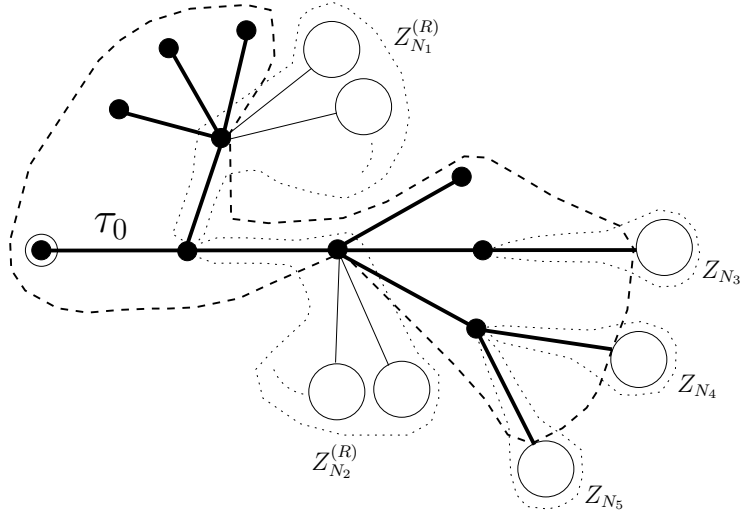}}}}
\caption{An example of the set (\ref{ball}) where $R=4$, 
$S=2$ and $T=3$. When conditioning on trees of size $N$ one 
attaches the weights $Z^{(R)}_{N_i}$, $i=1\ldots S$ 
and $Z_{N_j}$, $j=S+1\ldots S+T$ as indicated in the figure.}\label{f:conv}
\end{figure}
We can now write
\begin {eqnarray} \nonumber
  &&\nu_N\left(\left\{\tau\in\Gamma ~|~ L_R(\tau) = 
  \tau_0 \right\}\right) = \\ \label{probball} && \quad \quad \quad 
  Z_N^{-1} W_0 \sum_{N_1+\ldots+N_{S+T} = N-|\tau_0| + T + S} ~\prod_{i=1}^S 
  Z_{N_i}^{(R)} \prod_{j=S+1}^{S+T}Z_{N_j}
\end {eqnarray}
where
\begin {equation} \label{W0}
W_0 = \prod_{\substack{v \in V(\tau_0) \setminus \{r\} \\ 
\sigma_v,|(r,v)| \neq R}} w_{\sigma_v}
\end {equation}
is the weight of the tree $\tau_0$ (except the contribution
from the vertices 
which are explicitly excluded), and $|(r,v)|$ denotes the 
length of the path $(r,v)$, see Fig.~\ref{f:conv}. For one of the 
indices $k$ in each term of the above sum it holds that 
$N_k \geq \frac{N-|\tau_0| + S + T}{S+T}$. Consider the contribution 
from terms for which $N_n > A$ for some other index $n\neq k$ 
and $A>0$. The indices $n$ and $k$ belong to one of 
the sets $\{1,\ldots,S\}$ and $\{S+1,\ldots, S+T\}$, a total of four 
possibilities. First assume that $S \geq 2$ and $n,k \in \{1,\ldots,S\}$. 
Using (\ref{asymZ}), this contribution can be estimated from above by
\begin {eqnarray}
 &&C_1 \zeta_0^{N} Z_N S^2\sum_{\substack{N_1+\ldots+N_{S+T} = 
 N-|\tau_0| + T + S\\N_1 \geq \frac{N-|\tau_0| + S + T}{S+T}, ~N_2 > A}} 
 Z_{N_1}^{(R)}\zeta_0^{N_1}~\prod_{i=2}^S Z_{N_i}^{(R)}\zeta_0^{N_i} 
 \prod_{j=S+1}^{S+T}Z_{N_j}\zeta_0^{N_j} \nonumber \\
&& \leq C_2  \left(\frac{(S+T)N}{N-|\tau_0| +T +S}\right)^{\delta} 
\sum_{\substack{N_3,\ldots,N_{S+T} \geq 1\\N_2 > A}} 
~\prod_{i=2}^S Z_{N_i}^{(R)}\zeta_0^{N_i} \prod_{j=S+1}^{S+T}Z_{N_j}
\zeta_0^{N_j} \nonumber\\ 
&& \leq C_3 \Z_{0,R}^{S-2} \Z_0^T \sum_{N_2>A} N_2^{-\delta} 
\leq C_4 A^{1-\delta}
\end {eqnarray}
where $C_1$, $C_2$, $C_3$ and $C_4$ are positive numbers 
independent of $N$ and $A$. Exactly the same upper bound is obtained, 
up to a multiplicative constant, for the other possible values 
of $k$ and $n$. The last expression goes to zero as $A\rightarrow \infty$ 
since $\delta > 1$. 

The remaining contribution to the probability 
(\ref{probball}) is then
\begin {eqnarray} \nonumber
&&\sum_{k=1}^{S+T} Z_N^{-1} W_0 \sum_{\substack{N_1+\ldots+N_{S+T} = 
N-|\tau_0| + T + S \\ N_n \leq A, ~n \neq k}} ~\prod_{i=1}^S Z_{N_i}^{(R)} 
\prod_{j=S+1}^{S+T}Z_{N_j} \\ \nonumber
&& \xrightarrow[N\rightarrow\infty]{} W_0 \zeta_0^{|\tau_0| - S - T} 
\Bigg(S (1-m+\zeta_0 g_R'(Z_0))\left(\sum_{n=1}^A Z_n^{(R)} 
\zeta_0^n\right)^{S-1}\left(\sum_{n=1}^A Z_n \zeta_0^n\right)^{T} \\ \nonumber 
&& \quad \quad \quad \quad \quad \quad \quad \quad \quad \quad \quad 
\quad + ~T \left(\sum_{n=1}^A Z_n^{(R)} \zeta_0^n\right)^{S}
\left(\sum_{n=1}^A Z_n \zeta_0^n\right)^{T-1}\Bigg) \\ \label{nuprob}
&& \xrightarrow[A\rightarrow\infty]{} W_0 \zeta_0^{|\tau_0| - S - T} 
\left(S(1-m+\zeta_0 g_R'(Z_0))Z_{0,R}^{S-1}Z_0^T+TZ_{0,R}^{S}Z_0^{T-1}\right).
\end {eqnarray}
This completes the proof of the existence of $\nu$. 
We now characterize $\nu$ separately for the cases $m=1$ and $m<1$.

\textit{The case $m=1$:} Let $\tau_1$ be a finite tree which has a vertex $s$ of degree one at a distance $R$ from the root. Let $A_R(\tau_1)$ be the set of trees which have $\tau_1$ as a subtree and the property that if the subrees attached to $s$ which do not contain the root are removed one obtains $\tau_1$, see Fig.~\ref{f:gen}. It is clear that $A_R(\tau_1)$ can be written as a finite union of pairwise disjoint balls as in $(\ref{ball})$. Therefore, by summing (\ref{nuprob}) over those balls we get
\begin {equation} \label{step2} 
\nu(A_R(\tau_1)) = W_1 \zeta_0^{|\tau_1| - 1} 
\end {equation}
where
\begin {equation} \label{W1}
W_1 = \prod_{v \in V(\tau_1) \setminus \{r,s\}} w_{\sigma_v}.
\end {equation}
Note that Equation (\ref{step2}) has the same form as (\ref{nuprob}) with $S=0$ and $T=1$.
 Now define $A_R$ as the union of $A_R(\tau_1)$ over all trees $\tau_1$ with the above properties. The sets $A(\tau_1)$ and $A(\tau'_1)$ are disjoint if $\tau_1 \neq \tau'_1$ and therefore by summing (\ref{step2}) over $\tau_1$ we find that
\begin {eqnarray} \nonumber
 \nu(A_R) &=& \sum_{\tau_1} \Big(\prod_{v \in V(\tau_1) \setminus \{r,s\}} w_{\sigma_v}\Big) \zeta_0^{|\tau_1|-1} = \Big(\zeta_0 \sum_{i=0}^\infty (i+1)w_{i+2} Z_0^i\Big)^{R-1} \\
&=& (\zeta_0 g'(Z_0))^{R-1} = m^{R-1} = 1
\end {eqnarray}
for all $R$. Therefore, by taking $R$ to infinity one finds 
that $\nu$ is concentrated on trees with exactly one spine with 
finite outgrowths. The distribution of the outgrowths follows 
from (\ref{step2}) and  (\ref{W1}).

\begin {figure} [!t]
\centerline{{\scalebox{0.7}{\includegraphics{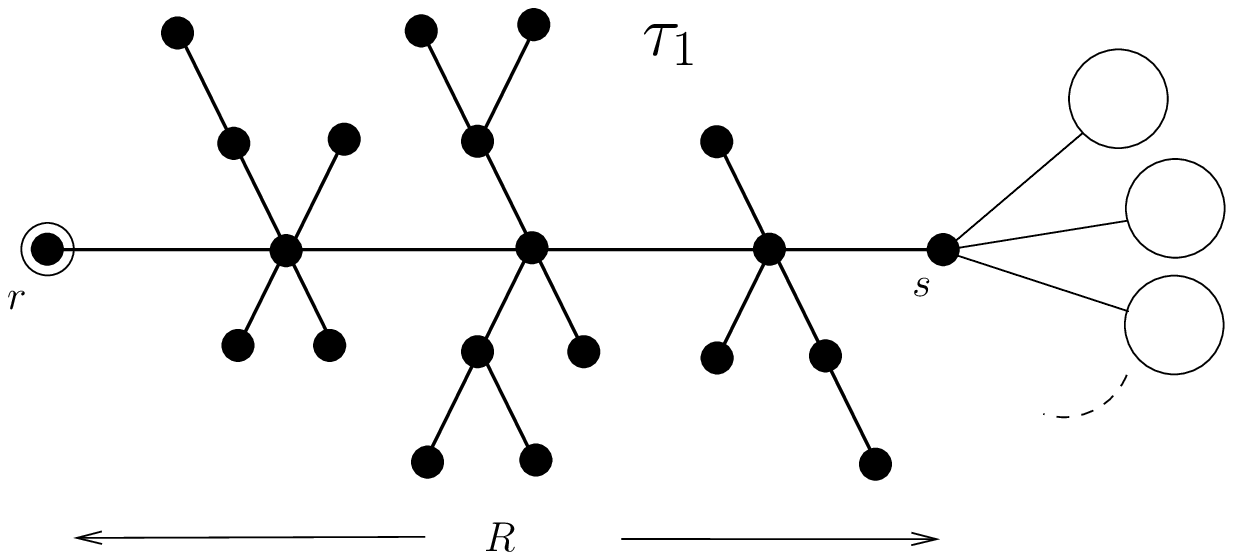}}}}
\caption{An illustration of the set $A_R(\tau_1)$.}\label{f:gen}
\end{figure}

\textit{The case $m<1$}: Let $\tau_2$ be a finite tree which has a vertex $t$ of degree $R$ at a distance $\ell$ from the root. Let $A_{R,\ell}(\tau_2)$ be the set of all trees which have $\tau_2$ as a subtree and the property that if the subtrees attached to $t$ in the $R$--th, $R+1$--st, $\ldots$ position clockwise from $(r,t)$ are removed one obtains $\tau_2$, see Fig.~\ref{f:ng}.  Summing  
(\ref{nuprob}) as in the case $m=1$ one finds that
\begin {equation} \label{step3}
 \nu(A_{R,\ell}(\tau_2)) = W_2 \zeta_0^{|\tau_2|-1}\left(1-m+\frac{g_R'(1)}{g(1)}\right), 
\end {equation}
where
\begin {equation} \label{W2}
W_2 = \prod_{v \in V(\tau_2) \setminus \{r,t\}} w_{\sigma_v}.
\end {equation}
Note that Equation (\ref{step3}) resembles (\ref{nuprob}) with $S=1$, $T=0$. Now define $A_{R,\ell}$ as the union of $A_{R,\ell}(\tau_2)$ over all trees $\tau_2$ with the above properties. By summing (\ref{step3}) over $\tau_2$ we get
\begin {equation} \label{nuar}
 \nu(A_{R,\ell}) = \left(1-m+\frac{g_R'(1)}{g(1)}\right)m^{\ell-1}.
\end {equation}
The sets $A_{R,\ell}$ are decreasing in $R$ so taking $R$ to 
infinity in (\ref{nuar}) one finds, by the monotone convergence 
theorem, that the probability of exactly one vertex having an 
infinite degree and being at a distance $\ell$ from the root 
is $(1-m)m^{\ell-1}$. Summing this over $\ell$ gives 1 which 
shows that the measure is concentrated on trees with exactly one 
vertex of infinite degree. The distribution of the outgrowths 
follows from (\ref{step3}) and (\ref{W2}).  \qed
\begin {figure} [!t]
\centerline{{\scalebox{0.7}{\includegraphics{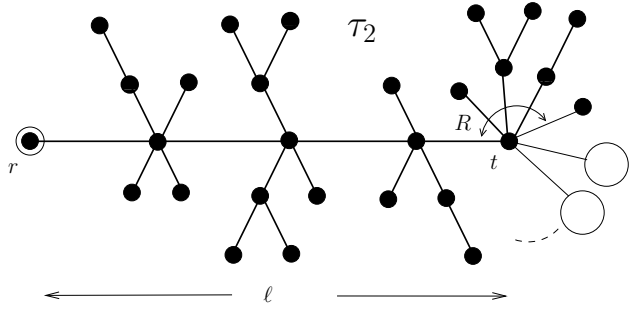}}}}
\caption{An illustration of the set $A_{R,\ell}(\tau_2)$.}\label{f:ng}
\end{figure} 
\end {proof}

\begin {theorem}
Theorem \ref{t:conv} applies to the generic and 
nongeneric, critical ensembles in Lemmas \ref{l:zgeneric} 
and \ref{ngcrit} and the nongeneric, subcritical 
ensembles defined by (\ref{bw}) and (\ref{ngcondition}).
\end {theorem}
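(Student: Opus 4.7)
The plan is to verify the hypothesis (\ref{asymZ}) of Theorem \ref{t:conv} in each of the three cases, identifying the constants $C$ and $\delta$, and noting that $\delta>1$ in every case so that Theorem \ref{t:conv} applies and yields the asserted weak limit with its stated characterization. Since the content of Theorem \ref{t:conv} has already been proved, the only work here is bookkeeping: matching the three asymptotic formulas already at our disposal against the required template
\begin{equation}
Z^{(R)}_N = C\bigl(1-m+\zeta_0 g_R'(\Z_0)\bigr)N^{-\delta}\zeta_0^{-N}(1+o(1)).
\end{equation}

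For the generic ensemble, I would invoke Lemma \ref{l:zgeneric}: since $m=1$ the required prefactor collapses to $C\zeta_0 g_R'(\Z_0)$, which coincides with the formula of Lemma \ref{l:zgeneric} upon setting $C=\sqrt{g(\Z_0)/(2\pi g''(\Z_0))}$ and $\delta=3/2>1$. For the nongeneric critical ensemble I would invoke Lemma \ref{ngcrit}: again $m=1$, so the template reduces to $C\zeta_0 g_R'(1)N^{-\delta}\zeta_0^{-N}(1+o(1))$ with $\delta=\beta/(\beta-1)$; for $2<\beta<3$ this exponent lies in $(3/2,2)$ and in particular exceeds $1$.

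For the nongeneric, subcritical ensemble I would appeal to Lemma \ref{l:ZRN}. The identifications $\Z_0=1$ and $\zeta_0=1/g(1)$ coming from (\ref{ngcondition}) give $g_R'(1)/g(1)=\zeta_0 g_R'(\Z_0)$, and $\zeta_0^{1-N}=\zeta_0\cdot\zeta_0^{-N}$, so the asymptotic formula of Lemma \ref{l:ZRN} rearranges to
\begin{equation}
Z_N^{(R)} = (1-m)^{-\beta}\zeta_0 \bigl(1-m+\zeta_0 g_R'(\Z_0)\bigr) N^{-\beta}\zeta_0^{-N}(1+o(1)),
\end{equation}
which matches the template with $C=(1-m)^{-\beta}\zeta_0$ and $\delta=\beta>2$.

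There is no genuine obstacle beyond the algebraic rewriting above; the hard analytic work has been carried out in Lemmas \ref{l:zgeneric}, \ref{ngcrit} and \ref{l:ZRN}. Once the matchings are made, applying Theorem \ref{t:conv} immediately delivers the weak convergence $\nu_N\to\nu$ together with the spine-with-critical-GW-outgrowths description in the two critical cases (where $m=1$) and the finite-diameter, single-infinite-vertex description in the subcritical case (where $m<1$). The only minor point to record is that in the two critical cases the coefficient $1-m$ vanishes, so the limiting measure is automatically concentrated on trees possessing a spine rather than a vertex of infinite degree, consistent with the dichotomy in the statement of Theorem \ref{t:conv}.
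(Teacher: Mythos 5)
Your proposal is correct and follows exactly the paper's own (even terser) argument: verify the template (\ref{asymZ}) case by case with $\delta=3/2$, $\delta=\beta/(\beta-1)$ and $\delta=\beta$ respectively, note $\delta>1$ in each case, and invoke Theorem \ref{t:conv}. The explicit algebraic matchings of the constants are all accurate.
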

\begin {proof}
This follows from Lemmas \ref{l:ZRN}, \ref{l:zgeneric} 
and \ref{ngcrit} since (\ref{asymZ}) holds with
\begin {equation}
 \delta = \left\{\begin{array}{ll}
3/2 & \quad \quad \text{generic, and nongeneric critical 
with $g''(1)<\infty$} \\
 \beta/(\beta-1) & \quad \quad \text{nongeneric critical with 
 $2 < \beta < 3$}\\
\beta & \quad \quad \text{nongeneric subcritical.}
 \end{array} \right.
\end {equation}
\qed
\end {proof}

The final result of this section 
concerns the size of the large vertex, in finite trees, which 
arises in the nongeneric, subcritical phase.
\begin {theorem}
Consider the nongeneric branching weights defined by (\ref{bw}) 
and (\ref{ngcondition}). Let $C_{N,\epsilon}$ be the event 
that a tree in $\Gamma_N$ has exactly one vertex of maximal
degree $\sigma_\mathrm{max}$ and $(1-m-\epsilon)N \leq \sigma_\mathrm{max} 
\leq (1-m+\epsilon)N$. For any $\epsilon,\delta > 0$ there 
exists an $N_0\in\mathbb{N}$ such that
\begin {equation}
 \nu_N\left(C_{N,\epsilon}\right) > 1-\delta
\end {equation}
 for all $N\geq N_0$.
\end {theorem}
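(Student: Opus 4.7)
The plan is to decompose the event $C_{N,\epsilon}^c$ and bound each piece using estimates already obtained in Section \ref{s:nongeneric}. Let $U_N\subset\Gamma_N$ denote the set of trees having at least two vertices of maximal degree, and $V_{N,\epsilon}\subset\Gamma_N$ the set of trees having a unique maximal vertex whose degree lies outside $I=[(1-m-\epsilon)N,(1-m+\epsilon)N]$. Since $C_{N,\epsilon}^c \subseteq U_N \cup V_{N,\epsilon}$, one has
\begin{equation*}
\nu_N(C_{N,\epsilon}^c)\leq \frac{E_N}{Z_N}+\frac{Z_{1,N}^{\mathrm{out}}}{Z_N},
\end{equation*}
where $Z_{1,N}^{\mathrm{out}}$ is the contribution to $Z_{1,N}$ from trees in $V_{N,\epsilon}$. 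Lemma \ref{ehverfur} gives $N^\beta\zeta_0^N E_N\to 0$, while Theorem \ref{th1} gives $N^\beta\zeta_0^N Z_N\to (1-m)^{-\beta}\zeta_0>0$, so the first ratio vanishes.

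For the second ratio, I observe that in the expansion (\ref{generalz}) the unique maximal vertex has degree $i+1$, and under the change of variables $n=N-i-1$ used in the definition (\ref{theG}) of $G_N$, the condition that this degree lies outside $I$ corresponds to $n$ lying outside a window of half-width $\epsilon N$ around $mN$. Up to a harmless shift by one in the endpoints (absorbed by passing to a slightly larger $\epsilon$), this yields
\begin{equation*}
N^\beta\zeta_0^{N-1}Z_{1,N}^{\mathrm{out}}\leq G_N(0,\lfloor(m-\epsilon)N\rfloor)+G_N(\lfloor(m+\epsilon)N\rfloor+1,N-1).
\end{equation*}
The estimates (\ref{lastline}), (\ref{abovethird}) and (\ref{abovefourth}) from the proof of Lemma \ref{zsamleitid} show that each of $G_N(0,\lfloor(m-\epsilon)N\rfloor)$, $G_N(\lfloor(m+\epsilon)N\rfloor+1,\lfloor N-N^\gamma\rfloor)$ and $G_N(\lfloor N-N^\gamma\rfloor+1,N-1)$ tends to zero as $N\to\infty$. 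Combining with Theorem \ref{th1} gives $Z_{1,N}^{\mathrm{out}}/Z_N\to 0$, so $\nu_N(C_{N,\epsilon}^c)\to 0$ and the required $N_0$ exists.

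There is essentially no substantive obstacle here, since the whole content of the statement is already buried inside the proof of Lemma \ref{zsamleitid}: that proof splits $N^\beta\zeta_0^{N-1}Z_{1,N}$ into four sub-intervals and shows that only the central window $G_N(\lfloor(m-\epsilon)N\rfloor+1,\lfloor(m+\epsilon)N\rfloor)$ survives in the limit, while the three remainder pieces vanish. The present theorem is precisely the reinterpretation of that dominance as concentration of the maximal degree of $\nu_N$-typical trees in a window of radius $\epsilon N$ around $(1-m)N$; the only thing to keep track of is the correspondence between the summation index and the degree of the unique maximal vertex.
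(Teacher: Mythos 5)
Your argument is correct and is essentially the paper's own proof: the authors likewise dispose of the multiple-maximal-vertex contribution via the $E_N$ estimate and read off the concentration of the maximal degree from the fact that, in the decomposition (\ref{Gsplit}), only the central window $G_N(\lfloor(m-\epsilon)N\rfloor+1,\lfloor(m+\epsilon)N\rfloor)$ survives while the estimates (\ref{lastline}), (\ref{abovethird}) and (\ref{abovefourth}) kill the rest. Your bookkeeping of the correspondence $n=N-i-1$ between the summation index and the maximal degree is exactly the point the paper leaves implicit.
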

\begin {proof}
This follows directly from the estimates (\ref{lastline}), 
(\ref{abovethird}) and (\ref{abovefourth}--\ref{belowsecond4}). \qed
\end {proof}
\section{The spectral dimension of subcritical trees}
In this section we will calculate the so called annealed spectral dimension of the nongeneric subcritical trees. A simple random walk on a graph $G$ is a sequence of nearest neighbour vertices, $\omega$, together with a probability weight
\begin {equation}
 \prod_{t=0}^{|\omega|-1} \left(\sigma_{\omega_t}\right)^{-1}
\end {equation}
where $\omega_t$ denotes the $(t+1)$-st vertex of $\omega$ and $|\omega|$ is the number of vertices in $\omega$. The random walk is a process where at time $t$ a walker, located at $\omega_t$, moves to one of its neighbours with probabilities $\left(\sigma_{\omega_t}\right)^{-1}$.

 Let $p_G(t)$ be the probability that a simple random walk which begins at the root in $G$, is located at the root at time $t$. The spectral dimension of the graph $G$ is defined as $d_s$ provided that
\begin {equation} \label{specdef}
 p_G(t) \asymp t^{-d_s/2}
\end {equation}
where we write $f(t)\asymp t^{-\gamma}$ if 
\begin {equation}
 \lim_{t\rightarrow\infty}\frac{\log{(f(t))}}{\log(t)} = -\gamma.
\end {equation}
If $p_G(t)$ falls off faster than any power of $t$ then we say that $d_s = \infty$.
The definition of $d_s$ is only useful on infinite graphs since on finite graphs, the return probability is asymptotically a positive constant. It is straightforward to verify that the spectral dimension of a connected, locally finite graph is independent of the choice of a root. 
The spectral dimension of the $d$--dimensional hyper--cubic lattice $\mathbb{Z}^d$ is $d_s = d$ in which case it agrees with our usual notion of dimension. For general graphs the spectral dimension need not be an integer and furthermore it might not exist.


For an infinite random graph $(\mathcal{G},\nu)$, where $\nu$ is a probability distribution on some set of graphs $\mathcal{G}$, one can define the spectral dimension in different ways. First of all the graphs can have, $\nu$ almost surely, a spectral dimension $d_s$ defined as above. 
Secondly, we define the {\it annealed spectral dimension} as $\bar{d}_s$ provided that
\begin {equation}
\langle p_G(t)\rangle_{\nu} \asymp t^{-\bar{d}_s/2}
\end {equation}
where $\langle \cdot \rangle_{\nu}$ denotes expectation value with respect to $\nu$. These definitions need not agree and we will see an example where $\bar{d}_s$ exists and is finite, whereas $d_s$ is almost surely infinite. For a discussion of the spectral dimension of some random graph ensembles, see \cite{durhuus:2005fq,durhuus:2007,jonsson:2008}. 

The Hausdorff dimension of a graph $G$ is defined in terms of how the volume of a graph ball $B_R(G)$ centered on the root scales with large $R$. The Hausdorff dimension is defined as $d_H$ if
\begin {equation}
 |B_R(G)| \asymp R^{d_H}.
\end {equation}
Similarly the {\it annealed Hausdorff dimension} is defined as $\bar{d}_H$ provided that
\begin {equation}
\langle |B_R(G)|\rangle_{\nu} \asymp R^{\bar{d}_H}.
\end {equation}
The spectral and Hausdorff dimensions do not agree in general. 

The Hausdorff dimension of subcritical trees is almost surely infinite and the annealed Hausdorff dimension is infinite. This follows from the fact that a vertex of infinite degree is almost surely at a finite distance from the root and that its expected distance from the root is finite. It is clear that the spectral dimension is almost surely infinite since a random walk will eventually hit the vertex of infinite degree and thereafter almost surely never return to the root. However, it turns out that the annealed spectral dimension is finite and takes the same values as in the case of subcritical caterpillars \cite{jonsson:2009}. The main result of this section is the following theorem.
\begin {theorem} \label{annspecdim}
For any $\beta > 2$ the annealed spectral dimension  of the subcritical trees defined by (\ref{bw}) and (\ref{ngcondition}) is
\begin {equation}
 \bar{d}_s = 2(\beta-1)
\end {equation}
provided it exists.
\end {theorem}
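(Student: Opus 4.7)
The plan is to study the annealed return-probability generating function
\[
Q(x) \;:=\; \sum_{t\geq 0}\langle p_G(t)\rangle_\nu\, x^t
\]
and extract its singular behaviour as $x\to 1^-$. A Karamata-type Tauberian/transfer theorem will then translate this into the decay $\langle p_G(t)\rangle_\nu \asymp t^{-(\beta-1)}$, so that when $\bar d_s$ exists it must equal $2(\beta-1)$. The input is the explicit description of $\nu$ in Theorem \ref{t:conv}: a $\nu$-typical tree consists of a finite path $(r=v_0,v_1,\ldots,v_\ell=t)$ of geometrically distributed length $\ell$ (parameter $1-m$), with i.i.d.\ subcritical Galton--Watson outgrowths attached at each interior path vertex according to the joint law $\phi$, and countably many independent such GW outgrowths at the infinite-degree vertex $t$. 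Because $\sigma_t=\infty$, the walker cannot step from $t$ back towards the root, so $t$ acts as an absorbing boundary and $p_G(2n)$ is the probability of being at $r$ at time $2n$ without having previously hit $t$.

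Next I would set up the standard tree recursion for first-passage generating functions along the path. For a vertex $u\neq r$ let $G_u(x)$ be the generating function of the first-passage time from $u$ to its parent, and for $1\leq i<\ell$ set $H_i(x):=G_{v_i}(x)$ and $O_i(x):=\sum_u G_u(x)$, where the sum is over the outgrowth roots $u$ at $v_i$. Decomposing an excursion by its first step gives
\begin{equation}
H_i(x) \;=\; \frac{x}{\sigma_{v_i}-x\bigl(H_{i+1}(x)+O_i(x)\bigr)},\qquad H_\ell(x)\equiv 0,\qquad P_G(x)\;=\;\frac{1}{1-xH_1(x)},
\end{equation}
where $P_G(x):=\sum_{t\geq 0}p_G(t)x^t$. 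At $x=1$ every $G_u(1)=1$ (a.s.\ return to parent on a finite subcritical GW subtree), so $O_i(1)=k_i$ (the outgrowth count at $v_i$) and $\sigma_{v_i}=k_i+2$, which makes an easy induction give the closed form $H_i(1)=(\ell-i)/(\ell-i+1)$. Hence $P_G(1)=\ell$ and $\langle P_G(1)\rangle_\nu = 1/(1-m) < \infty$, so the annealed walk is transient and $Q(1)-Q(x)$ is a true singular correction to a finite limit.

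The technical heart of the proof is the expansion of $\langle H_1(x)\rangle_\nu$ at $x=1$, obtained by iterating the recursion together with the local expansion $G_u(x)=1-(1-x)\tau_u+\cdots$, where $\tau_u$ is the mean first-passage time to parent in the outgrowth rooted at $u$ (finite since $\beta>2$ implies finite expected outgrowth size). The essential heavy-tail input is that $w_n\sim n^{-\beta}$ forces $\phi(i,j)\sim (i+j+2)^{-\beta}$, hence $\mathbb{P}(k_i>k)\sim k^{2-\beta}$ for each path vertex. The sojourn at $v_i$ generated by excursions into its $k_i$ outgrowths therefore inherits a heavy tail $\sim n^{-(\beta-1)}$, and summing such sojourns over the geometrically many interior path vertices and averaging $(1-xH_1(x))^{-1}$ over $\nu$ is expected to yield
\begin{equation}
Q(1)-Q(x) \;\sim\; c\,(1-x)^{\beta-2}\qquad (x\to 1^-),
\end{equation}
modulo analytic terms, for some $c>0$. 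A transfer theorem then converts this non-analytic singularity into $\langle p_G(t)\rangle_\nu \asymp t^{-(\beta-1)}$, establishing the theorem.

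The main obstacle is this singular extraction: one must carry the expansion through the nonlinear recursion while simultaneously controlling the heavy-tailed distribution of the outgrowth counts $k_i$, the moment behaviour of the passage times $\tau_u$ on a subcritical GW subtree (whose second moment may diverge when $2<\beta\leq 3$), and the geometric distribution of $\ell$. The cleanest route is probably to stratify by $\ell$, truncate large $k_i$ and large outgrowth sizes, and treat the residual tail separately, very much in the spirit of the truncation and probability-theoretic arguments already developed in Section \ref{s:nongeneric}.
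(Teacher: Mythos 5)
Your setup is sound and your heuristic identifies the right mechanism: the walk decomposes along the finite path $(r,t)$ with the infinite-degree vertex acting as an absorbing boundary, the closed forms $H_i(1)=(\ell-i)/(\ell-i+1)$ and $\langle P_G(1)\rangle_\nu=1/(1-m)$ are correct, and the heavy tail $\mathbb{P}(k_i>k)\sim k^{2-\beta}$ of the path-vertex degrees is indeed what produces a singular term of order $(1-x)^{\beta-2}$ and hence the value $2(\beta-1)$. This is essentially the same decomposition the paper uses (conditioning on the distance $\ell$ and the maximal path-vertex degree $k$, and dressing walks on the linear graph $M_\ell$ with excursion factors for the outgrowths).

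However, there is a genuine gap exactly where you locate ``the main obstacle'': the singular extraction is asserted (``is expected to yield'') rather than proved, and it is the entire technical content of the theorem. Two things are missing. First, for general $\beta>2$ one must show that the first $n$ derivatives of $\langle Q_\tau(x)\rangle_\nu$ (with $n+1<\beta\le n+2$) stay finite and that the divergence appears only at the rate $(1-x)^{\beta-n-2}$ in the next order; this requires uniform control of expectations of products of the form $\prod_a\bigl((-1)^aP^{(a)}_{T}\bigr)^{\theta_a}$ over the subcritical GW outgrowths, which is the content of the paper's Lemma \ref{l:expfinite} and needs a careful induction (via Fa\`{a} di Bruno's formula and a partial order on the multi-indices) precisely because the relevant moments of the passage times are only barely finite. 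Your truncation remark points in the right direction but does not carry out this step. Second, a matching lower bound on the divergence is needed; the paper obtains it by keeping only the $\ell=2$ stratum and comparing with the explicit graphs $M_{2,k}$ via the monotonicity results of \cite{jonsson:2008}, and without such an argument you only get one of the two inequalities on $\bar d_s$. Finally, your concluding Tauberian step asks for more than the theorem claims: passing from $Q(1)-Q(x)\sim c\,(1-x)^{\beta-2}$ to $\langle p_G(t)\rangle_\nu\asymp t^{-(\beta-1)}$ requires a monotonicity or transfer hypothesis you have not verified; since the statement is only ``provided $\bar d_s$ exists,'' it is both easier and sufficient to work with two-sided bounds on derivatives of the generating function, as in (\ref{specalpha}).
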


The return probabilities which we study to prove the above theorem, are most conveniently analysed through their generating functions. For a rooted tree $T$ define
\begin {equation}
 Q_{T}(x) = \sum_{t=0}^{\infty} p_T(t) (1-x)^{t/2}.
\end {equation}
The generating function variable $x$ is defined in this way for convenience in later calculations. Note that since $T$ is a tree 
only integer exponents appear on $1-x$. Let $p^1_T(t)$ be the probability that a random walk which leaves the root at time zero returns to the root for the first time after $t$ steps. Define the generating function
\begin {equation}
 P_{T}(x) = \sum_{t=0}^\infty p^1_T(t) (1-x)^{t/2}.
\end {equation}
By decomposing a walk which returns to the root into the first return walk, the second return walk etc.~we find the relation
\begin {equation} \label{relation}
 Q_T(x) = \sum_{n=0}^{\infty}(P_T(x))^n = \frac{1}{1-P_T(x)}.
\end {equation}
Let $n$ be the smallest nonnegative integer for which $Q^{(n)}_T(x)$, the $n$--th derivative of $Q(x)$, diverges as $x \longrightarrow 0$. If
\begin {equation}
(-1)^nQ_T^{(n)}(x)\asymp x^{-\alpha} 
\end {equation}
for some $\alpha\in [0,1)$ then clearly 
\begin {equation} \label{specalpha}
d_s=2(1-\alpha+n),
\end {equation}
if $d_s$ exists. For random graphs, the same relation holds between the singular behaviour of $\langle Q^{(n)}_T \rangle_\nu$ as $x\longrightarrow 0$ and the annealed spectral dimension. 
We will prove Theorem \ref{annspecdim} by establishing separately a lower bound and an upper bound on $\bar{d}_s$.  

\subsection{A lower bound on $\bar{d}_s$}

We first present a formula for the $n$--th derivative of a composite function (see e.g.~\cite{andrews:1998}) which will be used repeatedly.

\begin {lemma} (Fa\`{a} di Bruno's formula)
 If $f$ and $g$ are $n$ times differentiable functions then
\begin {equation}\label{faadibruno}
\frac{d^n}{dx^n}f(g(x)) = \sum_{\sum_{i=1}^n i q_i = n} \frac{n!}{q_1! q_2!\cdots q_n!}f^{(q_1+\ldots+q_n)}(g(x))\prod_{j=1}^{n} \left(\frac{g^{(j)}(x)}{j!}\right)^{q_j}.
\end {equation}
\end {lemma}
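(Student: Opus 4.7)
The plan is to proceed by induction on $n$. The base case $n=1$ reduces to the chain rule: the unique tuple $(q_1)=(1)$ satisfying $\sum_i i q_i = 1$ yields $f'(g(x))\, g'(x)$ on the right-hand side, matching $\tfrac{d}{dx}f(g(x))$.

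For the inductive step, assuming the formula holds at level $n$, I would apply $d/dx$ to each summand via the product rule. This generates two classes of descendant terms: (a) differentiating the factor $f^{(Q)}(g(x))$, where $Q=q_1+\cdots+q_n$, introduces an extra $g'(x)$ and raises $Q$ by one, equivalent to incrementing $q_1$ by one in the index tuple; (b) for each $j$ with $q_j\geq 1$, differentiating one of the $q_j$ copies of $g^{(j)}(x)/j!$ effects the transition $(q_j,q_{j+1})\mapsto(q_j-1,q_{j+1}+1)$ and pulls out a combinatorial factor $q_j/(j+1)$. Grouping all contributions to a fixed tuple $(q'_1,\ldots,q'_{n+1})$ with $\sum_i i q'_i = n+1$ and simplifying the factorials reduces the identity to a Pascal-type combinatorial equality on multinomial coefficients.

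The main obstacle is the factorial bookkeeping in the inductive step, which is tedious though elementary. A cleaner alternative I would prefer for the final writeup is the combinatorial interpretation: the coefficient $n!/\bigl(\prod_j q_j!\,(j!)^{q_j}\bigr)$ equals the number of set partitions of $\{1,\ldots,n\}$ whose type has $q_j$ blocks of size $j$. One first proves, by a more transparent induction based directly on the product and chain rules, the ``unrolled'' identity
\[
\frac{d^n}{dx^n}f(g(x)) \;=\; \sum_{\Pi\,\vdash\,\{1,\ldots,n\}} f^{(|\Pi|)}(g(x))\prod_{B\in\Pi}g^{(|B|)}(x),
\]
where the sum ranges over set partitions $\Pi$ of $\{1,\ldots,n\}$, $|\Pi|$ is the number of blocks, and $|B|$ is the size of block $B$. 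Grouping partitions by their type $(q_1,q_2,\ldots)$ and invoking the elementary count of set partitions of prescribed type then yields Faà di Bruno's formula as an immediate corollary, avoiding all nested factorial manipulations.
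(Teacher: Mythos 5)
The paper does not prove this lemma at all: it is quoted as a classical result with a citation to the literature, so there is no ``paper's proof'' to match. Your proposal supplies an actual argument, and the route you say you would use for the final writeup --- first proving the set-partition form
$\frac{d^n}{dx^n}f(g(x))=\sum_{\Pi}f^{(|\Pi|)}(g(x))\prod_{B\in\Pi}g^{(|B|)}(x)$
by a clean induction (differentiation either creates the new singleton block $\{n+1\}$ or adjoins $n+1$ to an existing block, and every partition of $\{1,\ldots,n+1\}$ arises exactly once this way), then grouping partitions by type and using the count $n!/\bigl(\prod_j q_j!\,(j!)^{q_j}\bigr)$ --- is the standard textbook proof and is correct. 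One small slip in your first sketch: when the derivative hits one of the $q_j$ copies of $g^{(j)}(x)/j!$, rewriting $g^{(j+1)}(x)/j!$ as $(j+1)\,g^{(j+1)}(x)/(j+1)!$ pulls out the factor $q_j(j+1)$, not $q_j/(j+1)$; since you defer the factorial verification to an unspecified ``Pascal-type identity'' and then abandon that route anyway, this does not affect the soundness of your preferred argument, but it would need correcting if you carried the direct induction through. Either way, what your writeup buys over the paper is self-containedness; the paper simply outsources the lemma.
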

The following lemma will be needed to obtain the lower bound on $\bar{d}_s$.
\begin{lemma} \label{l:expfinite}
Let $\mu$ be a subcritical GW measure on $\Gamma'$ corresponding to the offspring probabilities (\ref{GWprob}). For any $n < \beta - 1$ and any nonnegative integers $\theta_1,\ldots,\theta_k$, $k\leq n$ such that $\theta_k \neq 0$ and  $\sum_{a=1}^{k} a\theta_a \leq n$ it holds that
\begin {equation} \label{expfinite}
 \left\langle\prod_{a=1}^k \left((-1)^{a}P^{(a)}_{T}(x)\right)^{\theta_a}\right\rangle_{\mu} < \infty
\end {equation}
for all $x\in[0,1]$.
\end{lemma}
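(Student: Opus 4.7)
My plan proceeds in three stages, using the recursive structure of simple random walk on a tree and then appealing to Theorem \ref{th1} to control the moments of the tree size under $\mu$.

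First, since $P_T(x)=\sum_{t\geq 0}p^1_T(t)(1-x)^{t/2}$ has non--negative coefficients in $(1-x)^{1/2}$, every derivative $(-1)^a P_T^{(a)}(x)$ is a non--negative, non--increasing function of $x$ on $[0,1]$. The integrand in (\ref{expfinite}) is therefore pointwise dominated by its value at $x=0$, and it suffices to prove the expectation is finite at $x=0$.

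Second, I establish the deterministic bound $(-1)^a P_T^{(a)}(0)\leq C_a |T|^a$ for every finite tree $T$ and every $a\geq 1$. For each non--root vertex $v$ with parent $u$, let $\bar q_v(x)$ denote the generating function for the round trip $u\to v\to\cdots\to u$, weighted $(1-x)^{1/2}$ per step; the standard excursion decomposition gives
\begin{equation*}
\bar q_v(x) \;=\; \frac{1-x}{\sigma_v - B_v(x)},\qquad B_v(x) \;=\; \sum_{w\text{ child of }v}\bar q_w(x),
\end{equation*}
and $P_T(x)=\bar q_s(x)$, where $s$ is the unique neighbour of $r$. Each $\bar q_v$ is itself a power series in $(1-x)^{1/2}$ with non--negative coefficients, so automatically $(-1)^j\bar q_v^{(j)}(0)\geq 0$ and $(-1)^j B_v^{(j)}(0)\geq 0$. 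A bottom--up induction also gives $\bar q_v(0)=1$ for every $v$, hence $\sigma_v - B_v(0)=1$. Writing $n_v$ for the number of vertices in the subtree rooted at $v$, I claim $(-1)^a\bar q_v^{(a)}(0)\leq C_a n_v^a$ for all $a\geq 1$, by induction on the depth of $v$. The base case is a leaf, where $\bar q_v(x)=1-x$ and the claim is immediate. For the inductive step I apply the Leibniz rule to $(1-x)\cdot(\sigma_v - B_v)^{-1}$ and Fa\`{a} di Bruno's formula (\ref{faadibruno}) to $(\sigma_v - B_v)^{-1}$ at $x=0$. Each resulting term is a product of factors $(-1)^j B_v^{(j)}(0)$ with $\sum j q_j = a$, and the inductive hypothesis together with the elementary inequality $\sum_w n_w^j \leq (\sum_w n_w)^j \leq n_v^j$ (valid for integers $n_w\geq 1$ and $j\geq 1$) yields $(-1)^j B_v^{(j)}(0)\leq C_j' n_v^j$. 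Summing the finitely many partition terms completes the induction; specialising $v=s$ gives $(-1)^a P_T^{(a)}(0)\leq C_a N^a$ with $N=|T|$.

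Combining the above with the hypothesis $\sum_a a\theta_a\leq n$ and $N\geq 1$ yields
\begin{equation*}
\prod_{a=1}^k \left((-1)^a P_T^{(a)}(x)\right)^{\theta_a} \;\leq\; \prod_{a=1}^k (C_a N^a)^{\theta_a} \;\leq\; C\,N^{n},
\end{equation*}
so the lemma reduces to $\langle N^n\rangle_\mu < \infty$. Since $\mu(\Gamma_N) = Z_N\zeta_0^N$, Theorem \ref{th1} gives $\mu(\Gamma_N) = O(N^{-\beta})$, whence $\langle N^n\rangle_\mu \leq C\sum_N N^{n-\beta}$, which converges precisely because $n < \beta - 1$. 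The main obstacle is the deterministic bound $(-1)^a\bar q_v^{(a)}(0)\leq C_a n_v^a$: organising the Fa\`{a} di Bruno expansion to extract a polynomial of total degree exactly $a$ in the subtree sizes requires tracking that the combinatorial coefficients coming from partitions of $a$ remain bounded independently of the tree while the tree--dependent derivatives of $B_v$ are polynomially controlled by $n_v$.
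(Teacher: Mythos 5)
There is a genuine gap: the deterministic bound $(-1)^a\bar q_v^{(a)}(0)\leq C_a n_v^a$, on which your whole argument rests, is false for $a\geq 2$. Take $T$ to be a path with $k$ edges below the root's neighbour, so that the recursion reads $P_{(k)}(x)=\frac{1-x}{2-P_{(k-1)}(x)}$ with $P_{(1)}(x)=1-x$. Writing $P_{(k)}(x)=1-a_kx+b_kx^2+O(x^3)$ one finds $a_k=k$ and $b_k-b_{k-1}=a_{k-1}^2+a_{k-1}$, hence $b_k=\Theta(k^3)$ and $P_{(k)}''(0)=\Theta(n_v^3)$, not $O(n_v^2)$. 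The induction does not close precisely at the step you flag as "summing the finitely many partition terms": the Fa\`a di Bruno expansion of $(\sigma_v-B_v)^{-1}$ at $x=0$ produces the cross term $2(B_v'(0))^2=\Theta(n_v^2)$ \emph{in addition to} the inherited term $B_v''(0)$, and along a path these $\Theta(n_v^2)$ increments accumulate over depth $\Theta(n_v)$ to give $\Theta(n_v^3)$; no choice of constant $C_2$ absorbs this. (Probabilistically, $(-1)^aP_T^{(a)}(0)$ controls the $a$-th moment of the return time to the root, whose worst case over trees of size $N$ is $\Theta(N^{2a-1})$, attained by paths, not $\Theta(N^a)$.)

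This is not merely a fixable constant: with the correct worst-case bound $|T|^{2a-1}$ your final step would require $\langle N^{2n-\sum_a\theta_a}\rangle_\mu<\infty$, i.e.\ roughly $n<\beta/2$, which is strictly weaker than the claimed range $n<\beta-1$ (e.g.\ $\beta=4$, $n=2$ is covered by the lemma but not by this route). The reason the lemma nevertheless holds up to $n<\beta-1$ is that the path-like trees realising the worst case are exponentially unlikely under the subcritical GW measure; one must therefore exploit the distributional self-similarity of $\mu$ rather than a pathwise bound. This is exactly what the paper does: it derives a recursion (its inequality (\ref{finiterec})) for the \emph{expectations} of the products, using that the subtrees hanging off the root's neighbour are i.i.d.\ copies of $T$, and closes it by induction on a partial order of the multi-indices $(\theta_1,\ldots,\theta_k)$, with finiteness supplied by $g^{(M)}(1)<\infty$ for $M\leq n<\beta-1$. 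Your reduction to $x=0$ via monotonicity and your computation $\mu(\Gamma_N)=\Theta(N^{-\beta})$ are both correct, but the core estimate fails and the overall strategy cannot reach the stated range of $n$.
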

\begin {proof}
The result is obvious for $x>0$ since the coefficients of $P_T(x)$ are smaller than one. First, take a fixed finite tree $T$ with root of degree one. Denote the degree of the nearest neighbour of the root by $N$ and the finite trees attached to that vertex by $T_1,\ldots,T_{N-1}$. Then from~\cite{durhuus:2007} we have the recursion
\begin{equation}
P_T(x) = \frac{1-x}{S_T(x)} 
\end{equation}
where
\begin {equation}
 S_T(x) = N - \sum_{i=1}^{N-1}P_{T_i}(x).
\end {equation}
Note that $S_T(x) \geq 1$, since $P_{T_i}(x)\leq 1$ for all $i$. By Fa\`{a} di Bruno's formula (with $f(x) = 1/x$, $g(x) = S_T(x)$) and throwing away negative powers of $S_T(x)$ we find that
\begin {eqnarray} \nonumber
  \frac{(-1)^{b}P_T^{(b)}(x)}{b!} &\leq& \sum_{\sum_{i=1}^b iq_i=b} \binom{q_1+\cdots+q_b}{q_1,\ldots,q_{b}}\prod_{j=1}^{b}\left(\frac{(-1)^{j+1}S_T^{(j)}(x)}{j!}\right)^{q_j}  \\ \nonumber && + \sum_{\sum_{i=1}^{b-1} iq_i=b-1} \binom{q_1+\cdots+q_{b-1}}{q_1,\ldots,q_{b-1}}\prod_{j=1}^{b-1}\left(\frac{(-1)^{j+1}S_T^{(j)}(x)}{j!}\right)^{q_j} \\ \label{fadibruno}
\end {eqnarray}
where $\binom{q_1+\cdots+q_b}{q_1,\ldots,q_{b}}$ is the multinomial coefficient. Looking at the product from the first sum we find that 
\begin {equation} \label{ppppp}
 \prod_{j=1}^{b}\left(\frac{(-1)^{j+1}S_T^{(j)}(x)}{j!}\right)^{q_j} = \prod_{j=1}^b \sum_{p_1+\cdots+p_{N-1}=q_j}\binom{q_j}{p_1,\ldots,p_{N-1}} \prod_{i=1}^{N-1}\left(\frac{(-1)^{j}P_{T_i}^{(j)}(x)}{j!}\right)^{p_i}.
\end {equation}
Expanding the above products and keeping track of the factors in each term which depend on the same outgrowth $T_i$, $i=1,\ldots,N-1$ we find that they are of the form
\begin{equation} \label{typicalterm}
C_i \prod_{j=1}^b  \left(\frac{(-1)^{j}P_{T_i}^{(j)}(x)}{j!}\right)^{\alpha_j}
\end{equation}
where $\sum_{j=1}^b j \alpha_j \leq b$ and $C_i$ is a number independent of $T_i$ (the terms in the latter sum in (\ref{fadibruno}) are of the same form, if $b$ is replaced by $b-1$). The equality $\sum_{j=1}^b j \alpha_j = b$ holds only when $p_i = \alpha_j = q_j$ in which case $p_a = 0$ if $a\neq i$ and $C_i=1$. The total contribution from such terms in (\ref{ppppp}) is therefore
\begin {equation}
 \sum_{i=1}^{N-1}\prod_{j=1}^b  \left(\frac{(-1)^{j}P_{T_i}^{(j)}(x)}{j!}\right)^{q_j}.
\end {equation}

Now choose numbers $\theta_1,\ldots,\theta_k$ such that $\theta_k \neq 0$ and $\sum_{a=1}^{k} a\theta_a \leq n$. Define $\Theta = \sum_{a=1}^k a \theta_a$. The following product of (\ref{fadibruno}) over $b$ has an upper bound
\begin {eqnarray*}
&& \prod_{b=1}^k \left(\frac{(-1)^{b}P_T^{(b)}(x)}{b!}\right)^{\theta_b} \leq \sum_{i=1}^{N-1}\prod_{b=1}^k\left(\frac{(-1)^jP_{T_i}^{(b)}(x)}{b!}\right)^{\theta_b} \\ && ~
 + C \sum_{M=1}^{\Theta} \sum_{\alpha(M)}\sum_{1\leq i_1<i_2<\cdots<i_{M}\leq N-1} \prod_{p=1}^{M}\prod_{b=1}^k  \left(\frac{(-1)^{b}P_{T_{i_p}}^{(b)}(x)}{b!}\right)^{\alpha_{b,i_p}}+C
\end {eqnarray*}
 where $\sum_{\alpha(M)}$ is a sum over nonnegative integers $\alpha_{b,i_p}$ which satisfy either
\begin {equation} \label{either}
\text{(i)}~ \sum_{b=1}^k b \alpha_{b,i_p} < \Theta \quad\quad \text{or} \quad \quad \text{(ii)}~ \sum_{b=1}^{k-1}b\alpha_{b,i_p} = \Theta
\end {equation}
 and $C$ is a number which only depends on $k$ and $(\theta_1,\ldots,\theta_k)$. Taking the $\mu$ expectation value of the above inequality and using the fact that the subtrees $T_i$, $i=1,\ldots,N-1$ are identically and independently distributed and distributed as $T$ itself, yields
\begin {eqnarray} \nonumber
 && \left\langle\prod_{b=1}^k \left(\frac{(-1)^{b}P_T^{(b)}(x)}{b!}\right)^{\theta_b}\right\rangle_{\mu} \leq  \\ 
 && \frac{C}{(1-m)g(1)} \sum_{M=1}^{\Theta} \sum_{\alpha(M)}\frac{g^{(M)}(1)}{M!} \prod_{p=1}^{M} \left\langle\prod_{b=1}^k  \left(\frac{(-1)^{b}P_{T_{i_p}}^{(b)}(x)}{b!}\right)^{\alpha_{b,p}}\right\rangle_{\mu}+\frac{C}{1-m}. \nonumber \\ \label{finiterec}
\end {eqnarray}
Note, that $M \leq \Theta \leq n < \beta -1$ and thus $g^{(M)}(1) < \infty$.  Therefore, for $x>0$, the right hand side of (\ref{finiterec}) is finite. To show that the left hand side is finite at $x = 0$ we proceed by induction on the sequences $(\theta_1,\theta_2,\ldots,\theta_k)$. We define a partial ordering on the set of such sequences in the following way (see also Fig.~\ref{f:trepun}).  Sequences $(\theta_1,\ldots,\theta_k)$ and $(\theta'_1,\ldots,\theta'_\ell)$ obey $(\theta'_1,\ldots,\theta'_\ell) < (\theta_1,\ldots,\theta_k)$ if and only if
\begin {equation*}
\text{(i)} ~ \sum_{i=1}^{\ell} i \theta'_i < \sum_{i=1}^{k} i\theta_i \quad \quad \text{or} \quad \quad \text{(ii)} ~ \sum_{i=1}^{\ell} i \theta'_i = \sum_{i=1}^{k} i\theta_i \quad \text{and} \quad \ell < k.
\end {equation*}
For the smallest values, $k=1$ and $\Theta = 1$, we find with the same calculations as above that
\begin {equation}
 \left\langle -P'_T(x)\right\rangle_{\mu} \leq \frac{1}{1-m}. 
\end {equation}
Next assume that (\ref{expfinite}) holds for for all sequences $(\theta'_1,\theta'_2,\ldots,\theta'_{k'})$ which are less than a given sequence $(\theta_1,\theta_2,\ldots,\theta_k)$ with $k,\Theta \leq n$. Then, by (\ref{either}), all the terms on the right hand side of (\ref{finiterec}) are finite and therefore the left hand side is finite for all $x\in[0,1]$. This shows that (\ref{expfinite}) holds for the sequence $(\theta_1,\theta_2,\ldots,\theta_k)$.   \qed
\end {proof}
\begin {figure} [!t]
\centerline{{\scalebox{0.49}{\includegraphics{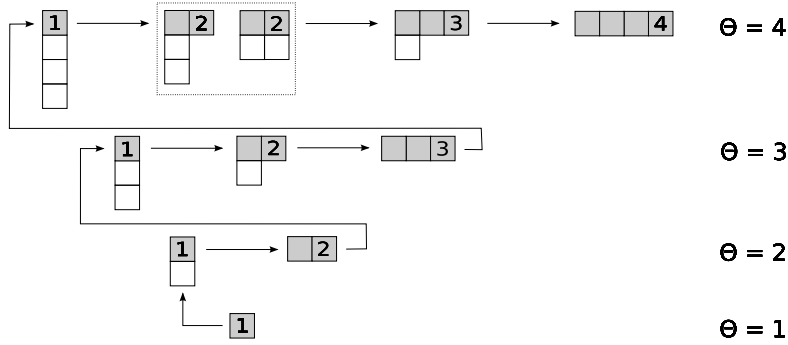}}}}
\caption{A sequence $(\theta_1,\theta_2,\ldots,\theta_k)$ is represented by a Young tableau where $\theta_i$ represents the number of rows of size $i$.  The size of a tableau is $\Theta$ and the number of elements in the top row (grey boxes) is the value of $k$. The tableaux are  first ordered by $\Theta$ and then by $k$ if possible. Tableaux with the same values of $\Theta$ and $k$ are incomparable.  } \label{f:trepun}
\end{figure}

Let $\nu$ be the measure corresponding to nongeneric subcritical trees as characterized in Theorem \ref{t:conv}. To find a lower bound on $\bar{d}_s$ with respect to $\nu$ we study an upper bound on a suitable derivative of the $\nu$-average return probability generating function. Let $M_{\ell}$ be a linear graph of length $\ell$ with the root, $r$, at one end and a vertex of infinite degree, $t$, on the other end. Let $B_{\ell,k}$ be the set of trees with graph distance $\ell$ between $r$ and $t$ and such that at least one vertex on the path $(r,t)$ has degree $k$ and all the other vertices have degree no greater than $k$ (with the exception of $t$ of course). Define 
\begin{equation}
 \langle \cdot \rangle_{\nu,\tau\in A} = \nu(A)^{-1}\sum_{\tau \in A} \nu(\tau) \left(\cdot\right)
\end{equation}
as the expectation value with respect to $\nu$ conditioned on the event $A$ and define
\begin {equation}
 \phi(k) = \sum_{i+j=k-2}\phi(i,j).
\end {equation}
We can write
\begin {equation} \label{frgf}
 \left\langle Q_\tau(x)\right\rangle_{\nu} =  \sum_{\ell=1}^\infty \psi(\ell) \sum_{k=2}^{\infty} c(k,\ell) \langle Q_\tau(x) \rangle_{\nu,\tau\in B_{\ell,k}} 
\end {equation}
where
\begin {equation}
 c(k,\ell) = \left(\sum_{i=2}^{k}\phi(i)\right)^{\ell-1} - \left(\sum_{i=2}^{k-1}\phi(i)\right)^{\ell-1}.
\end {equation}

 In a tree in $B_{\ell,k}$, denote the vertices on the path $(r,t)$ strictly between $r$ and $t$ by $s_1,s_2,\ldots,s_{\ell-1}$. Denote the outgrowths attached to $s_i$ by $T(s_i)$, where $i=1,\ldots,\ell-1$ and denote the $j$-th outgrowth from $s_i$ by $T_j(s_i)$ where $j=1,\ldots,\sigma_{s_i}-2$, see Fig.~\ref{f:blk}.  The first return probability generating function for $T(s_i)$ (viewing $s_i$ as the root) can be written in terms of the first return probability generating functions for $T_j(s_i)$ in the following way 

\begin{figure} [!t]
\centerline{\scalebox{0.67}{\includegraphics{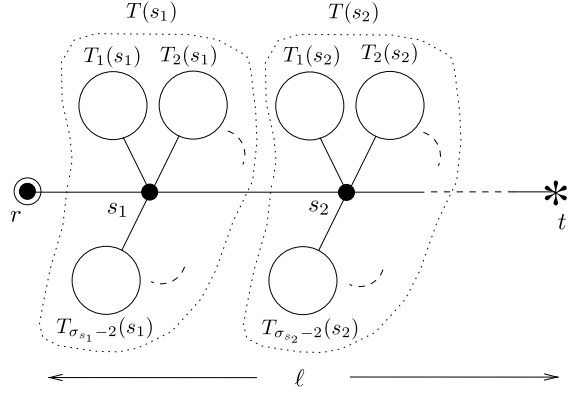}}}
\caption{A tree from $B_{\ell,k}$.} \label{f:blk}
\end{figure}

\begin {equation} \label{pgfrel}
 P_{T(s_i)}(x) = \frac{1}{\sigma_{s_i}-2} \sum_{j=1}^{\sigma_{s_i}-2}P_{T_j(s_i)}(x).
\end {equation}
Now take a $\tau\in B_{\ell,k}$. We can write
\begin {equation}
 Q_\tau(x) = \sum_{\substack{\omega:~r\rightarrow r\\ \text{on } M_\ell}} K_\tau(x,\omega) W_\omega(M_\ell) (1-x)^{|\omega|/2}
\end {equation}
where
\begin {equation}
 K_\tau(x,\omega) = \prod_{\substack{t=1 \\ \omega_t\in\{s_1,\ldots,s_{\ell-1}\}}}^{|\omega|-1} \frac{2}{2+(\sigma_{\omega_t}-2)(1-P_{T(\omega_t)}(x))}
\end {equation}
and
\begin {equation}
 W_\omega(M_\ell) = \prod_{t=0}^{|\omega|-1} (\sigma_{\omega_t}({M_\ell}))^{-1},
\end {equation}
see \cite{jonsson:2008}. Choose $n$ such that $n+1 < \beta \leq n+2$. Differentiating $n$ times we get

\begin {equation}
 \frac{(-1)^n Q^{(n)}_\tau(x)}{n!} = \sum_{n_1+n_2=n}\sum_{\substack{\omega:~r\rightarrow r\\ \text{on } M_\ell}} W_\omega(M_\ell)\frac{(-1)^{n_1}K^{(n_1)}_\tau(x,\omega)}{n_1!} \frac{(-1)^{n_2}}{n_2!}\frac{d^{n_2}}{dx^{n_2}}(1-x)^{|\omega|/2}.
\end {equation}

Let $\omega$ be a random walk and denote the maximal subsequence of $\omega$ which consists only of the vertices $s_1,\ldots,s_{\ell-1}$ by $\omega'$. Then
\begin {eqnarray*}
\frac{(-1)^b K^{(b)}_\tau(x,\omega) }{b!}=  \sum_{n_1+\cdots+n_{|\omega'|}=b} \prod_{t=1}^{|\omega'|}  \frac{(-1)^{n_{t}}}{n_{t}!}\frac{d^{n_{t}}}{dx^{n_{t}}}\left(\frac{2}{2+(\sigma_{\omega'_t}-2)(1-P_{T(\omega'_t)}(x))}\right).
\end {eqnarray*}
By Fa\`{a} di Bruno's formula we get
\begin {eqnarray} \nonumber
&& \frac{(-1)^p}{p!}\frac{d^p}{dx^p}\left(\frac{2}{2+(\sigma_{\omega'_t}-2)(1-P_{T(\omega'_t)}(x))}\right) = \\ \nonumber
&& \frac{2}{2+(\sigma_{\omega'_t}-2)(1-P_{T(\omega'_t)}(x))}  \sum_{q_1+2q_2+\cdots +pq_p = p} \binom{q_1+\cdots +q_p}{q_1,\ldots,q_p}  \\ \nonumber 
&& \times~ \Big(\underbrace{\frac{2(\sigma_{\omega'_t}-2)}{2+(\sigma_{\omega'_t}-2)(1-P_{T(\omega'_t)}(x))}}_{(\ast)}\Big)^{q_1+\cdots+q_p}   \prod_{a=1}^p \left(\frac{(-1)^{a}P^{(a)}_{T(\omega'_t)}(x)}{a!}\right)^{q_a}. \\ \label{star}
\end {eqnarray}
Now, $P_{T(\omega'_t)}(x) \leq 1-x$.  Also note that the quantitiy ($\ast$) in (\ref{star}) is increasing in $\sigma_{s_i}$ and since $\sigma_{s_i} \leq k$ for $i=1,\ldots,\ell-1$ we find that 
\begin {equation} \label{ast}
(\ast) \leq \frac{2(k-2)}{2+(k-2)x}.
\end {equation}
Observe that $\dfrac{2(k-2)}{2+(k-2)x} \leq 1$ for $k=2,~3$ and that $\dfrac{2(k-2)}{2+(k-2)x} \geq 1$ for $k \geq 4$. Finally, note that $\binom{q_1+\cdots +q_p}{q_1,\ldots,q_p} \leq p^p$. Combining these results and using (\ref{pgfrel}) we get the upper bound

\begin {eqnarray} \nonumber
&&\frac{(-1)^b K^{(b)}_\tau(x,\omega) }{b!} \leq  b^b\left(\frac{2(k-2)}{2+(k-2)x}\right)^{(1-\delta_{k,2})(1-\delta_{k,3})b} \\ && \nonumber \sum_{n_1+\cdots+n_{|\omega'|}=b}  \prod_{t=1}^{|\omega'|}  \sum_{q_1+2q_2+\cdots +{n_t}q_{n_t} = n_t} \prod_{a=1}^{n_t} \frac{1}{(\sigma_{\omega'_t}-2)^{q_a}} \\ \nonumber && \times ~ \sum_{p_1+\cdots+p_{\sigma_{\omega'_t}-2} = q_a} \binom{q_a}{p_1,\ldots,p_{\sigma_{\omega'_t}-2}}\prod_{j=1}^{\sigma_{\omega'_t}-2}\left(\frac{(-1)^{a}P^{(a)}_{T_j(\omega'_t)}(x)}{a!}\right)^{p_j}. \\ \label{dK}
\end {eqnarray}
Expanding the above products and keeping track of the factors in each term which depend on the same outgrowth $T_j(s_i)$, $i=1,\ldots,\ell-1$, $j=1,\ldots,\sigma_{s_j}-2$, we find that they are of the form
\begin{equation} \label {terms}
C_{ij} \prod_{a=1}^n \left((-1)^{a}P^{(a)}_{T_j(s_i)}(x)\right)^{\theta_a}
\end{equation}
where $\sum_{a=1}^{n}a\theta_a \leq n$ and $C_{ij}$ is independent of $T_j(s_i)$. By Lemma \ref{l:expfinite}, the expected value of (\ref{terms}) over the outgrowths $T_j(s_i)$ is finite, and since the total number of terms on the right hand side of (\ref{dK}) is a polynomial in $|\omega'|$ we find that
\begin {equation}
\left\langle(-1)^b K^{(b)}_\tau(x,\omega)\right\rangle_{\nu,\tau \in B_{\ell,k}} \leq H(|\omega|) \left(\frac{2(k-2)}{2+(k-2)x}\right)^{(1-\delta_{k,2})(1-\delta_{k,3})b}
\end {equation}
where $H(|\omega|)$ is a polynomial with positive coefficients. From this inequality and the fact that $(-1)^iQ^{(i)}_{M_\ell}(0)$ is a polynomial in $\ell$ of degree $2i+1$, it follows that
\begin {equation}
\langle (-1)^n Q^{(n)}_\tau(x) \rangle_{\nu, \tau \in B_{\ell,k}} \leq \sum_{i=0}^{n}S_i(\ell)\left(\frac{2(k-2)}{2+(k-2)x}\right)^{(1-\delta_{k,2})(1-\delta_{k,3})i}
\end {equation}
where $S_i(\ell)$, $i=0,\ldots,n$ are polynomials with positive coefficients. 
Noting that $c(k,\ell) \leq \phi(k)(\ell-1)$ we get from (\ref{frgf}) that
\begin {equation}
 \langle (-1)^n Q^{(n)}_\tau(x) \rangle_{\nu} \leq  \sum_{i=0}^{n} \sum_{\ell=1}^\infty S_i(\ell)\psi(\ell) (\ell-1) \sum_{k=2}^{\infty} \phi(k) \left(\frac{2(k-2)}{2+(k-2)x}\right)^{(1-\delta_{k,2})(1-\delta_{k,3})i}.
\end {equation}
The sum over $\ell$ is convergent since $\psi$ falls off exponentially and the sum over $k$ can be estimated by an integral yielding
\begin {equation} \label{intest}
 \langle (-1)^n Q^{(n)}_\tau(x) \rangle_{\nu} \leq  C x^{\beta-n-2} \int_{x}^\infty \frac{y^{n+1-\beta}}{(2+y)^{n+1}}dy
\end {equation}
where $C$ is a constant. If $\beta < n+2$ the last integral is convergent when $x \rightarrow 0$ but if $\beta = n+2$ it diverges logarithmically. In both cases we get the lower bound $\bar{d}_s \geq 2(1-\beta)$, provided $\bar{d}_s$ exists.
\qed
\subsection{An upper bound on $\bar{d}_s$}
To find an upper bound on $\bar{d}_s$ we study a lower bound on a suitable derivative of the average return probability generating function. The aim is to cut off the branches of the finite outgrowths from the path $(r,t)$ so that only single leaves are left. We then use monotonicity results from~\cite{jonsson:2008} to compare return probability generating functions.    As before we choose $n$ such that $n+1 < \beta \leq n+2$. We begin by differentiating (\ref{frgf}) $n$ times and throwing away every term in the sum over $\ell$ except the $\ell=2$ term
\begin {equation} \label{frgfII}
 \left\langle (-1)^{n}Q^{(n)}_\tau(x)\right\rangle_{\nu} \geq  (1-m)m \sum_{k=2}^{\infty} \phi(k)\left\langle (-1)^{n}Q^{(n)}_\tau(x)\right\rangle_{\nu,\tau\in B_{2,k}}. 
\end {equation}
Let $M_{2,k}$ be the graph constructed by attaching $k-2$ leaves to the vertex $s_1$ in $M_2$ defined in the previous section.  Take a tree $\tau \in B_{2,k}$. Denote the nearest neighbours of $s_1$, excluding $r$ and $t$, by $u_1,\ldots,u_{k-2}$. Denote the finite tree attached to $u_i$ by $U(u_i)$, $i=1,\ldots,k-2$, and view $u_i$ as its root, see Fig.~\ref{f:m2k}.
\begin{figure} [!b]
\centerline{\scalebox{0.67}{\includegraphics{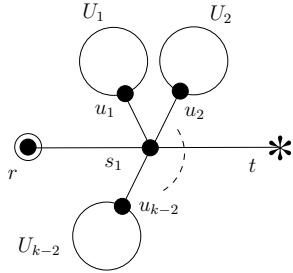}}}
\caption{A graph $\tau \in B_{2,k}$.} \label{f:m2k}
\end{figure}
 We can write
\begin {equation}
Q_\tau(x) = \sum_{\substack{\omega:~r\rightarrow r\\ \text{on } M_{2,k}}} F_\tau(x,\omega) W_\omega(M_{2,k})( (1-x)^{|\omega|/2}
\end {equation}
where
\begin {equation}
 F_\tau(x,\omega) = \prod_{\substack{t=1 \\ \omega_t\in\{u_1,\ldots,u_{k-2}\}}}^{|\omega|-1} \frac{1}{1+(\sigma_{\omega_t}-1)(1-P_{U(\omega_t)}(x))}.
\end {equation}
Define
\begin {equation}
H(x) =  \sum_{\substack{\omega:~r\rightarrow r\\ \text{on } M_{2,k}}}\ \left\langle F_\tau(x,\omega)\right\rangle_{\nu,\tau\in B_{2,k}} W_\omega(M_\ell) \frac{d^{n-1}}{dx^{n}}(1-x)^{|\omega|/2}.
\end {equation}
Differentiating once we easily find that
\begin {equation}
(-1)^nH'(x) \leq \left\langle (-1)^{n}Q^{(n)}_\tau(x)\right\rangle_{\nu,\tau\in B_{2,k}}                                      
\end {equation}
and using the methods of~\cite[Section 4]{jonsson:2008} we find that there exists a sequence $\xi_i$ converging to zero as $i \to \infty$ on which
\begin {equation}
 (-1)^nQ^{(n)}_{M_{2,k}}(\xi_i) \leq (-1)^nH'(\xi_i).
\end {equation}
Using the relation (\ref{relation}) one can show that $(-1)^nQ_\tau(x) \geq (-1)^nP_\tau(x)$ for any $\tau$. Thus, we finally have
\begin {equation} \label{festimate}
 \left\langle (-1)^{n}Q^{(n)}_\tau(\xi_i)\right\rangle_{\nu} \geq  (1-m)m \sum_{k=2}^{\infty} \phi(k) (-1)^nP^{(n)}_{M_{2,k}}(\xi_i)
\end {equation}
on a sequence $\xi_i$ converging to zero. In \cite{jonsson:2009} it is shown that
\begin {equation}
 P^{(n)}_{M_{2,k}}(x) = (-1)^n \frac{n!(k-1)^{n-1}k}{(2+(k-2)x)^{n+1}}
\end {equation}
and therefore the sum over $k$ in (\ref{festimate}) can be estimated from the below  by the same integral as in (\ref{intest}) up to a multiplicative constant. This proves that $\bar{d}_s \leq 2(\beta-1)$ provided $\bar{d}_s$ exists.
 \qed
\section {Conclusions}
We have studied an equilibrium statistical mechanical model of trees and shown that it has two phases, an elongated phase and a condensed phase. We have proven convergence of the Gibbs measures in both phases and on the critical line separating them. The main result is a rigorous proof of the emergence of a single vertex of infinite degree in the condensed phase. The phenomenon of condensation  appears in more general models of graphs \cite{agishtein:1992,ambjorn:1992} and it would be interesting to prove analogous results in those cases. 

In the generic phase the annealed Hausdorff dimension is $\bar{d}_H = 2$ and the annealed spectral dimension is $\bar{d_s} = 4/3$, see \cite{durhuus:2007}. The proof of this result relies only on the fact that the infinite volume measure is concentrated on the set of trees with exactly one spine having finite critical Galton--Watson outgrowths and that $g''(1) < \infty$. Therefore, it follows from Theorem \ref{t:conv} that $\bar{d}_H = 2$ and $\bar{d}_s = 4/3$ on the critical line when $g''(1) < \infty$. 


It remains an open problem to calculate the dimension of trees on the critical line when $g''(1) = \infty$. It is easy to see that the annealed Hausdorff dimension is infinite in this case since the expected value of the degree of any vertex on the spine is infinite. However, we expect from the analogous case of caterpillars \cite{jonsson:2009} and on the basis of scaling arguments \cite{burda:2001dx,correia:1997gf} that
\begin {equation} \label{critdim}
 d_H = \frac{\beta-1}{\beta - 2} \quad \quad \quad \text{and} \quad \quad \quad d_s = \frac{2(\beta-1)}{2\beta-3}
\end {equation}
holds almost surely, where $2 < \beta \leq 3$. Note that by Theorem \ref{t:conv}, the infinite volume measure is still concentrated on the set of trees with exactly one spine having critical Galton--Watson outgrowths. Therefore, a possible way to prove (\ref{critdim}) is to follow the arguments in \cite{durhuus:2007}, but taking into account the different behaviour of critical Galton--Watson processes having $g''(1) = \infty$. Some results on such Galton--Watson processes can be found in \cite{Slack:1968}. 

In the condensed phase the Hausdorff and spectral dimension are almost surely infinite due to the infinite degree vertex. The same applies to the annealed Hausdorff dimension. However, the annealed spectral dimension takes the values $\bar{d}_s = 2(\beta-1)$ where $\beta > 2$. This is different from the value $\bar{d}_s = 2$ which was obtained in \cite{correia:1997gf} using scaling arguments. The reason is that the scaling ansatz used in \cite{correia:1997gf} is apparently not valid when a vertex of infinite degree appears.
\medskip \\ \noindent
{ \bf Acknowledgement.} The work of SÖS was supported by the Eimskip fund at the University of Iceland. This work was partly supported by Marie Curie grant MRTN-CT-2004-005616 and the University of Iceland Research Fund. We are indebted to B.~Durhuus, G. Miermont and W. Westra for helpful discussions. 

\bibliographystyle{siam}
\bibliography{n1.bbl}

%
\end{document}